\documentclass[journal,twoside,web]{ieeecolor}
\usepackage{generic}
\usepackage{cite}
\usepackage{amsmath,amssymb,amsfonts} 
\usepackage{graphicx}
\usepackage{algorithm,algorithmic}
\usepackage{hyperref}
\hypersetup{hidelinks=true}
\usepackage{textcomp}

\usepackage{subfigure}
\usepackage{tikz}  
\usepackage{enumerate} 
\usepackage{amstext}  
\usepackage{mathrsfs} 
\usepackage[abs]{overpic} 

\usetikzlibrary{calc} 
\usetikzlibrary{arrows.meta}
\usetikzlibrary{decorations.pathreplacing}

\newtheorem{theorem}{Theorem} 
\newtheorem{lemma}{Lemma} 
\newtheorem{definition}{Definition} 
\newtheorem{remark}{Remark} 
\newtheorem{corollary}{Corollary} 
 
\newtheorem{proposition}{Proposition} 
\newtheorem{example}{Example}

\def\BibTeX{{\rm B\kern-.05em{\sc i\kern-.025em b}\kern-.08em
    T\kern-.1667em\lower.7ex\hbox{E}\kern-.125emX}}
\newcommand{\proofof}[1]{\noindent\hspace{2em}{\itshape #1:}}  

\begin{document}
\title{ 
The $\theta$-Symmetric SRG with Applications to Stability of Cactus Dynamic Networks 
} 

\author{Xiaokan Yang,  
Wei Chen, \IEEEmembership{Member, IEEE}
and Li Qiu, \IEEEmembership{Fellow, IEEE}
\thanks{This work was supported in part by the National Natural Science Foundation of China under grants 62573006 and 72131001, by the Natural Science Foundation of Guangdong Province under grants 2024A1515011630 and by the Research Grants Council of Hong Kong under the General Research Fund No. 16206324.}
\thanks{Xiaokan Yang is with the School of Advanced Manufacturing and Robotics,  Peking University, Beijing 100871, China  (e-mail: yxkan21@stu.pku.edu.cn).} 
\thanks{Wei Chen is with the School of Advanced Manufacturing and Robotics and the State Key Laboratory for Turbulence and Complex
Systems, Peking University, Beijing 100871, China  (e-mail: w.chen@pku.edu.cn).}
\thanks{Li Qiu is with the School of Science and Engineering, The Chinese University of Hong Kong, Shenzhen, Guangdong, China  (e-mail: qiuli@cuhk.edu.cn).}
}

\maketitle

\begin{abstract} 
In this paper, we systematically study a variant of the scaled relative graph (SRG), referred to as the $\theta$-symmetric SRG, and apply it to the stability analysis of cactus networks. Compared with the previous SRG definition, the $\theta$-symmetric SRG enables the characterization of phase lead and lag behaviors, and serves as a more natural multivariable extension of the classical  Nyquist plot. We first analyze the gain and phase aspects of $\theta$-symmetric SRG separately and build a connection between $\theta$-segmental phase and a norm minimization problem. This connection makes it possible to compute $\theta$-segmental phase via semidefinite programming. We further derive the submultiplicative and subadditive properties of $\theta$-symmetric SRG. These algebraic properties are crucial to determine the nonsingularity of product-type and sum-type return difference matrices, which topologically correspond to the cyclic and parallel-feedback extreme cases of cactus networks.  
By taking the cyclic interconnection as the fundamental starting point, we establish necessary and sufficient conditions for its robust stability. Integrating this with the parallel case, we synthesize a unified stability framework for general multi-loop cactus networks.  
The $\theta$-symmetric SRG framework is less conservative and provides a more intuitive geometric interpretation of system behaviors compared with some existing approaches. Several examples are included to demonstrate the effectiveness of the proposed methods. 
\end{abstract}

\begin{IEEEkeywords}
$\theta$-symmetric scaled relative graph, Stability analysis, Cactus network, Cyclic interconnection, Nyquist plot 
\end{IEEEkeywords}

\section{Introduction}      \label{section:introduction} 
\IEEEPARstart{G}{raphical} analysis is a fundamental method widely used in classical control theory and applications. In particular, the Nyquist plot of single-input single-output (SISO) linear time-invariant (LTI) systems, integrating both gain and phase information, provides crucial insights into the stability, robustness and performance of closed-loop systems \cite{Qiu2009IntroductionFeedbackSystems,Zhou1996Robust}. 

For multi-input multi-output (MIMO) LTI systems, the eigenloci (trajectories of eigenvalues of the transfer matrix along the Nyquist contour) play a fundamental role in the stability analysis. The generalized Nyquist criterion provides a necessary and sufficient condition for closed-loop stability in terms of eigenloci \cite{MacFarlane1977GeneralizedNyquist,Desoer1980Generalized}. However, in a feedback system composed of multiple components, the Nyquist plot of the SISO loop transfer function can be built directly from those of its components, whereas the eigenloci of the MIMO return ratio bear no explicit connection to the eigenloci of each component. In other words, the eigenloci of the loop transfer matrix cannot be inferred from the behavior of individual components. Moreover, for large-dimensional transfer matrices, plotting the eigenloci becomes numerically challenging.

A classical method to overcome the aforementioned issues is to employ graphical over-approximation of the eigenloci, enclosing all eigenloci as tightly as possible while maintaining desirable analytical properties. Specifically, the well-known gain-based $\mathcal{H}_\infty$ theory offers a graphical method by bounding the gain  
\cite{Zhou1996Robust}.   
In parallel with the gain, considerable effort has been devoted to developing the phase counterpart recently. 
Some notable developments include  
sectorial phases \cite{Chen2024Phase,Tits1999Robustness}, singular angles \cite{Chen2025SingularAngle,Wielandt1967Topics,Gustafson1968Angle,Krein1969AngularLocalization,Gustafson1994Antieigenvalues} and segmental phases \cite{Chen2026Cyclic}. 
Additionally, the theories of positive real systems \cite{Anderson1973Network} and negative imaginary systems \cite{Petersen2010Feedback_control_NI} are closely related, as both incorporate qualitative phase information of the systems. 

Combining gain and phase together offers a more delicate framework than treating them separately. When considering the interconnection of two subsystems, some mixed gain-phase graph separation methods work well for stability analysis \cite{Li2008Eigenvalues,Liang2024Feedback,Zhang2025DW_Shell}. 
However, when applied to multi-loop networks, these graph separation methods are no longer applicable. A typical multi-loop architecture is the cactus network, which means that each pair of distinct simple loops has at most one common vertex. Two extreme cases of cactus networks are the single-loop cyclic interconnection and the parallel interconnection of multiple loops. The cactus networks arise in various application domains such as biochemical networks \cite{Tyson1978Dynamics,Hori2011CyclicGene}, large-scale systems  \cite{Arcak2006DiagonalStability,Sontag2006SecantCondition,Pates2023Generalisation,Liu2011LyapunovISS_Cyclic_small_gain} and power systems \cite{Monshizadeh2019Secant}. There is therefore a clear need for a mixed gain-phase geometric tool that can effectively handle these network architectures. This is a motivation of the $\theta$-symmetric scaled relative graph (SRG) framework in this paper, which is a variant of the SRG.

In particular, the SRG, originally proposed in the optimization literature \cite{Hannah2016ScaledRelativeGraph,Ryu2022SRG} and subsequently brought into control community by \cite{Pates2021Scaled,Chaffey2023Graphical}, has attracted considerable attention and inspired many follow-up works \cite{vandenEijnden2025Phase,Chen2025Graphical,Krebbekx2025Graphical,deGroot2025DissipativitySRG,Baron2025MixedGainPhase}.  
However, the current definition of SRG encodes only the magnitude information of the phase but fails to retain its sign information. Consequently, the SRG of a complex scalar matrix appears as a conjugate pair. This limitation of the SRG definition stems from a more fundamental problem on how to appropriately define the angle between two complex vectors. To overcome this, the $\theta$-symmetric SRG introduced in this paper builds on a more general definition of complex vector angles, effectively capturing both the magnitude and sign information of the phase. 

In this paper, we systematically study the notion of $\theta$-symmetric SRG, including its implications for gain-phase characterization, and develop a unified framework for the stability analysis of cactus networks. The main contributions of this paper are summarized as follows. 
\begin{enumerate}[1)]
    \item We develop a systematic formulation of the $\theta$-symmetric SRG and investigate its properties. In particular, we establish its submultiplicative and subadditive properties, which form the foundation for the stability analysis. 
    \item We reveal a connection between the $\theta$-segmental phase and a norm minimization problem by analyzing the phase aspect of $\theta$-symmetric SRG. This connection immediately leads to a semidefinite programming (SDP) approach for computing the $\theta$-segmental phase, thereby resolving an open problem noted in \cite{Chen2025SingularAngle}.  
    \item We derive nonsingularity criteria for both product-type and sum-type return difference matrices via the $\theta$-symmetric SRG. These criteria enjoy clear graph interpretations and subsume several existing results as special cases. 
    \item We establish necessary and sufficient conditions for the robust stability of cyclic interconnections, which serve as the fundamental building blocks of cactus networks. 
    The sufficiency follows from a frequency-wise application of the matrix results, whereas the necessity relies on a constructive interpolation method. 
    \item We synthesize a unified stability framework for cactus dynamic networks via the $\theta$-symmetric SRG. A hierarchical iterative procedure is designed to determine stability in a layer-by-layer manner. 
\end{enumerate}

The rest of the paper is organized as follows. Section \ref{section:preliminaries} provides a brief review of the background and basic concepts of the SRG. In Section \ref{section:matrix_theta_SRG}, we introduce the notion of $\theta$-symmetric SRG along with its properties.  
Section \ref{section:segmental_phase_computation} focuses on the phase of $\theta$-symmetric SRG, establishing a connection between $\theta$-segmental phase and a norm minimization problem. 
Section \ref{section:subadd_submultiply} studies the submultiplicative and subadditive properties of $\theta$-symmetric SRG and derives nonsingularity criteria for return difference matrices. Section \ref{section:robust_nonsingularity_matrices} develops necessary and sufficient conditions for nonsingularity under mixed gain-phase  uncertainties. Section \ref{section:single_loop_stability_systems} defines the frequency-wise $\theta$-symmetric SRG of a system and establishes necessary and sufficient conditions for robust stability of cyclic interconnected systems. Section \ref{section:cactus_graph} synthesizes a unified stability framework for general cactus dynamic networks. The paper is concluded in Section \ref{section:conclusion}.  
Preliminary results of Section \ref{section:matrix_theta_SRG}, \ref{section:robust_nonsingularity_matrices} and \ref{section:single_loop_stability_systems} appeared in the conference papers \cite{Yang2026ThetaSRG,Yang2026ThetaSRGFurther}. 

The notation used in this paper is mostly standard. Let $\mathbb{R}$ and $\mathbb{C}$ be the sets of real and complex scalars. Let $\mathbb{R}^{+}$ and $\mathbb{R}^{-}$ be the sets of nonnegative and nonpositive real numbers, respectively. The transpose, conjugate and conjugate transpose are denoted by $(\cdot)^T, \overline{(\cdot)}$ and $(\cdot)^H$, respectively. The inner product between two vectors $x,y\in\mathbb{C}^{n}$ is denoted by $\langle x, y\rangle = x^Hy$.  
Let $\|\cdot\|$ denote the Euclidean norm of a vector and the spectral norm of a matrix. 
The spectrum of a matrix $C \in \mathbb{C}^{n \times n}$ is denoted by $\Lambda(C)$. Denote the real part and imaginary part of a matrix $C$ as $\mathrm{Re} \,C = \frac{1}{2}(C+C^H)$ and $\mathrm{Im}\, C = \frac{1}{2j}(C - C^H)$, respectively. The largest and smallest singular values of $C$ are denoted by $\overline{\sigma}(C)$ and $\underline{\sigma}(C)$, respectively. Let $\mathcal{RH}_\infty^{m\times m}$ denote the set of $m \times m$ real rational proper stable transfer matrices.

\section{Preliminaries}             \label{section:preliminaries}

 The notion of the SRG was first defined for operators on real Hilbert spaces in \cite{Hannah2016ScaledRelativeGraph}, and  served as a new geometric tool to analyze contractive and non-expansive fixed-point iterations. Specializing to the finite-dimensional matrix case, the SRG of a real matrix $R \in \mathbb{R}^{n\times n}$ is defined as \cite{Huang2019SRG_Normal} 
\begin{equation*}
    \mathcal{G}(R) = \left\{  \frac{\|Rx\|}{\| x \|} \exp \left\{ \pm j \arccos \frac{\langle x,y \rangle}{\| x \| \|y\|} \right\}: 0\neq x \in \mathbb{R}^{n} \right\}, 
\end{equation*}
which is a subset of $\mathbb{C}$ and symmetric with respect to the real axis. The SRG defined in this way for real matrices is reasonable and natural, given that the spectrum of a real matrix is symmetric with respect to the real axis.  

When the concept of SRG was introduced into the control theory in \cite{Pates2021Scaled,Chaffey2023Graphical}, its definition was extended from operators on real Hilbert spaces to those on complex Hilbert spaces. In these works, the SRG is defined in terms of the angle between complex vectors. 
For vectors $x,y\in\mathbb{C}^{n}$, the angle $\angle (x,y) \in [0,\pi]$ between $x$ and $y$ is defined by \cite{Wielandt1967Topics,Gustafson1997NumericalRange} 
\begin{equation}                        \label{eq:vector_angle}
    \angle(x,y) = \arccos \frac{\mathrm{Re}\,\langle x,y \rangle}{\| x \| \|y\|}, \quad \text{ if } \  x \neq 0, y \neq 0 ,    
\end{equation} and $\angle (x,y) = 0$ if either $x=0$ or $y=0$. Then the SRG of a complex matrix $C \in \mathbb{C}^{n\times n}$ is defined as   
\begin{equation}                    \label{eq:SRG_matrix} 
    \mathrm{SRG}(C) = \left\{  \frac{\|Cx \|}{\| x\|} \exp\{ {\pm j\angle(x,Cx)} \}: 0 \neq x \in \mathbb{C}^n \right\} ,  
\end{equation}
which is also symmetric with respect to the real axis. For consistency with our terminology, we refer to \eqref{eq:SRG_matrix} as the ``$0$-symmetric SRG''. The $0$-symmetric SRG inherently makes $\mathrm{SRG}(C)$ coincide with $\mathrm{SRG}(C^H)$ and leads to the SRG of a complex scalar matrix appearing as a conjugate pair. An illustrated example is shown below. 

\begin{example}                 \label{exmp:simple_SRG_example}
    Consider the following cases: 
    \begin{enumerate}[(i)]
        \item Let $ C_1 \!  = \! j,   C_2\!  = \! -j,   C_3 \! = \!  
    \begin{bmatrix}    j & 0 \\ 0  & -j   \end{bmatrix} . $ 
    Then 
    \begin{equation*}
        \mathrm{SRG}(C_1)=\mathrm{SRG}(C_2)=\mathrm{SRG}(C_3)=\{ j, -j \} , 
    \end{equation*} 
    which is a conjugate pair. 
    \item Let $C = \begin{bmatrix}
        j & 0 \\ 0 & 1 
    \end{bmatrix}.$ Then 
    \begin{equation*}
        \mathrm{SRG}(C) = \mathrm{SRG}(C^H) = \left\{ e^{j\phi} \!\in\! \mathbb{C}: \phi \!\in\! \left[-\frac{\pi}{2}, \frac{\pi}{2}\right] \right\}. 
    \end{equation*} 
    \end{enumerate} 
\end{example}

A crucial observation is that the $0$-symmetric SRG does not commute with the complex rotation in general, i.e., 
\begin{equation}                    \label{eq:SRG_rotation_not_commute}
    e^{-j\theta} \mathrm{SRG}(C) \neq \mathrm{SRG}(e^{-j\theta}C) . 
\end{equation} 
This is contrast to the numerical range \cite{Gustafson1997NumericalRange}, defined as 
\begin{equation}        \label{eq:numerical_range} 
    W(C) = \{ x^HCx  : x \in \mathbb{C}^n, \|x\|=1 \} . 
\end{equation} 
which satisfies $e^{-j\theta}W(C) = W(e^{-j\theta} C)$. 

For a matrix set $\mathcal{C} \subset \mathbb{C}^{n\times n}$, the $0$-symmetric SRG of $\mathcal{C}$ is defined by  $\mathrm{SRG}(\mathcal{C}) = \bigcup_{C \in \mathcal{C}} \mathrm{SRG}(C) $. Note that a matrix set can be a single matrix.

\begin{figure}[htb] 
\vspace{-10pt}
\centering 
\subfigure[]{
\begin{minipage}[t]{0.319\linewidth}
\centering 
\begin{tikzpicture} 
\draw[-latex, thick](0,2.1)--(2.5,2.1); 
\draw[-latex, thick](0.5,0.7)--(0.5,3.7); 
\node[below] at (2.3,2.05){$\mathrm{Re}$}; 
\node[left] at(0.45,3.6){$\mathrm{Im}$}; 
\node[left] at(0.5,1.9){$0$}; 
\coordinate (Z) at (1.2, 2.9); 
\coordinate (Zconj) at (1.2, 1.3);   
\draw[very thick] (Z) -- (Zconj); 
\filldraw (Z) circle (1.5pt) node[above right] {$z$};
\filldraw (Zconj) circle (1.5pt) node[below right] {$\overline{z}$};
\end{tikzpicture}  
\label{fig:chord}
\end{minipage}%
}%
\hfill%
\subfigure[]{
\begin{minipage}[t]{0.304\linewidth}
\centering 
\begin{tikzpicture} 
\def\xx{0.8} 
\def\yy{2.1} 
\draw[-latex, thick](0.2,\yy)--(2.6,\yy); 
\draw[-latex, thick](\xx,0.7)--(\xx,3.7); 
\node[below] at (2.45,2.05){$\mathrm{Re}$}; 
\node[left] at(\xx-0.05,3.6){$\mathrm{Im}$}; 
\node[left] at(\xx,\yy-0.2){$0$};  
\def\x{1.4} 
\def\y{2.9}
\coordinate (Z) at (\x, \y); 
\coordinate (Zconj) at (\x, 2*\yy-\y);   
\pgfmathsetmacro{\R}{sqrt((\x-\xx)*(\x-\xx) + (\y-\yy)*(\y-\yy))}
\pgfmathsetmacro{\Angle}{atan2(\y-\yy, \x-\xx)} 
\draw[very thick] (Z) arc (\Angle:-\Angle:\R);  
\draw[dashed] (\xx,\yy) -- (Z);
\draw[dashed] (\xx,\yy) -- (Zconj);
\filldraw (Z) circle (1.5pt) node[above right] {$z$};
\filldraw (Zconj) circle (1.5pt) node[below right] {$\overline{z}$}; 
\end{tikzpicture} 
\label{fig:arc_plus}
\end{minipage}%
}%
\hfill%
\subfigure[]{
\begin{minipage}[t]{0.292\linewidth}
\centering 
\begin{tikzpicture}
\def\xx{1.5} 
\def\yy{2.1} 
\draw[-latex, thick](0.2,\yy)--(2.6,\yy); 
\draw[-latex, thick](\xx,0.7)--(\xx,3.7); 
\node[below] at (2.45,2.05){$\mathrm{Re}$}; 
\node[left] at(\xx-0.05,3.6){$\mathrm{Im}$}; 
\node[left] at(\xx,\yy-0.2){$0$};  
\def\x{2.1} 
\def\y{2.9}
\coordinate (Z) at (\x, \y); 
\coordinate (Zconj) at (\x, 2*\yy-\y);   
\pgfmathsetmacro{\R}{sqrt((\x-\xx)*(\x-\xx) + (\y-\yy)*(\y-\yy))}
\pgfmathsetmacro{\Angle}{atan2(\y-\yy, \x-\xx)} 
\draw[very thick] (Z) arc (\Angle:360-\Angle:\R);  
\draw[dashed] (\xx,\yy) -- (Z);
\draw[dashed] (\xx,\yy) -- (Zconj);
\filldraw (Z) circle (1.5pt) node[above right] {$z$};
\filldraw (Zconj) circle (1.5pt) node[below right] {$\overline{z}$}; 
\end{tikzpicture} 
\label{fig:arc_minus}
\end{minipage} 
} 
\vspace{-0.5em}  
\caption{Illustrations of (a)  {\unboldmath{$\mathrm{Chord}(z)$}}, (b) {\unboldmath{$\mathrm{Arc}^+(z)$}} and (c) {\unboldmath{$\mathrm{Arc}^-(z)$}}. }    
\label{fig:different_phase_cases} 
\vspace{-0.5em} 
\end{figure}

Let $\mathrm{Chord}(z)$ denote the line segment connecting point $z \in \mathbb{C}$ and its complex conjugate $\overline{z}$, as shown in Fig.~\ref{fig:chord}. A matrix set $\mathcal{C} \subset \mathbb{C}^{n\times n}$ is said to satisfy the chord property if $\mathrm{Chord}(z) \subset \mathrm{SRG}(\mathcal{C})$ for all $z \in \mathrm{SRG}(\mathcal{C})$. 
Similarly, let $\mathrm{Arc}^+(z)$ and $\mathrm{Arc}^-(z)$ denote the circular arcs centered at the origin connecting $z$ and $\overline{z}$ that cross the positive and negative real semi-axes, respectively, as shown in Fig.~\ref{fig:arc_plus} and Fig.~\ref{fig:arc_minus}. Then $\mathcal{C}$ is said to satisfy the arc property if either $\mathrm{Arc}^+(z) \subset \mathrm{SRG}(\mathcal{C}) $ for all $z \in \mathrm{SRG} (\mathcal{C})$, or $\mathrm{Arc}^-(z) \subset \mathrm{SRG}(\mathcal{C}) $ for all $z \in \mathrm{SRG} (\mathcal{C})$.

\begin{lemma}             
[\!\! \cite{Ryu2022SRG,Chaffey2023Graphical}]  \label{lem:SRG_chord_arc_property} 
    Let matrix sets $\mathcal{A},\mathcal{B} \subset \mathbb{C}^{n\times n}$ be given, then 
    \begin{enumerate}[(i)]
        \item if either $\mathcal{A}$ or $\mathcal{B}$ satisfies the chord property, then 
        \begin{equation*}
            \mathrm{SRG}(\mathcal{A} + \mathcal{B}) \subset \mathrm{SRG}(\mathcal{A}) + \mathrm{SRG}(\mathcal{B}) ; 
        \end{equation*} 
        \item if either $\mathcal{A}$ or $\mathcal{B}$ satisfies the arc property, then 
        \begin{equation*}
            \mathrm{SRG}(\mathcal{A} \mathcal{B}) \subset \mathrm{SRG}(\mathcal{A}) \mathrm{SRG}(\mathcal{B}). 
        \end{equation*}
    \end{enumerate}
\end{lemma}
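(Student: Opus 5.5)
We plan to prove both parts with the standard SRG argument of \cite{Ryu2022SRG}: reduce each claim to a fixed nonzero test vector and a two-dimensional geometric statement in the complex plane. Take (ii) first. Fix $A\in\mathcal{A}$, $B\in\mathcal{B}$ and $0\neq x$, and set $y=Bx$. If $y=0$, the point of $\mathrm{SRG}(AB)$ generated by $x$ is $0$. This point lies in $\mathrm{SRG}(\mathcal{A})\,\mathrm{SRG}(\mathcal{B})$ because $0\in\mathrm{SRG}(B)$. Now assume $y\neq0$, and write
\begin{equation*}
z_B=\tfrac{\|y\|}{\|x\|}e^{j\angle(x,y)}, \quad z_A=\tfrac{\|Ay\|}{\|y\|}e^{j\angle(y,Ay)}.
\end{equation*}
The gain of $x$ under $AB$ is $|z_A||z_B|$. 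The angles $\angle(\cdot,\cdot)$ in \eqref{eq:vector_angle} are the real Euclidean angles in $\mathbb{R}^{2n}\cong\mathbb{C}^n$, since $\mathrm{Re}\langle\cdot,\cdot\rangle$ is the real inner product. The spherical triangle inequality in $\mathbb{R}^{2n}$ then gives
\begin{equation*}
\bigl|\angle(x,y)-\angle(y,Ay)\bigr|\le\angle(x,ABx)\le\min\{\angle(x,y)+\angle(y,Ay),\,2\pi-\angle(x,y)-\angle(y,Ay)\}.
\end{equation*}

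Suppose $\mathcal{B}$ has the arc property with $\mathrm{Arc}^+$, so the full arc $\{|z_B|e^{j\phi}:|\phi|\le\angle(x,y)\}$ lies in $\mathrm{SRG}(\mathcal{B})$. The target point $w=|z_A||z_B|e^{\pm j\angle(x,ABx)}$ must be written as a product of $z_A$ (or its conjugate) with a point of this arc. The argument of $w$ ranges over an interval contained in the set of attainable values of $\pm\angle(y,Ay)+\phi$, reduced mod $2\pi$ and with $|\phi|\le\angle(x,y)$, and the triangle inequality above gives exactly this containment. For the $\mathrm{Arc}^-$ case we use the arc through the negative real axis, which covers arguments $\phi\in[\angle(x,y),2\pi-\angle(x,y)]$; here the upper bound $2\pi-(\cdot)$ in the triangle inequality is the relevant one. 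The case where $\mathcal{A}$ has the arc property is symmetric, with the roles of the arcs swapped.

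For (i) we use the same vector $x$, normalized so that $\|x\|=1$, and put $a=Ax$, $b=Bx$. We project everything onto the real span of $x$, $a$, $b$ in $\mathbb{R}^{2n}$, which is at most three-dimensional. The SRG point of $A+B$ at $x$ has modulus $\|a+b\|$ and argument $\pm\angle(x,a+b)$. The points $z_a=\|a\|e^{j\angle(x,a)}$ and $z_b=\|b\|e^{j\angle(x,b)}$ are obtained by rotating $a$ and $b$ about the $x$-axis into a common half-plane. As $a$ and $b$ rotate about that axis, $a+b$ sweeps out exactly the points of the form $z_a+\zeta$ with $\zeta\in\mathrm{Chord}(z_b)$. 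More concretely, write $b=b_\parallel+b_\perp$ relative to $x$. In the frame defined by $a$, the vector $b$ corresponds to $\mathrm{Re}\,z_b+j\,t\,\mathrm{Im}\,z_b$ for some $t\in[-1,1]$, up to a perpendicular component that only increases the modulus while keeping the real part fixed. We therefore have to show that the resulting point still lies in $\{z_a\}+\mathrm{Chord}(z_b)$, possibly after exchanging the roles of $a$ and $b$ or using conjugation. This should follow from $\|a+b\|^2=(\mathrm{Re}\,z_a+\mathrm{Re}\,z_b)^2+\|a_\perp+b_\perp\|^2$ together with $\|a_\perp+b_\perp\|$ lying between $\bigl|\|a_\perp\|-\|b_\perp\|\bigr|$ and $\|a_\perp\|+\|b_\perp\|$. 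These bounds correspond to $t=-1$ and $t=1$, and every intermediate value is attained on the chord because $\mathrm{Chord}(z_b)$ is a vertical segment. The resulting point is then of the form $z_a+\zeta$ with $\zeta\in\mathrm{SRG}(\mathcal{B})$, and $z_a\in\mathrm{SRG}(\mathcal{A})$ by definition. This proves (i).

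The main obstacle is the angle bookkeeping in (ii). Specifically, we must check that the attainable arguments $\angle(x,ABx)$ fill exactly an interval covered by the arc, including the wrap-around case $\angle(x,y)+\angle(y,Ay)>\pi$. We would handle this by working throughout with arguments mod $2\pi$ and treating the $\mathrm{Arc}^+$ and $\mathrm{Arc}^-$ cases separately.
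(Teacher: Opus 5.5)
The paper does not prove this lemma itself; it is quoted from the works it cites. Your argument is essentially the standard proof from those references and is correct: you read $\angle(\cdot,\cdot)$ as the real Euclidean angle in $\mathbb{R}^{2n}$, split $Ax$ and $Bx$ into components along and orthogonal to $x$ for the sum, and use the spherical triangle inequality with the extra bound $2\pi-\angle(x,y)-\angle(y,Ay)$ for the product.

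The bookkeeping you call the main obstacle is immediate, with no wrap-around since all angles lie in $[0,\pi]$. First dispose of $ABx=0$ separately: there the convention $\angle(\cdot,0)=0$ breaks your lower bound, but the point is just $0=z_Az_B$. Otherwise put $\psi=\angle(x,ABx)$ and write $|z_A||z_B|e^{j\psi}$ either as $z_A$ times $|z_B|e^{j(\psi-\angle(y,Ay))}\in\mathrm{Arc}^+(z_B)$, or as $\overline{z_A}$ times $|z_B|e^{j(\psi+\angle(y,Ay))}\in\mathrm{Arc}^-(z_B)$. Both memberships follow directly from your inequalities, and the conjugate point follows by conjugating both factors.
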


Indeed, many typical uncertainty sets, including norm-bounded matrices and positive real matrices, naturally satisfy both chord and arc properties. This indicates that the chord and arc properties of matrix sets above are not ad hoc but rather reflect some geometric structure of the uncertainty sets.

\begin{remark}
    In general, a single matrix does not satisfy the arc or chord property. However, some special matrices do. For example, a real scalar matrix satisfies both arc and chord property. A unitary matrix with at least one real eigenvalue satisfies arc property. 
\end{remark}

\section{$\theta$-Symmetric SRGs of Complex Matrices}          \label{section:matrix_theta_SRG}  

As discussed in Section \ref{section:preliminaries}, the current SRG formulation exhibits certain limitations that appear to stem from a more fundamental issue, namely the definition of the angle between complex vectors. The definition in \eqref{eq:vector_angle} is classical yet relies exclusively on the real part of the complex inner product.  
Nevertheless, in complex spaces, the real part is not intrinsically special, and using it to define angles effectively privileges the real axis as a reference direction---an arbitrary choice lacking geometric justification. For instance, consider any nonzero $x \in \mathbb{C}^n$ and let $y = jx$. According to \eqref{eq:vector_angle}, the angle between $x$ and $y$ is $\frac{\pi}{2}$, whereas using the imaginary part instead yields the angle is $0$. 

Motivated by this, we begin with a more general definition of the angle between complex vectors. Then we adopt this definition to formulate the $\theta$-symmetric SRG and derive its properties and implications. 

For given vectors $x, y \in \mathbb{C}^{n}$ and $\theta \in \mathbb{R}$, define $\angle_\theta(x,y) \in [0,\pi]$ as 
\begin{equation}                \label{eq:vector_angle_phase_related}
    \angle_\theta(x,y) = \arccos \frac{\mathrm{Re}\langle x, e^{-j\theta} y \rangle}{\|x\|\|y\|} \quad \text{ if } \  x \neq 0, y \neq 0 ,    
\end{equation} 
and $\angle_\theta (x,y) = 0$ if either $x=0$ or $y=0$.  
Based on \eqref{eq:vector_angle_phase_related}, the $\theta$-symmetric SRG of a matrix $C \in \mathbb{C}^{n\times n}$ is defined as 
\begin{equation}                \label{eq:definition_PS_SRG_matrix}
    \mathrm{SRG}_\theta(C) \! = \! \left\{ \! 
        \frac{\|  Cx \| }{\| x \|} \exp\{{j(\theta \! \pm \! \angle_\theta(x, Cx)}\} \! :   0 \! \neq \! x \! \in \! \mathbb{C}^n \! 
    \right\} . 
\end{equation} 
In view of \eqref{eq:SRG_matrix}, an explicit connection between $\theta$-symmetric SRG and $0$-symmetric SRG is 
\begin{equation}                            \label{eq:definition_theta_SRG_matrix} 
    \mathrm{SRG}_\theta (C) = e^{j\theta} \mathrm{SRG}(e^{-j\theta} C) . 
\end{equation} 
By \eqref{eq:SRG_rotation_not_commute}, we mention that $\mathrm{SRG}_\theta(C)$ is not obtained by a simple rotation of $\mathrm{SRG}(C)$ and that they can differ substantially in shape.   
Clearly, when $\theta = 0$, \eqref{eq:definition_PS_SRG_matrix} reduces to the $0$-symmetric SRG defined in \eqref{eq:SRG_matrix}. For any $\theta\in\mathbb{R}$, $\mathrm{SRG}_\theta(C)$ is a subset of $\mathbb{C}$ and is symmetric with respect to the line $e^{j\theta} \mathbb{R}$. This property motivates the terminology ``$\theta$-symmetric SRG" that we adopt here. 

\begin{figure}[htb]
    \centering
    \begin{overpic}[width=0.5\linewidth]{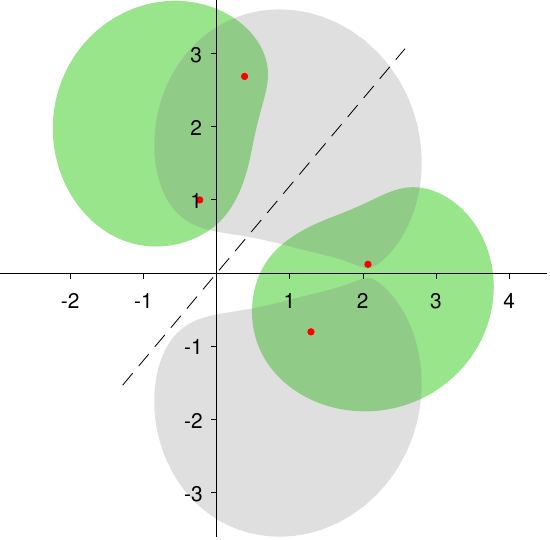}
        \put(61,63){\footnotesize $\theta$}   
        \put(118,54){ \footnotesize $\mathrm{Re}$} 
        \put(38,118){\footnotesize $\mathrm{Im}$} 
        \begin{tikzpicture}[overlay] 
            \draw[->] (2.1,2.15) arc(0:48.5:0.35);  
        \end{tikzpicture} 
    \end{overpic}
    \caption{An illustration of {\unboldmath$\mathrm{SRG}(C)$} (gray region) and {\unboldmath$\mathrm{SRG}_\theta(C)$} (green region) with {\unboldmath$\theta = 50^\circ$}, where the red points are the eigenvalues of {\unboldmath$C$}. }
    \label{fig:SRG_theta_SRG} 
\end{figure}

\begin{example}
    Consider a matrix 
    \begin{equation*}
        C = \begin{bmatrix}
            1+2j & 0 & 2 & 0 \\ 
            0 & 1+j & -1 & 1 \\ 
            -j & 1 & 1.5 & 0 \\ 
            2 & -j & 1 & 0 
        \end{bmatrix}. 
    \end{equation*} Let $\theta = 50^\circ$, then an illustration of $\mathrm{SRG}(C)$ and $\mathrm{SRG}_\theta(C)$ is shown in Fig.~\ref{fig:SRG_theta_SRG}. 
\end{example}

The $\theta$-symmetric SRG of a matrix set can be defined similarly. For a matrix set $\mathcal{C} \subset \mathbb{C}^{n\times n}$, the $\theta$-symmetric SRG of $\mathcal{C}$ is defined by 
$ \mathrm{SRG}_\theta (\mathcal{C}) = \bigcup_{C\in\mathcal{C}} \mathrm{SRG}_\theta(C)$. 

Some properties of $\theta$-symmetric SRG are summarized as follows. The proofs are relatively straightforward and can be found in \cite{Yang2025Cascade_theta_SRG}. 

\begin{lemma}             \label{lem:properties_theta_SRG} 
    Let matrix $C\in \mathbb{C}^{n\times n}$ and $\theta \in \mathbb{R}$  be given, then  
    \begin{enumerate}[(i)] 
        \item \label{item:theta_SRG_property_1}  spectrum containment $\Lambda(C) \subset \mathrm{SRG}_\theta(C)$ holds; 
        \item \label{item:theta_SRG_property_2}  for any $0 \neq \mu \in \mathbb{R}$, we have 
        $ \mathrm{SRG}_\theta(\mu C) = \mu \mathrm{SRG}_\theta(C) $ and 
        $ \mathrm{SRG}_\theta (C + \mu e^{j\theta} I) = \mathrm{SRG}_\theta(C) + \mu e^{j\theta} $;  
        \item \label{item:theta_SRG_property_3} we have 
        $\mathrm{SRG}_{\theta+\pi} (C) = \mathrm{SRG}_\theta (C) $;  
        \item \label{item:theta_SRG_property_4}  if $C$ is nonsingular, then 
        \begin{equation*}
            \mathrm{SRG}_\theta(C^{-1}) = ( \mathrm{SRG}_{-\theta} (C) )^{-1} . 
        \end{equation*} 
    \end{enumerate}
\end{lemma}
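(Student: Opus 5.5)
The plan is to reduce every item to a statement about a single test vector and work directly from the definition \eqref{eq:definition_PS_SRG_matrix}. For a nonzero $x$, write $r(x)=\|Cx\|/\|x\|$ and $\phi(x)=\angle_\theta(x,Cx)$. The point of $\mathrm{SRG}_\theta(C)$ generated by $x$ is then the pair $r e^{j(\theta\pm\phi)}$. Equivalently, by \eqref{eq:definition_theta_SRG_matrix}, this pair consists of the points $z$ with $|z|=r$ and $\mathrm{Re}(e^{-j\theta}z)=\mathrm{Re}\langle x,e^{-j\theta}Cx\rangle/\|x\|^2$. I will use this second characterization throughout. The reason is that it pins down the conjugate pair, taken with respect to the line $e^{j\theta}\mathbb{R}$, by two scalars: the modulus and the "$\theta$-real part".

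For (i), let $\lambda$ be an eigenvalue with eigenvector $x$. Then $r=|\lambda|$, and
\[
\mathrm{Re}\langle x,e^{-j\theta}Cx\rangle/\|x\|^2=\mathrm{Re}(e^{-j\theta}\lambda).
\]
So $\lambda$ is one of the two points with this modulus and this $\theta$-real part, and hence lies in the pair generated by $x$.

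For (ii), scaling by a real $\mu\neq0$ multiplies the modulus by $|\mu|$ and the $\theta$-real part by $\mu$. The new pair is therefore $\mu$ times the old pair; when $\mu<0$ this uses the fact that the pair is symmetric about $e^{j\theta}\mathbb{R}$. For the shift, I will not use the modulus directly. Instead I will describe the pair as the two points $e^{j\theta}(a\pm jb)$, where
\[
a=\mathrm{Re}\langle x,e^{-j\theta}Cx\rangle/\|x\|^2,\qquad b=\sqrt{r^2-a^2}=\|(e^{-j\theta}C-a I)x\|/\|x\|.
\]
Replacing $C$ by $C+\mu e^{j\theta}I$ sends $a\mapsto a+\mu$ and leaves $e^{-j\theta}C-aI$ unchanged, so $b$ is invariant. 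The pair is therefore translated by $\mu e^{j\theta}$. The identity $r^2-a^2=\|(e^{-j\theta}C-aI)x\|^2/\|x\|^2$ is a one-line expansion, because $a\|x\|^2$ is the real part of the cross term.

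For (iii), replacing $\theta$ by $\theta+\pi$ negates both $e^{j\theta}$ and the $\theta$-real part, so it leaves the set of points $z$ with the given modulus and $\mathrm{Re}(e^{-j\theta}z)$ unchanged.

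For (iv), I substitute $x=Cy$, which is a bijection on nonzero vectors since $C$ is invertible. The modulus becomes $\|y\|/\|Cy\|=1/r(y)$. The $\theta$-real part becomes $\mathrm{Re}\langle Cy,e^{-j\theta}y\rangle/\|Cy\|^2$, and this equals $\mathrm{Re}\langle y,e^{j\theta}Cy\rangle/\|Cy\|^2$ because the real part of an inner product is invariant under conjugation. Dividing by $\|y\|^2$ shows this is $a_{-\theta}(y)/r(y)^2$, where $a_{-\theta}$ is the $(-\theta)$-real part for $C$. Now let $w$ be a point of the pair for $C$ at $-\theta$ generated by $y$, so $|w|=r$ and $\mathrm{Re}(e^{j\theta}w)=a_{-\theta}$. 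Then
\[
|1/w|=1/r,\qquad \mathrm{Re}(e^{-j\theta}/w)=\mathrm{Re}(e^{j\theta}w)/r^2=a_{-\theta}/r^2,
\]
using $\mathrm{Re}(e^{-j\theta}\bar w)=\mathrm{Re}(e^{j\theta}w)$. So $1/w$ lies in the pair generated by $x=Cy$ for $C^{-1}$ at $\theta$, and reversing the bijection gives equality of the two sets.

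I expect the only real obstacle to be bookkeeping. First, one must check that the "modulus plus $\theta$-real part" description really gives exactly the two points $re^{j(\theta\pm\phi)}$; this follows from $\cos\phi=a/r$. Second, the degenerate case $Cx=0$ (so $r=0$) should be noted separately; it does not arise in (iv).
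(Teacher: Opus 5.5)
Your proof is correct, and it takes a different route from the one the paper points to. The paper does not prove this lemma in the text; it defers to \cite{Yang2025Cascade_theta_SRG}.

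The route the paper's own setup suggests is the rotation identity \eqref{eq:definition_theta_SRG_matrix}, $\mathrm{SRG}_\theta(C)=e^{j\theta}\mathrm{SRG}(e^{-j\theta}C)$. One combines it with the standard facts for the $0$-symmetric SRG: spectrum containment, $\mathrm{SRG}(\mu C)=\mu\,\mathrm{SRG}(C)$ and $\mathrm{SRG}(C+\mu I)=\mathrm{SRG}(C)+\mu$ for real $\mu$, and $\mathrm{SRG}(C^{-1})=\mathrm{SRG}(C)^{-1}$. Each item then becomes a one-liner:
\begin{itemize}
\item[(iii)] follows from $\mathrm{SRG}_{\theta+\pi}(C)=-e^{j\theta}\mathrm{SRG}(-e^{-j\theta}C)$ and real scaling by $-1$.
\item[(iv)] follows from $e^{-j\theta}C^{-1}=(e^{j\theta}C)^{-1}$, which gives $\mathrm{SRG}_\theta(C^{-1})=e^{j\theta}\mathrm{SRG}(e^{j\theta}C)^{-1}=\bigl(e^{-j\theta}\mathrm{SRG}(e^{j\theta}C)\bigr)^{-1}=(\mathrm{SRG}_{-\theta}(C))^{-1}$.
\end{itemize}

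You instead work per test vector, describing the generated pair by its modulus $r$ and its $\theta$-real part $a$. This is self-contained, since it does not import the $0$-SRG properties as black boxes. It also shows exactly where the sign flip to $-\theta$ in (iv) comes from, namely the conjugation $\mathrm{Re}\langle Cy,e^{-j\theta}y\rangle=\mathrm{Re}\langle y,e^{j\theta}Cy\rangle$. The cost is somewhat more bookkeeping.

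Two small tidy-ups:
\begin{itemize}
\item In (ii), the symmetry remark for $\mu<0$ is unnecessary. Because $\mu$ is real, $z'$ lies in the new pair if and only if $z'/\mu$ has modulus $r$ and $\theta$-real part $a$.
\item In (iv), make ``reversing the bijection'' explicit. One option is to apply your inclusion to $(C^{-1},-\theta)$. Another is to note directly that the $C^{-1}$-pair generated by $x=Cy$ has modulus $1/r$, $\theta$-real part $a/r^2$ and hence imaginary offset $b/r^2$. So it equals $e^{j\theta}(a\pm jb)/r^2$, which is exactly the set of reciprocals of $e^{-j\theta}(a\pm jb)$.
\end{itemize}
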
 

For given $\theta \in \mathbb{R}$, the phase spread of the $\theta$-symmetric SRG is defined as 
\begin{equation}             \label{eq:phase_spread}
    \Gamma_\theta(C) = \sup_{x \in \mathbb{C}^n} \,  \angle_\theta (x,Cx) . 
\end{equation} 
Recall that $C$ is called sectorial if $0\notin W(C)$ \cite{Chen2024Phase}.  
Interestingly, sectorial matrices can be equivalently characterized within the $\theta$-symmetric SRG framework: a matrix $C$ is sectorial if and only if there exists $\theta \in \mathbb{R}$ such that $\Gamma_\theta(C) < \frac{\pi}{2}$.

When restricted to sectorial matrices, we can define the unique optimal $\theta^\star \in \mathbb{R}$ to minimize the phase spread of the $\theta$-symmetric SRG as \cite{Chen2026Cyclic} 
\begin{equation}                        \label{eq:optimal_theta}  
    \theta^\star = \arg\min_{\theta\in\mathbb{R}} \,  \Gamma_\theta(C) .  
\end{equation} 
Accordingly, $\theta^\star$-symmetric SRG is referred to as the canonical SRG, denoted by $\mathrm{SRG}_{\theta^\star}(C)$.

The following example demonstrates that the canonical SRG yields a tighter graphical enclosure of the eigenvalues than the $0$-symmetric SRG, thereby effectively reducing the conservatism.

\begin{example}             \label{examp:for_optimal_theta_SRG} 
     Consider the following sets of matrices: 
    \begin{enumerate}[(i)]
        \item Let $C \!= \! z I $ be a complex scalar matrix with $z \!\in\! \mathbb{C}$. Then $\mathrm{SRG}_{\theta^\star}(C) \! =\!  z$ with $\theta^\star \! = \! \angle z$, while $\mathrm{SRG}(C) \! = \! \{z,\overline{z}\}$.  
        \item Let $C\in \mathbb{C}^{n\times n}$ be a unitary matrix with $\Lambda(C) = \{ e^{j\phi_1}, \dots, e^{j\phi_n} \}, \, \phi_1\geq \cdots \geq \phi_n,$ satisfying $0 \leq \phi_1 - \phi_n < \pi$. Then 
        \begin{equation*}
            \mathrm{SRG}_{\theta^\star}(C) \! =\! \{ e^{j\phi}\!\in\!\mathbb{C}: \delta_{\min} \leq |\phi-\theta^\star| \leq \delta_{\max}  \} , 
        \end{equation*}  
        with $\theta^\star = (\phi_1+\phi_n)/2 , 
        \delta_{\max} = \max_{1 \leq i \leq n  } |\phi_i-\theta^\star| $ and $ \delta_{\min} = \min_{1\leq i \leq n} | \phi_i - \theta^\star | $.   
        The $0$-symmetric SRG of $C$ is 
        \begin{equation*}
            \mathrm{SRG}(C) \!=\! \{ e^{j\phi} \!\in\! \mathbb{C}\!:  \phi_{\min} \leq |\phi|\leq\phi_{\max}  \}  , 
        \end{equation*}   
        with $ \phi_{\max} \!=\! \max_{1 \leq i \leq n } |\phi_i| $ and $  \phi_{\min} \!=\! \min_{1 \leq i \leq n } |\phi_i|  $.  
         \item Let $C \in \mathbb{C}^{n\times n}$ be a positive definite matrix. Then $\theta^\star = 0$ and $\mathrm{SRG}_{\theta^\star} (C) = \mathrm{SRG}(C)$.  
    \end{enumerate}
\end{example}
 
\section{Gain and Phase of $\theta$-Symmetric SRG}
\label{section:segmental_phase_computation} 

The $\theta$-symmetric SRG is a mixed gain-phase geometric representation, allowing its gain and phase aspects to be considered separately. In view of \eqref{eq:definition_PS_SRG_matrix}, the gain aspect corresponds to the maximum singular value, a classical and well-studied quantity. We therefore focus on the phase aspect, which corresponds to the recently proposed notion of $\theta$-segmental phase \cite{Chen2026Cyclic}. For a matrix $C \in \mathbb{C}^{n\times n}$ and $\theta \in \mathbb{R}$, the $\theta$-segmental phase $\Psi_\theta(C)$ is defined as  
\begin{equation*}
    \Psi_\theta(C) = \left[ \underline{\psi}_\theta(C), \overline{\psi}_\theta(C) \right] , 
\end{equation*} 
where 
\begin{equation*}
    \overline{\psi}_\theta (C) = \theta + \Gamma_\theta(C), \quad \underline{\psi}_\theta(C) = \theta - \Gamma_\theta(C) , 
\end{equation*}
with $\Gamma_\theta(C)$ given in \eqref{eq:phase_spread}. When $\theta = 0$, $\Gamma_0(C)$ is called the singular angle of $C$ \cite{Wielandt1967Topics} and 
\begin{equation*}
    \Psi_0(C) = [-\Gamma_0(C), \Gamma_0(C)] . 
\end{equation*}
When $\theta^\star$ is determined by \eqref{eq:optimal_theta}, the corresponding $\Psi_{\theta^\star}(C)$ defines the segmental phase of $C$, which attains the minimal phase spread.  

\begin{remark}
    The notion of $\theta$-segmental phase was originally introduced based on the notion of normalized numerical range \cite{Chen2026Cyclic,Chen2025SingularAngle,Auzinger2003Sectorial,Lins2018NormalizedNR}.  Here, we take a new perspective and interpret it as the phase aspect of the $\theta$-symmetric SRG. 
\end{remark}

The following result establishes a connection between the $\theta$-segmental phase and a norm minimization problem.  

\begin{theorem}             \label{thm:segmental_phase_norm_minimization} 
    Let matrix $C \in \mathbb{C}^{n\times n}$ be sectorial, then  
    \begin{enumerate}[(i)]
        \item for any $\theta \in \mathbb{R}$ such that $\Gamma_\theta(C) < \frac{\pi}{2}$, there holds  
        $  \Psi_\theta (C) = \theta + [-\Gamma_\theta, \Gamma_\theta], $  
        where  
        \begin{equation*}
            \Gamma_\theta = \arcsin \, \min_{\gamma>0} \|\gamma e^{-j\theta} C-I\| . 
        \end{equation*} 
        \item there holds  
         $  \Psi_{\theta^\star} (C) = \theta^\star + [-\Gamma^\star, \Gamma^\star], $ 
        where  
        \begin{align*}
            \Gamma^\star & = \arcsin \min_{z \in \mathbb{C}}\|zC-I\|,  \\ 
            \theta^\star & = - \angle \arg \min_{z \in \mathbb{C}} \|zC-I\| . 
        \end{align*} 
    \end{enumerate}
\end{theorem}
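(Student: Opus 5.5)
Set $D = e^{-j\theta}C$. Then $\angle_\theta(x,Cx) = \angle_0(x,Dx)$, so $\Gamma_\theta(C) = \Gamma_0(D)$. Part (i) therefore reduces to the case $\theta = 0$: we must show $\sin \Gamma_0(D) = \min_{\gamma>0}\|\gamma D - I\|$ whenever $\Gamma_0(D) < \frac{\pi}{2}$. The approach is pointwise geometry in $\mathbb{C}^n$ followed by a minimax exchange.

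\emph{Pointwise identity.} For a fixed unit vector $x$ with $Dx \neq 0$, let $\phi = \angle_0(x,Dx)$. The distance from $x$ to the real ray $\{\gamma Dx : \gamma > 0\}$ in $\mathbb{C}^n \cong \mathbb{R}^{2n}$ equals $\sin\phi$ when $\phi \le \frac{\pi}{2}$. Here we use that $\mathrm{Re}\langle\cdot,\cdot\rangle$ is the real Euclidean inner product and $\|\cdot\|$ the real norm. Hence
\begin{equation*}
\min_{\gamma>0}\|\gamma Dx - x\| = \sin\angle_0(x,Dx).
\end{equation*}
Taking the supremum over unit $x$ gives the easy inequality
\begin{equation*}
\sin\Gamma_0(D) = \sup_x \min_\gamma \|\gamma Dx - x\| \le \min_\gamma \sup_x \|\gamma Dx - x\| = \min_\gamma \|\gamma D - I\|.
\end{equation*}
The minimum over $\gamma$ is attained because the function tends to $1$ as $\gamma \to 0$ and blows up as $\gamma \to \infty$. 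If $D$ is singular, then $D$ is not sectorial, since $0 \in W(D)$, so $\Gamma_0(D) < \frac{\pi}{2}$ forces $D$ to be invertible.

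\emph{Reverse inequality (main obstacle).} The sup–min exchange is not automatic: the unit sphere is not convex, and $\|\gamma Dx - x\|$ is not concave in $x$. I would instead reformulate through quadratic forms. The condition $\|\gamma D - I\| \le s$ with $s = \sin\Gamma_0(D)$ is equivalent to
\begin{equation*}
\gamma^2 D^H D - 2\gamma\,\mathrm{Re}\,D + (1 - s^2) I \preceq 0.
\end{equation*}
Meanwhile, $\angle_0(x,Dx) \le \Gamma$ for all $x$ means $x^H(\mathrm{Re}\,D)x \ge \cos\Gamma\,\|x\|\,\|Dx\|$, which by squaring gives
\begin{equation*}
(x^H \mathrm{Re}\,D\, x)^2 \ge \cos^2\Gamma\,(x^H x)(x^H D^H D x).
\end{equation*}
This is a statement about the joint numerical range of the Hermitian pair $(\mathrm{Re}\,D, D^H D)$, which is convex by Hausdorff–Toeplitz convexity of joint numerical ranges of two Hermitian forms. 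Convexity lets one apply an S-lemma. The set $\{(a,b) = (x^H \mathrm{Re}\,D\,x,\ x^H D^H D x) : \|x\| = 1\}$ lies in the convex cone region $a \ge \cos\Gamma\sqrt{b}$, and separation yields a $\gamma > 0$ with $\gamma^2 b - 2\gamma a + 1 - \sin^2\Gamma \le 0$ for all points.

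Concretely, for each point the quadratic in $\gamma$ has minimum $1 - a^2/b \le \sin^2\Gamma$. Convexity of the joint range guarantees a common $\gamma$, via a minimax theorem applied on the convex compact joint range, where the function is linear in $(a,b)$ and convex in $\gamma$. Sion's theorem applies there and gives the equality, completing (i).

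\emph{Part (ii).} Write $z = \gamma e^{-j\theta}$ with $\gamma > 0$ and $\theta \in \mathbb{R}$. Then by (i),
\begin{equation*}
\min_{z}\|zC - I\| = \min_\theta \sin\Gamma_\theta(C).
\end{equation*}
This uses that $\theta$ with $\Gamma_\theta \ge \frac{\pi}{2}$ gives $\min_\gamma\|\gamma e^{-j\theta}C - I\| \ge 1$ (the pointwise identity with $\sin$ replaced by $1$ when $\phi > \frac{\pi}{2}$), whereas sectoriality gives a value $< 1$. The value $z = 0$ gives $1$ and is thus not the minimizer. Monotonicity of $\arcsin$ on $[0,1]$ then identifies $\Gamma^\star = \Gamma_{\theta^\star}(C)$, and the minimizing $z$ has argument $-\theta^\star$. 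This uses the uniqueness of $\theta^\star$ from \eqref{eq:optimal_theta}, modulo $\pi$ per Lemma~\ref{lem:properties_theta_SRG}(iii), with the sign of $\gamma > 0$ selecting the representative.
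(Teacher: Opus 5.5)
Your proposal is correct and follows essentially the same route as the paper. The proof of Proposition~\ref{prop:theta_segmental_phase_LMI_graphical_interpretation} uses convexity of $W(T)$ with $T=\mathrm{Re}\{e^{-j\theta}C\}+jC^HC$, which is the Hausdorff--Toeplitz convexity of the joint range of $(\mathrm{Re}\,D, D^HD)$ that you invoke. It then separates this set from the parabola by the tangent line $2\mu a\ge\cos\delta(b+\mu^2)$, which is exactly your inequality $\gamma^2 b-2\gamma a+\cos^2\delta\le 0$ with $\gamma=\cos\delta/\mu$; your Sion minimax step and distance-to-ray identity only repackage the separation step and the easy direction.

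Your part (ii) is more careful than the paper's, since you explicitly exclude directions with $\Gamma_\theta(C)\ge\frac{\pi}{2}$. One small correction: $\theta^\star$ is unique modulo $2\pi$, not $\pi$, because $\Gamma_{\theta+\pi}(C)=\pi-\inf_x\angle_\theta(x,Cx)>\frac{\pi}{2}$ whenever $\Gamma_\theta(C)<\frac{\pi}{2}$. Lemma~\ref{lem:properties_theta_SRG}(iii) concerns the set $\mathrm{SRG}_\theta$, not $\Gamma_\theta$.
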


\vspace{3pt}

Before proving Theorem \ref{thm:segmental_phase_norm_minimization}, we first introduce the following result, which provides a key technical ingredient for the proof. 

\begin{proposition}                 \label{prop:theta_segmental_phase_LMI_graphical_interpretation}
    Let matrix $C\in\mathbb{C}^{n\times n}$ and $\theta\in\mathbb{R}, \delta \in [0,\frac{\pi}{2})$ be given, then the following statements are equivalent. 
    \begin{enumerate}[(i)]
        \item There holds $\Psi_\theta(C) \subset \theta + [-\delta, \delta] $. 
        \item There exists $\mu>0$ such that 
        \begin{equation*}
            W(T) \subset \{ a+bj \in \mathbb{C}:  2\mu a \geq \cos\delta (b+\mu^2), a , b \in \mathbb{R} \} , 
        \end{equation*} where $T = \mathrm{Re}\{ e^{-j\theta} C\} + jC^HC $. 
        
        \item There exists $\mu>0$ such that 
        \begin{equation}             \label{eq:LMI_theta_segmental_phase}  
        \begin{bmatrix}
            C^HC - \frac{\mu}{\cos\delta} (e^{-j\theta}C + e^{j\theta}C) & \mu I \\ \mu I  & - I 
        \end{bmatrix} \leq 0 . 
        \end{equation} 
    \item There exists $\mu>0$ such that 
    \begin{equation}            \label{eq:SRG_disk}
        \mathrm{SRG}_\theta (C) \subset \mathcal{D}[\theta,\mu] , 
    \end{equation}  
    where 
    \begin{align*}
    \mathcal{D}[\theta,\mu] & = \{ e^{j\theta} (a+bj) \in \mathbb{C}: \\ 
    & \quad \ \  \left( a - \frac{\mu}{\cos\delta}  \right)^2 +  b^2 \leq \mu^2\tan^2\delta, \ a ,b \in \mathbb{R} \} . 
    \end{align*}
    \end{enumerate} 
\end{proposition}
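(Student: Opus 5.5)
By replacing $C$ with $e^{-j\theta}C$, I will assume $\theta=0$; all four statements are invariant under this substitution. The idea is that every statement is a pointwise condition on the pair $(p,q)=(\mathrm{Re}\,x^HCx,\ \|Cx\|^2)$ over unit vectors $x$. Note that $x^HTx=p+jq$, so $W(T)$ is exactly the set of these pairs. By the Toeplitz--Hausdorff theorem, $W(T)$ is convex and compact.

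\textbf{Step 1: (ii)$\Leftrightarrow$(iii)$\Leftrightarrow$(iv).} These are direct computations.
\begin{itemize}
\item For (iii), take a Schur complement with respect to the $-I$ block. The LMI becomes $C^HC+\mu^2 I-\frac{2\mu}{\cos\delta}\mathrm{Re}\,C\le 0$. Evaluated at unit $x$, this reads $q+\mu^2\le \frac{2\mu}{\cos\delta}p$, which is (ii) with $a=p$, $b=q$.
\item For (iv), a point of $\mathrm{SRG}_0(C)$ generated by a unit $x$ is $a+bj$ with $a=\|Cx\|\cos\angle(x,Cx)=p$ and $b^2=q-p^2$.
\item Substituting into the disk inequality and using $1/\cos^2\delta-\tan^2\delta=1$ gives again $q+\mu^2\le \frac{2\mu}{\cos\delta}p$.
\end{itemize}
So all three say: there is $\mu>0$ such that $f(\mu,w):=\cos\delta\,(q+\mu^2)-2\mu p\le 0$ for every $w=(p,q)\in W(T)$.

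\textbf{Step 2: (iii)$\Rightarrow$(i).} For each unit $x$, AM--GM gives
\[
\frac{2\mu p}{\cos\delta}\ \ge\ q+\mu^2\ \ge\ 2\mu\sqrt q .
\]
Hence $p\ge\cos\delta\,\|Cx\|$, i.e.\ $\angle_0(x,Cx)\le\delta$. Taking the supremum over $x$ gives $\Gamma_0(C)\le\delta$, which is (i).

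\textbf{Step 3: (i)$\Rightarrow$(iii).} This is the main obstacle, because it requires a single $\mu$ that works for all $x$ simultaneously.
\begin{itemize}
\item \emph{Pointwise choice.} Hypothesis (i) says $p\ge \cos\delta\sqrt q\ge0$ for every $w\in W(T)$. So for each fixed $w$ the choice $\mu=p/\cos\delta$ gives $f=\cos\delta\, q-p^2/\cos\delta\le0$. Geometrically, the region $\{q\le p^2/\cos^2\delta,\ p\ge0\}$ is the intersection of the half-planes $\{f(\mu,\cdot)\le0\}$, which are tangent to the parabola $q=p^2/\cos^2\delta$ at $p=\mu\cos\delta$.
\item \emph{Minimax.} To pass from this intersection to a single half-plane, I will use Sion's minimax theorem. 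The function $f(\mu,w)$ is convex in $\mu$ and affine in $w$, and $W(T)$ is convex and compact. Restricting $\mu$ to the compact interval $[0,M]$ with $M=\|C\|/\cos\delta$ still contains each pointwise minimizer $p/\cos\delta$. Therefore
\[
\min_{\mu\in[0,M]}\max_{w\in W(T)} f(\mu,w)=\max_{w}\min_{\mu}f(\mu,w)\le 0 .
\]
So some $\mu_0\in[0,M]$ satisfies $f(\mu_0,\cdot)\le0$ on all of $W(T)$.
\item \emph{Excluding $\mu_0=0$.} If $\mu_0=0$, then $q\le0$ for all unit $x$, which forces $C=0$. In that degenerate case (iii) cannot hold, while (i) holds by the convention $\angle_\theta(x,0)=0$. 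So the equivalence is to be read for $C\neq0$, and I would state this caveat explicitly. For $C\ne0$ we get $\mu_0>0$, which is (iii).
\end{itemize}

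The delicate points are the compactness and convexity needed for Sion's theorem, which Toeplitz--Hausdorff supplies, and the treatment of the boundary value $\mu=0$.
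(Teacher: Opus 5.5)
Your proof is correct, and its overall architecture matches the paper's. Both reduce every statement to a condition on the convex set $W(T)$, use a Schur complement for (iii), and use the substitution $a=\mathrm{Re}\,x^HCx$, $b^2=\|Cx\|^2-a^2$ for (iv). The difference is in how the single $\mu$ is obtained in the hard direction (i)$\Rightarrow$(ii)/(iii).

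The paper squares the angle inequality to place $W(T)$ below the parabola $b=a^2/\cos^2\delta$. It then asserts in one line that, ``in view of the convexity of $W(T)$,'' this is equivalent to $W(T)$ lying below a single tangent line $2\mu a=\cos\delta(b+\mu^2)$ with $\mu>0$; the supporting-line argument is left implicit. You instead get the single $\mu$ from Sion's minimax theorem on $[0,M]\times W(T)$, using the explicit pointwise minimizer $\mu=p/\cos\delta\in[0,M]$. This makes compactness and the boundary value $\mu=0$ explicit, and that is exactly where the degenerate case appears.

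Your caveat is genuine. For $C=0$, (i) holds because $\Gamma_\theta(0)=0$ by the convention $\angle_\theta(x,0)=0$. But $W(T)=\{0\}$ violates $0\ge\cos\delta\,\mu^2$ for every $\mu>0$, and likewise $0\notin\mathcal{D}[\theta,\mu]$. So the proposition as stated needs $C\neq0$, and the paper's ``equivalent'' step silently skips this. For $C\neq 0$ your argument also shows $\mu_0>0$, which forces $C$ to be nonsingular, so no other exception arises.

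In the return direction, your AM--GM bound $q+\mu^2\ge 2\mu\sqrt{q}$ is the algebraic counterpart of the paper's (iv)$\Rightarrow$(i) step. There the paper observes that $\mathcal{D}[\theta,\mu]$ lies in the cone of half-angle $\delta$ about $e^{j\theta}\mathbb{R}^+$. You also implicitly read the LMI with $e^{j\theta}C^H$ in place of the paper's typo $e^{j\theta}C$, which is the intended form.
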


\vspace{1em} 

\begin{proof}
    See Appendix \ref{app:proofs}. 
\end{proof}

For a fixed $\theta \in \mathbb{R}$, Proposition \ref{prop:theta_segmental_phase_LMI_graphical_interpretation} establishes an equivalent LMI characterization of the $\theta$-segmental phase condition and establishes equivalent graphical interpretations in terms of the numerical range and $\theta$-symmetric SRG, respectively. 

An illustration of these graphical interpretations is shown in Fig.~\ref{fig:W_T_numerical_range_SRG_disk}. One can see from Fig.~\ref{subfig:SRG_disk} that a matrix $C$ satisfies $\Psi_\theta(C) \subset \theta + [-\delta,\delta]$ if and only if there exists a parameter $\mu>0$ such that $\mathrm{SRG}_\theta(C)$ is contained in a disk centered at $\frac{\mu }{\cos\delta} e^{j\theta}$ with radius $\mu\tan\delta$. 

\begin{figure}[htb] 
\centering 
\subfigure[]{
\begin{minipage}[t]{0.46\linewidth}
\centering 
\begin{tikzpicture}
\begin{scope}  
\draw[white, fill=gray, opacity=0.5, rotate around={65:(2.67,2.37)}]
(2.67,2.37) ellipse (1 and 0.4);   
\end{scope} 
\draw[-latex, thick](0.3,1.2)--(3.8,1.2); 
\draw[-latex, thick](1.2,0.2)--(1.2,4); 
\draw[domain=0.4:2.7, smooth, variable=\x] 
plot ({\x},{1.2+1.1*(\x-1.2)*(\x-1.2)});
\coordinate (z0) at (2.27,{1.1*1.07*1.07+1.2}); 
\draw[blue] ($(z0)+(-112.5:2)$) -- ($(z0)+(67.5:1.2)$);
\fill[red] (z0) circle(0.8pt);   
\node[below] at (3.6,1.2){$a$}; 
\node[left] at(1.2,3.8){$b$}; 
\node[left] at(1.2,1){$0$};   
\node[right] at(2.15,2.17){{\footnotesize $W\!( T )\!$}};  
\end{tikzpicture} 
\label{subfig:W_T_numerical_range} 
\end{minipage} 
}
\subfigure[]{
\begin{minipage}[t]{0.46\linewidth}
\centering 
\begin{tikzpicture}
\begin{scope}   
\draw [white, fill = gray, opacity = 0.5] ($(0.8,1.2)+(20:1.6)$) circle(1); 
\draw (0.8,1.2) -- ++(58.7:2.6);
\draw (0.8,1.2) -- ++(-18.7:2.6);  
\draw [blue] (0.8,1.2) -- ++(20:3); 
\draw[->] ($(0.8,1.2)+(0:0.7)$) arc(0:20:0.7);  
\draw[->] ($(0.8,1.2)+(20:0.55)$) arc(20:58:0.55);  
\end{scope} 
\draw[-latex, thick](0.3,1.2)--(4,1.2); 
\draw[-latex, thick](0.8,0.2)--(0.8,4);  
\node[below] at (3.8,1.2){$\mathrm{Re}$}; 
\node[left] at(0.8,3.8){$\mathrm{Im}$}; 
\node[left] at(0.8,1.4){$0$}; 
\node[right] at (1.2,1.7){$\delta$};  
\node[below] at (1.75,1.62){$\theta$};  
\node[below] at (2.35,2.6){$\mathcal{D}[\theta,\mu]$}; 
\draw[decorate, decoration={brace, amplitude=6pt, mirror}]
(0.8,1.2) -- ($(0.8,1.2)+(-18.7:1.25)$) 
node[midway, sloped, transform shape, yshift=-12pt] {$\mu$}; 
\draw [dashed] ($(0.8,1.2)+(20:1.6)$) -- ($(0.8,1.2)+(-18.7:1.25)$); 
\draw [red, fill=red] ($(0.8,1.2)+(20:1.6)$) circle(0.8pt);  
\end{tikzpicture} 
\label{subfig:SRG_disk}  
\end{minipage} 
} 
\vspace{-0.5em}  
\caption{(a) An illustrations of {\unboldmath$W(T)$} (gray region), where the parabola is defined by {\unboldmath $a^2 = b\cos^2\delta$}, the blue line denotes the tangent line {\unboldmath $2\mu a = \cos\delta (b+\mu^2)$} and the red point marks the point of tangency. (b) An illustration of {\unboldmath $\mathcal{D}[\theta,\mu]$} (gray region), where the red point represents the center of {\unboldmath $\mathcal{D}[\theta,\mu]$}.}    
\label{fig:W_T_numerical_range_SRG_disk} 
\vspace{-0.5em} 
\end{figure} 

Now we are ready to prove Theorem \ref{thm:segmental_phase_norm_minimization}. 
 
\proofof{Proof of Theorem \ref{thm:segmental_phase_norm_minimization}} 
    We first show (i). 
    Since $\Gamma_\theta(C) < \frac{\pi}{2}$, it follows from Proposition \ref{prop:theta_segmental_phase_LMI_graphical_interpretation} that $\Psi_\theta(C) \subset \theta + [-\delta, \delta]$ with $\delta \in [0,\frac{\pi}{2})$ if and only if there exists $\mu>0$ such that \eqref{eq:LMI_theta_segmental_phase} holds, which can be rewritten as 
    \begin{align*}
        &  \cos\delta(C^HC + \mu^2I) \leq \mu (e^{-j\theta} C + e^{j\theta}C^H) \\ 
       \Leftrightarrow  \qquad & \| e^{-j\theta}C - \frac{\mu}{\cos\delta}I\|\leq \mu \tan\delta \\ \Leftrightarrow \qquad & \| \gamma e^{-j\theta} C - I \| \leq \sin \delta , \quad \text{ with } \gamma = \frac{\cos\delta}{\mu}>0 . 
    \end{align*} Hence we have 
    \begin{equation*}
        \Gamma_\theta \leq \delta \quad \Leftrightarrow \quad \min_{\gamma>0} \|\gamma e^{-j\theta} C - I\| \leq \sin\delta , 
    \end{equation*}  implying 
    $\Gamma_\theta = \arcsin \min_{\gamma>0} \|\gamma e^{-j\theta} C - I\|$ and (i) holds. 
    
    To show (ii), by optimizing over $\theta$ to minimize the phase spread $\Gamma_\theta$, we have 
    \begin{align*}
        \Gamma^\star & \! = \! \arcsin \min_{\theta \in \mathbb{R}} \min_{\gamma>0} \|\gamma e^{-j\theta} C - I\| \! = \! \arcsin \min_{z \in \mathbb{C}} \|zC- I\|, \\  \theta^\star & \! = \! - \angle z^\star = - \angle   \arg \min_{z\in\mathbb{C}} \|zC-I\|. 
    \end{align*} 
    This completes the proof.  
\endproof

The connection between $\theta$-segmental phase and a norm minimization problem in Theorem \ref{thm:segmental_phase_norm_minimization} is of independent interest and brings several immediate insights. It enables the computation of $\theta$-segmental phase and the optimal segmental phase via SDP, thereby resolving a problem that was previously open~\cite{Chen2025SingularAngle}. Moreover, it provides the foundation for the interpolation construction developed later in the paper. 
A comparison of the SDP problems in Theorem \ref{thm:segmental_phase_norm_minimization} reveals a natural transition from the $0$-symmetric SRG to $\theta$-symmetric SRG and then to the canonical SRG defined by $\theta^\star$. 

\begin{remark} 
    The concept of segmental phase is closely related to the singular angle and antieigenvalue. Several computational aspects for the antieigenvalue have been investigated, one may refer to \cite{Davis1980Extending,Mirman1983Antieigenvalues,Paul2015Computation_theta_antieigenvalues,Gustafson1997NumericalRange} for more details. 
\end{remark} 

\begin{remark}
    There are some studies on the computation of the SRG that provide valuable insights~\cite{Pates2021Scaled,Huang2019SRG_Normal,Zhang2025DW_Shell}. Nevertheless, these methods are not directly applicable to the computation of the segmental phase. 
\end{remark}

When $\delta $ in Proposition \ref{prop:theta_segmental_phase_LMI_graphical_interpretation} exceeds $\frac{\pi}{2}$, we obtain the following equivalent characterizations of the $\theta$-segmental phase. Notably, this result bridges the phase condition with a semi-infinite parameterized family of matrix inequalities and further reduces it to a generalized eigenvalue problem (GEP).  

\begin{theorem}         \label{thm:larger_than_pi_spread}  
    Let matrix $C \in \mathbb{C}^{n\times n}$ and $\theta \in \mathbb{R}, \delta \in (\frac{\pi}{2},\pi]$ be given such that $0\notin \mathrm{SRG}_\theta(C)$, then the following statements are equivalent. 
    \begin{enumerate}[(i)]
        \item $\Psi_\theta(C) \subset \theta + (-\delta, \delta)$. 
        \item For any $\mu>0$, there holds 
        \begin{equation}        \label{eq:larger_than_pi_condition}
            \cos\delta(C^HC + \mu^2I) < \mu (e^{-j\theta} C + e^{j\theta}C^H) .  
        \end{equation} 
        \item The GEP $Mx \! = \! \mu Nx$ admits no eigenvalues in $ (0,\infty)$ where 
        \begin{equation}            \label{eq:GEP_matrices}
        M \! = \! \begin{bmatrix}
                - C^HC & 0 \\ 0  &  I  
            \end{bmatrix}\! , \ \  N \! = \! \begin{bmatrix}
                - \frac{2\mathrm{Re}\{ e^{-j\theta} C\}}{\cos\delta} &  I \\  I  & 0
            \end{bmatrix}\!  . 
        \end{equation}
    \end{enumerate}  
\end{theorem}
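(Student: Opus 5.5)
Everything reduces to a pointwise statement about a single vector; we then pass between the matrix inequality and the generalized eigenvalue problem. Fix $0\neq x$ with $\|x\|=1$ and write $e^{-j\theta}x^HCx = p+qj$, $r=\|Cx\|^2>0$. Since $0\notin\mathrm{SRG}_\theta(C)$, $C$ is nonsingular. Then
\begin{equation*}
\cos\angle_\theta(x,Cx)=\frac{p}{\sqrt r}.
\end{equation*}
Condition (i) says $\angle_\theta(x,Cx)<\delta$ for every $x$, i.e.\ $p>\sqrt r\cos\delta$ for all unit $x$. The supremum in $\Gamma_\theta(C)$ is attained on the compact unit sphere, so the strict inequality is uniform.

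For (i)$\Leftrightarrow$(ii), take the quadratic form of \eqref{eq:larger_than_pi_condition} at a unit vector $x$. It reads
\begin{equation*}
\cos\delta\,(r+\mu^2)<2\mu p .
\end{equation*}
Since $\cos\delta<0$, the function $\mu\mapsto \cos\delta(r+\mu^2)-2\mu p$ is a concave quadratic in $\mu$. Its supremum over $\mu>0$ is attained at $\mu=p/\cos\delta$, which is positive exactly when $p<0$; when $p\ge 0$ the function is negative on $\mu>0$ automatically. In the case $p<0$, the maximum equals $r\cos\delta - p^2/\cos\delta$. This is negative iff $p^2>r\cos^2\delta$, i.e.\ iff $p<\sqrt r\cos\delta$ fails to hold in the reverse direction, which with $p<0$ is exactly $p>\sqrt r\cos\delta$.

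Combining the two cases, the inequality holds for all $\mu>0$ iff $p>\sqrt r\cos\delta$. In the case $p\ge0$ this is automatic since $\cos\delta\le 0$ (the edge case $\delta=\pi$ needs $r>0$). Quantifying over $x$ gives (i)$\Leftrightarrow$(ii). Note that the matrix inequality is strict for each fixed $\mu$ because each quadratic form is strict on the compact sphere.

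For (ii)$\Leftrightarrow$(iii), multiply \eqref{eq:larger_than_pi_condition} by $-1/\cos\delta>0$. Writing $H=2\mathrm{Re}\{e^{-j\theta}C\}/\cos\delta$, condition (ii) becomes
\begin{equation*}
F(\mu):=-C^HC-\mu H-\mu^2 I<0\quad\text{for all }\mu>0 .
\end{equation*}
$F(0)=-C^HC<0$ by nonsingularity, and $F$ is continuous. So failure of (ii) is equivalent, by an intermediate-value argument on $\lambda_{\max}(F(\mu))$, to the existence of $\mu>0$ and $x\neq0$ with $F(\mu)x=0$.

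The remaining step is a standard linearization: $Mz=\mu Nz$ with $z=(x,y)$ unpacks to
\begin{equation*}
-C^HCx=\mu(-Hx+y),\qquad y=\mu x .
\end{equation*}
Substituting gives $(-C^HC+\mu H-\mu^2I)x=0$. I will check the sign of the $H$ term carefully and, if needed, identify the sign convention in $N$ so that this matches $F(\mu)x=0$. Conversely, any null vector $x$ of $F(\mu)$ yields the eigenvector $z=(x,\mu x)$. Hence (iii) is equivalent to singularity of $F(\mu)$ never occurring for $\mu>0$, which is (ii).

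The main obstacle I expect is this step: handling the sign bookkeeping in the GEP linearization, and arguing that the eigenvalue-crossing characterization is exact. The latter needs $F(\mu)<0$ near $0$, and continuity of eigenvalues, so that "never singular on $(0,\infty)$" is equivalent to "negative definite on $(0,\infty)$".
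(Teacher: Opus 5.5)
Your proposal is correct. For (ii)$\Leftrightarrow$(iii) it coincides with the paper's proof: the paper uses the quadratic pencil $Q(\mu)=\cos\delta(C^HC+\mu^2I)-2\mu\,\mathrm{Re}\{e^{-j\theta}C\}$, the linearization $y=\mu x$, and a continuity argument on the eigenvalues of $Q(\mu)$ anchored at $Q(0)=\cos\delta\, C^HC<0$. Your intermediate-value formulation of (iii)$\Rightarrow$(ii) is more explicit than the paper's ``reversing the preceding arguments.''

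For (i)$\Leftrightarrow$(ii) your route differs. The paper collects the pairs $(p,r)$ into the numerical range $W(T)$ of $T=\mathrm{Re}\{e^{-j\theta}C\}+jC^HC$. It then invokes convexity of $W(T)$ to replace the parabola-bounded region $\{a>\sqrt{b}\cos\delta,\ b>0\}$ by the family of tangent half-planes $2\mu a>\cos\delta(b+\mu^2)$, $\mu>0$. You instead work vector by vector: you maximize the concave quadratic $\mu\mapsto\cos\delta(r+\mu^2)-2\mu p$ and commute the two universal quantifiers over $x$ and $\mu$. This shows that convexity of $W(T)$ is not needed when $\delta>\frac{\pi}{2}$. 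Because $\cos\delta<0$, the target region is itself convex and equals the intersection of those half-planes, so containment of any set in it is equivalent to containment in each half-plane. The paper's formulation gives a uniform picture with Proposition~\ref{prop:theta_segmental_phase_LMI_graphical_interpretation}, where the quantifier is ``there exists $\mu$'' and convexity of $W(T)$ is genuinely essential. Yours gives a short, self-contained scalar verification.

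Two sign slips should be corrected, though neither affects your conclusions.
\begin{itemize}
\item In the case $p<0$, the maximum $r\cos\delta-p^2/\cos\delta$ is negative iff $p^2<r\cos^2\delta$ (not $>$). Together with $p<0$ this gives exactly $p>\sqrt{r}\cos\delta$.
\item Multiplying (ii) by $-1/\cos\delta$ gives $F(\mu)=-C^HC+\mu H-\mu^2I$, not $-C^HC-\mu H-\mu^2I$. With this correction, your linearization $(-C^HC+\mu H-\mu^2I)x=0$ is literally $F(\mu)x=0$. So the pencil $(M,N)$ is consistent as stated, and no change of sign convention in $N$ is needed.
\end{itemize}
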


\begin{proof}
    See Appendix \ref{app:proofs}. 
\end{proof}

Combining Theorem \ref{thm:segmental_phase_norm_minimization} and Theorem \ref{thm:larger_than_pi_spread}, we obtain a complete characterization of the phase of $\theta$-symmetric SRG. Compared to the conventional approach of estimating the normalized numerical range for calculating the $\theta$-segmental phase~\cite{Chen2025SingularAngle}, the computational method here is significantly more efficient and straightforward to implement.

\section{$\theta$-Symmetric SRGs of Matrix Sums and Products}  
\label{section:subadd_submultiply} 

\subsection{Subadditive and Submultiplicative Properties}  

Having studied the $\theta$-symmetric SRG of a single matrix or a matrix set, we are interested in how the $\theta$-symmetric SRG behaves under the most fundamental matrix operations, namely sums and products. These two operations correspond, respectively, to parallel and cascade compositions, and they serve as the building blocks for more general interconnection structures.

We first introduce several concepts for preparation, which extend the chord and arc properties in the $0$-symmetric SRG framework. For given $\theta \in \mathbb{R}$, let $\mathrm{Chord}_\theta(z) $ denote the line segment connecting $z \in \mathbb{C}$ and its mirror reflection $\overline{z}e^{2j\theta}$ across the axis $e^{j\theta}\mathbb{R}$. Then a matrix set $\mathcal{C}$ is said to satisfy the $\theta$-chord property if  
\begin{equation*}
         \mathrm{Chord}_\theta(z) \subset \mathrm{SRG}_\theta(\mathcal{C}), \quad \forall z \in \mathrm{SRG}_\theta(\mathcal{C}). 
    \end{equation*} 
Similarly, let $\mathrm{Arc}_\theta^+ (z)$ and $\mathrm{Arc}_\theta^- (z)$ denote the circular arcs centered at the origin connecting $z$ and its reflection $\overline{z}e^{2j\theta}$ that cross the rays $e^{j\theta}\mathbb{R}^+$ and $e^{j\theta}\mathbb{R}^-$, respectively. Then $\mathcal{C}$ is said to satisfy the $\theta$-arc property if either of the following holds 
\begin{enumerate}[(i)]
    \item $ \mathrm{Arc_\theta^+}(z) \subset \mathrm{SRG}_\theta(\mathcal{C}), \quad \forall z \in \mathrm{SRG}_\theta(\mathcal{C}) $; 
    \item $ \mathrm{Arc_\theta^-}(z) \subset \mathrm{SRG}_\theta(\mathcal{C}), \quad \forall z \in \mathrm{SRG}_\theta(\mathcal{C}) $. 
\end{enumerate}

The following result establishes the subadditive and submultiplicative properties of the $\theta$-symmetric SRG, which describe how the $\theta$-symmetric SRG of the sum or product depends on its individual parts.  

\begin{theorem}             \label{thm:sum_product_properties}   
    Let matrix sets $\mathcal{A}_1,\mathcal{A}_2, \dots, \mathcal{A}_N \! \subset \! \mathbb{C}^{n\times n}$ be given, then 
    \begin{enumerate}[(i)]
        \item for each $\theta \in \mathbb{R}$ such that at least $N-1$ of the $\mathcal{A}_i$ satisfy the $\theta$-chord property, it follows  
        \begin{equation*}
            \mathrm{SRG}_\theta \left(  \sum_{i = 1}^N \mathcal{A}_i \right) \subset \sum_{i = 1}^N \mathrm{SRG}_{\theta} \left(\mathcal{A}_i\right) ;  
        \end{equation*}
        \item for each sequence $\theta_i \in \mathbb{R}, i = 1,2,\dots,N$, such that at least $N\!-\!1$ of the $\mathcal{A}_i$ satisfy the $\theta_i$-arc property, it follows 
        \begin{equation*}
            \mathrm{SRG}_{\sum_{i = 1}^N \theta_i} \left( \prod_{i = 1}^N \mathcal{A}_i \right) \subset  \prod_{i = 1}^N  \mathrm{SRG}_{\theta_i} \left( \mathcal{A}_i \right) . 
        \end{equation*} 
    \end{enumerate} 
\end{theorem}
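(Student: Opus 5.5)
The plan is to reduce both statements to the corresponding $0$-symmetric results in Lemma~\ref{lem:SRG_chord_arc_property} through the rotation identity \eqref{eq:definition_theta_SRG_matrix}, $\mathrm{SRG}_\theta(C)=e^{j\theta}\mathrm{SRG}(e^{-j\theta}C)$, which also holds for matrix sets. The first step is to check how the chord and arc notions transform under rotation. For any $z$, the reflection of $z$ across $e^{j\theta}\mathbb{R}$ is $e^{j\theta}\overline{(e^{-j\theta}z)}$. Consequently
\[
\mathrm{Chord}_\theta(z)=e^{j\theta}\,\mathrm{Chord}(e^{-j\theta}z)
\quad\text{and}\quad
\mathrm{Arc}^\pm_\theta(z)=e^{j\theta}\,\mathrm{Arc}^\pm(e^{-j\theta}z).
\]
It follows that $\mathcal{C}$ has the $\theta$-chord (resp.\ $\theta$-arc) property if and only if the rotated set $e^{-j\theta}\mathcal{C}$ has the ordinary chord (resp.\ arc) property. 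This is a routine verification from the definitions.

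For (i), I would first handle $N=2$. Rotation is linear, so $e^{-j\theta}(\mathcal{A}_1+\mathcal{A}_2)=e^{-j\theta}\mathcal{A}_1+e^{-j\theta}\mathcal{A}_2$, and one of the summands has the chord property. Lemma~\ref{lem:SRG_chord_arc_property}(i) then gives $\mathrm{SRG}(e^{-j\theta}(\mathcal{A}_1+\mathcal{A}_2))\subset \mathrm{SRG}(e^{-j\theta}\mathcal{A}_1)+\mathrm{SRG}(e^{-j\theta}\mathcal{A}_2)$. Multiplying by $e^{j\theta}$ yields the claim for $N=2$.

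For general $N$, I would induct by ordering the sets so that $\mathcal{A}_1,\dots,\mathcal{A}_{N-1}$ have the $\theta$-chord property, and then peel off one of these at each step: $\mathrm{SRG}_\theta(\mathcal{A}_1+\mathcal{S})\subset \mathrm{SRG}_\theta(\mathcal{A}_1)+\mathrm{SRG}_\theta(\mathcal{S})$ with $\mathcal{S}=\sum_{i\ge2}\mathcal{A}_i$. This step needs only that $\mathcal{A}_1$ has the chord property, and no property of the partial sum $\mathcal{S}$. Each step again uses only the two-set case, so the induction closes.

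For (ii), the extra ingredient is the splitting of rotation across factors: $e^{-j(\theta_1+\theta_2)}\mathcal{A}_1\mathcal{A}_2=(e^{-j\theta_1}\mathcal{A}_1)(e^{-j\theta_2}\mathcal{A}_2)$. Applying Lemma~\ref{lem:SRG_chord_arc_property}(ii) to the rotated sets $\mathcal{B}_i=e^{-j\theta_i}\mathcal{A}_i$, one of which has the arc property, gives $\mathrm{SRG}(\mathcal{B}_1\mathcal{B}_2)\subset\mathrm{SRG}(\mathcal{B}_1)\mathrm{SRG}(\mathcal{B}_2)$. Multiplying by $e^{j(\theta_1+\theta_2)}=e^{j\theta_1}e^{j\theta_2}$ gives $\mathrm{SRG}_{\theta_1+\theta_2}(\mathcal{A}_1\mathcal{A}_2)\subset\mathrm{SRG}_{\theta_1}(\mathcal{A}_1)\mathrm{SRG}_{\theta_2}(\mathcal{A}_2)$.

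The $N$-fold case again follows by induction, peeling off a factor that has the arc property. Matrix multiplication is not commutative, so the induction must be done on the left or right end. If the set lacking the arc property sits in the middle, I would split the product as $\mathcal{A}_1\cdot(\mathcal{A}_2\cdots\mathcal{A}_N)$, using the arc property of $\mathcal{A}_1$ with angles $\theta_1$ and $\sum_{i\ge2}\theta_i$. This peeling works for any position of the exceptional factor, because at each step only the end factor being peeled off is required to have the arc property.

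The main obstacle I expect is making sure Lemma~\ref{lem:SRG_chord_arc_property} is applicable to the two-set case as stated, i.e.\ that it requires the property of only one factor, on either side. The lemma as quoted says ``either'', so this is fine. The other point needing care is the arc transformation identity: rotating by $e^{-j\theta}$ must map the ray $e^{j\theta}\mathbb{R}^\pm$ to $\mathbb{R}^\pm$, so that $\mathrm{Arc}^+_\theta$ corresponds to $\mathrm{Arc}^+$ and not $\mathrm{Arc}^-$. This holds directly, since $e^{-j\theta}\cdot e^{j\theta}\mathbb{R}^\pm=\mathbb{R}^\pm$.
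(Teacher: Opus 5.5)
Your proposal is correct and follows the paper's proof. Both rotate by $e^{-j\theta}$ (resp.\ $e^{-j\theta_i}$) so that the $\theta$-chord and $\theta_i$-arc properties become the ordinary chord and arc properties, apply Lemma~\ref{lem:SRG_chord_arc_property}, and rotate back. The one difference is that you derive the $N$-fold case by induction from the two-set lemma, peeling off an end factor to respect noncommutativity, whereas the paper applies the two-set lemma directly to $N$ sets without comment.
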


\vspace{3pt} 

\begin{proof}
    We first show (i). Without loss of generality, suppose that $\mathcal{A}_k$ with $k \in \{ 1,2,\dots, N \}$ does not satisfy $\theta$-chord property, while $\mathcal{A}_i$ satisfies $\theta$-chord property for all $i \neq k$. 
    For all $z \in \mathrm{SRG}_\theta(\mathcal{A}_i)$ with $i\neq k$,  
\begin{equation*}
         \mathrm{Chord}_\theta(z) \subset \mathrm{SRG}_\theta(\mathcal{A}_i) = e^{j\theta}\mathrm{SRG}\left(e^{-j\theta}\mathcal{A}_i\right) , 
    \end{equation*} 
    implying 
    \begin{equation*}
        \mathrm{Chord}(ze^{-j\theta}) \subset \mathrm{SRG}\left(e^{-j\theta}\mathcal{A}_i\right). 
    \end{equation*}  
    Note that $z e^{-j\theta} \in \mathrm{SRG}\left(e^{-j\theta}\mathcal{A}_i\right)$, hence $e^{-j\theta}\mathcal{A}_i$ satisfies the chord property for all $i \neq k $. By Lemma \ref{lem:SRG_chord_arc_property}, we have 
    \begin{align*}
        \mathrm{SRG}_\theta \left( \sum_{i = 1}^N \mathcal{A}_i \right) & = e^{j\theta} \mathrm{SRG} \left(  \sum_{i = 1}^N e^{-j\theta} \mathcal{A}_i \right) \\ 
        & \subset e^{j\theta} \sum_{i = 1}^N \mathrm{SRG} \left( e^{-j\theta} \mathcal{A}_i \right)   \!=\! \sum_{i = 1}^N \mathrm{SRG}_\theta(\mathcal{A}_i) . 
    \end{align*} 
    Then we show (ii). Suppose, without loss of generality, that 
    $\mathcal{A}_k$ with $k \in \{ 1,2,\dots, N \}$  
    does not satisfy $\theta_k$-arc property, while $\mathcal{A}_i$ satisfies $\theta_i$-arc property for all $i \neq k$. For the case $i\neq k$, there are two possibilities: 
    $\mathrm{Arc_{\theta_i}^+}(z) \subset \mathrm{SRG}_{\theta_i}(\mathcal{A}_i)$ for all $z \! \in \! \mathrm{SRG}_{\theta_i}(\mathcal{A}_i)$ or $\mathrm{Arc_{\theta_i}^-}(z) \subset \mathrm{SRG}_{\theta_i}(\mathcal{A}_i)$ for all $z \! \in \!  \mathrm{SRG}_{\theta_i}(\mathcal{A}_i)$.  
    Both cases imply 
    $e^{-j\theta_i}\mathcal{A}_i$ satisfies the arc property. Denote $\theta = \sum_{i = 1}^N \theta_i$, then it follows from Lemma \ref{lem:SRG_chord_arc_property} that 
    \begin{align*}
        \mathrm{SRG}_{\theta}\! \left(  \prod_{i = 1}^N \mathcal{A}_i \right)     \! & \!= \! e^{j\theta} \mathrm{SRG} \left( \prod_{i = 1}^N e^{-j\theta_i} \mathcal{A}_i \right) \\ 
        & \! \subset \!  
        \prod_{i = 1}^N e^{j\theta_i} \mathrm{SRG}\left( e^{-j\theta_i}\mathcal{A}_i \right) \!=\! \prod_{i = 1}^N  \mathrm{SRG}_{\theta_i} ( \mathcal{A}_i ) . 
    \end{align*} 
    This completes the proof. 
\end{proof}

When considering the gain and phase of $\theta$-symmetric SRG separately, the gain aspect yields the classical result  
\begin{equation*}
    \overline{\sigma}\left( \sum_{i = 1}^N A_i \right) \leq \sum_{i = 1}^N \overline{\sigma}(A_i) , \quad \overline{\sigma} \left( \prod_{i = 1}^N A_i \right)  \leq \prod_{i = 1}^N \overline{\sigma}(A_i) , 
\end{equation*} 
while the phase aspect leads to the following corollary.  

\begin{corollary}
    Let $A_1,A_2,\dots,A_N\in\mathbb{C}^{n\times n}$ be given, then 
    \begin{enumerate}[(i)]  
        \item 
        for each $\theta \in \mathbb{R}$ such that $\Gamma_\theta(A_i) < \frac{\pi}{2}, i = 1,2,\dots,N$, it follows 
        \begin{equation*}
            \Psi_\theta \left( \sum_{i=1}^N A_i \right)  \subset \left[ \min_{1 \leq i \leq  N } \, \underline{\psi}_\theta(A_i) , \ \max_{1 \leq i \leq N }\, \overline{\psi}_\theta (A_i)  \right] ; 
        \end{equation*} 
        \item for each sequence $\theta_i \in \mathbb{R}, i = 1,2,\dots,N$, it follows 
        \begin{equation*}
            \Psi_{\sum_{i=1}^N \theta_i  } \! \left( \prod_{i = 1}^N A_i \right) \! \subset \! \sum_{i = 1}^N \Psi_{\theta_i}(A_i) .  
        \end{equation*}
    \end{enumerate}
\end{corollary}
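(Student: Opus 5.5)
Both parts can be proved directly from the definitions of $\angle_\theta$ and $\Gamma_\theta$, without the chord/arc hypotheses of Theorem~\ref{thm:sum_product_properties}. This matters because a single matrix generally does not satisfy those properties, so Theorem~\ref{thm:sum_product_properties} cannot be invoked as stated. In both cases $\Psi_\theta(C)=\theta+[-\Gamma_\theta(C),\Gamma_\theta(C)]$ is an interval symmetric about $\theta$. So it suffices to bound the phase spread $\Gamma$ of the sum or product and then translate the bound into the stated interval inclusions.

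For (i), set $\delta=\max_i \Gamma_\theta(A_i)<\frac{\pi}{2}$. Since every interval $\Psi_\theta(A_i)$ is centred at $\theta$, the right-hand interval equals $\theta+[-\delta,\delta]$. So the goal is $\Gamma_\theta(\sum_i A_i)\le\delta$.
\begin{itemize}
\item Fix $x\neq 0$. By definition of $\Gamma_\theta(A_i)$ we have $\mathrm{Re}\langle x,e^{-j\theta}A_ix\rangle\ge\cos\Gamma_\theta(A_i)\,\|x\|\|A_ix\|\ge\cos\delta\,\|x\|\|A_ix\|$. This uses that $\cos$ is decreasing on $[0,\pi]$; it also holds trivially when $A_ix=0$.
\item Summing over $i$ and using $\cos\delta>0$ together with the triangle inequality $\sum_i\|A_ix\|\ge\|\sum_iA_ix\|$ gives $\mathrm{Re}\langle x,e^{-j\theta}\sum_iA_ix\rangle\ge\cos\delta\,\|x\|\,\|\sum_iA_ix\|$.
\item Hence $\angle_\theta(x,\sum_iA_ix)\le\delta$ whenever $\sum_iA_ix\neq0$; when it is zero the angle is $0$ by convention.
\item Taking the supremum over $x$ yields $\Gamma_\theta(\sum_iA_i)\le\delta$.
\end{itemize}
The hypothesis $\delta<\frac{\pi}{2}$ is used exactly once, to multiply the triangle inequality by the positive number $\cos\delta$.

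For (ii), identify $\mathbb{C}^n$ with the real inner product space $\mathbb{R}^{2n}$, with inner product $\mathrm{Re}\langle\cdot,\cdot\rangle$. Then $\angle_\theta(x,y)$ is exactly the ordinary real angle between $x$ and $e^{-j\theta}y$.
\begin{itemize}
\item Fix $x\neq0$ and define $w_N=x$ and $w_{i-1}=e^{-j\theta_i}A_iw_i$ for $i=N,\dots,1$. Then $w_0=e^{-j\sum_i\theta_i}\prod_{i=1}^NA_i\,x$.
\item Each step satisfies $\angle(w_i,w_{i-1})=\angle_{\theta_i}(w_i,A_iw_i)\le\Gamma_{\theta_i}(A_i)$.
\item The real angle is the geodesic metric on the unit sphere of $\mathbb{R}^{2n}$, so it obeys the triangle inequality. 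Chaining the steps gives $\angle_{\sum\theta_i}(x,\prod A_ix)=\angle(w_N,w_0)\le\sum_i\Gamma_{\theta_i}(A_i)$.
\item If some $w_i$ vanishes, then $\prod A_ix=0$ and the angle is $0$ by convention, so the bound still holds.
\item Taking the supremum gives $\Gamma_{\sum\theta_i}(\prod A_i)\le\sum_i\Gamma_{\theta_i}(A_i)$. This is equivalent to $\sum_i\theta_i+[-\Gamma,\Gamma]\subset\sum_i(\theta_i+[-\Gamma_{\theta_i},\Gamma_{\theta_i}])$, which is the claim.
\end{itemize}
If the sum of spreads exceeds $\pi$, the inclusion is trivially true since $\Gamma\le\pi$.

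The main point needing care is the spherical triangle inequality in $\mathbb{R}^{2n}$. It is standard: one can reduce to the three-dimensional span of the vectors and use the spherical law of cosines, or cite the known fact that $\arccos$ of the normalized real inner product is a metric on the sphere. Everything else is bookkeeping of the rotations $e^{-j\theta_i}$, which commute with $\mathrm{Re}\langle\cdot,\cdot\rangle$-isometries.
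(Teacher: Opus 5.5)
Your proof is correct, but it follows a different route from the paper. The paper gives no separate argument; it presents the corollary as the phase aspect of Theorem~\ref{thm:sum_product_properties}, i.e., of the rotated form of Lemma~\ref{lem:SRG_chord_arc_property}. As you rightly note, that theorem cannot be applied to the singletons $\{A_i\}$. Making the paper's route rigorous therefore requires first enlarging each $A_i$ to a set that has the $\theta$-chord (resp.\ $\theta_i$-arc) property without increasing its phase spread. Suitable choices are the matrices whose $\theta$-symmetric SRG lies in the closed sector $\{0\}\cup\{z:|\angle z-\theta|\le\Gamma_\theta(A_i)\}$, or the hulls $\mathcal{M}_{\theta}^{\kappa}(A_i)$ of Lemma~\ref{lem:theta_chord_arc_hull}. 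One then reads the phase off a Minkowski sum of nested sectors, where $\Gamma_\theta(A_i)<\frac{\pi}{2}$ is what makes the sectors convex and the chord property hold. For products one reads it off a product of sectors, which needs some care with $2\pi$ wrap-around. You instead argue directly at the vector level in $\mathbb{R}^{2n}$. Your part (i) is the convexity of the circular cone of half-angle $\delta<\frac{\pi}{2}$ about $e^{j\theta}x$. Your part (ii) is the triangle inequality for the great-circle metric, essentially the same spherical-geometry fact that drives the product rule in Lemma~\ref{lem:SRG_chord_arc_property}(ii). Your version is shorter and self-contained: it needs no auxiliary matrix sets, no chord/arc bookkeeping and no phase unwrapping. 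The paper's version instead shows the corollary as the phase shadow of the joint gain--phase inclusions, parallel to the $\overline{\sigma}$ inequalities stated just before it.
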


\subsection{Nonsingularity of Return Difference Matrices} 

Consider the parallel and cascade block diagrams under unit negative feedback shown in Fig.~\ref{fig:parallel_cascade_diagram}. In the parallel case, the return ratio matrix is given by $\sum_{i=1}^N A_i$, whereas in the cascade case, it is given by $\prod_{i=1}^N A_i$.  
Accordingly, the return difference matrices are $I+\sum_{i=1}^N A_i$ and $I+\prod_{i=1}^N A_i$, respectively. We are interested in the nonsingularity of these sum-type and product-type return difference matrices, which is important in the stability analysis of feedback systems.

\begin{figure}[htb]  
\centering 
\subfigure[]{
\begin{minipage}[t]{0.47\linewidth} 
\setlength{\unitlength}{0.99mm} 
\centering 
\begin{picture}(42,21)(0,10)
    \put(0,24){\vector(1,0){4}} 
    \put(6,24){\circle{4}} 
    \put(6,24){\makebox(0,0){$\scriptstyle \Sigma$}} 
    \put(4,20){\makebox(0,0){$-$}} 
    \put(11,24){\line(0,1){6}} 
    \put(11,24){\line(0,-1){6}} 
    \put(8,24){\vector(1,0){8}} 
    \put(11,18){\vector(1,0){5}} 
    \put(11,30){\vector(1,0){5}} 
    \put(16,27){\framebox(6,6){$A_1$}} 
    \put(16,15){\framebox(6,6){$A_N$}} 
    \put(19,25){\makebox(0,0){$\vdots$}} 
    \put(22,30){\line(1,0){7}} 
    \put(22,18){\line(1,0){7}}
    \put(22,24){\vector(1,0){5}} 
    \put(29,30){\vector(0,-1){4}} 
    \put(29,18){\vector(0,1){4}} 
    \put(29,24){\circle{4}} 
    \put(29,24){\makebox(0,0){$\scriptstyle \Sigma$}} 
    \put(31,24){\vector(1,0){7}} 
    \put(34,24){\line(0,-1){13}} 
    \put(34,11){\line(-1,0){28}} 
    \put(6,11){\vector(0,1){11}}
\end{picture}   
\end{minipage} 
}\hfill%
\subfigure[]{
\begin{minipage}[t]{0.47\linewidth}
\setlength{\unitlength}{0.99mm} 
\centering 
\begin{picture}(42,21)(0,10)
    \put(0,24){\vector(1,0){4}}  
    \put(6,24){\circle{4}} 
    \put(6,24){\makebox(0,0){$\scriptstyle \Sigma$}} 
    \put(4,20){\makebox(0,0){$-$}} 
    \put(8,24){\vector(1,0){3}} 
    \put(11,21){\framebox(6,6){$A_N$}} 
    \put(17,24){\vector(1,0){3}} 
    \put(23,24){\makebox(0,0){$\cdots$}} 
    \put(26,24){\vector(1,0){3}}  
    \put(29,21){\framebox(6,6){$A_1$}} 
    \put(35,24){\vector(1,0){6}} 
    \put(38,24){\line(0,-1){13}} 
    \put(38,11){\line(-1,0){32}} 
    \put(6,11){\vector(0,1){11}}
\end{picture}   
\end{minipage} 
}  
\caption{(a) Parallel and (b) cascade interconnections under unit negative feedback. } 
\label{fig:parallel_cascade_diagram}  
\end{figure} 

Based on the subadditive and submultiplicative properties of $\theta$-symmetric SRG, we can derive graphical criteria for the nonsingularity of the sum-type and product-type return difference matrices.

\begin{theorem}          \label{thm:matrix_nonsingular_condition} 
    Let $A_1,A_2,\dots,A_N \in \mathbb{C}^{n\times n}$ be given, then 
    \begin{enumerate}[(i)] 
        \item $I+\sum_{i=1}^N A_i$ is nonsingular if there exists $\theta \in \mathbb{R}$ such that 
        \begin{equation}    \label{eq:matrix_nonsingular_sum_condition}
            -1 \notin \sum_{i = 1}^N \mathrm{SRG}_\theta(\mathcal{A}_i) , 
        \end{equation} 
        where each $\mathcal{A}_i$ is an arbitrary matrix set containing $A_i$, and at least $N-1$ of the $\mathcal{A}_i$ satisfy $\theta$-chord property; 
        \item $I+\prod_{i=1}^N A_i$ is nonsingular if there exists $\theta_i     \in \mathbb{R}, i = 1,2,\dots, N$, such that 
            \begin{equation}      \label{eq:matrix_nonsingular_product_condition} 
                 -1 \notin  \prod_{i=1}^N \mathrm{SRG}_{\theta_i}(\mathcal{A}_i),  
            \end{equation} 
    where each $\mathcal{A}_i$ is an arbitrary matrix set containing $A_i$, and at least $N-1$ of the $\mathcal{A}_i$ satisfy $\theta_i$-arc property. 
    \end{enumerate} 
\end{theorem}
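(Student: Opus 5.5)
Both parts will be proved by contraposition, combining the spectrum containment of Lemma~\ref{lem:properties_theta_SRG}(\ref{item:theta_SRG_property_1}) with the subadditive and submultiplicative inclusions of Theorem~\ref{thm:sum_product_properties}.

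For (i), suppose $I+\sum_{i=1}^N A_i$ is singular. Then $-1\in\Lambda(\sum_{i=1}^N A_i)$. By spectrum containment, $-1\in\mathrm{SRG}_\theta(\sum_{i=1}^N A_i)$ for every $\theta\in\mathbb{R}$. Since $A_i\in\mathcal{A}_i$, we have $\sum_i A_i\in\sum_i\mathcal{A}_i$, where the latter denotes the Minkowski set sum $\{\sum_i C_i: C_i\in\mathcal{A}_i\}$. Because $\mathrm{SRG}_\theta$ of a set is the union over its members, this gives
\[
\mathrm{SRG}_\theta\Bigl(\sum_i A_i\Bigr)\subset\mathrm{SRG}_\theta\Bigl(\sum_i\mathcal{A}_i\Bigr).
\]
Under the hypothesis that at least $N-1$ of the $\mathcal{A}_i$ satisfy the $\theta$-chord property, Theorem~\ref{thm:sum_product_properties}(i) bounds the right-hand side by $\sum_i\mathrm{SRG}_\theta(\mathcal{A}_i)$. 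Hence $-1\in\sum_i\mathrm{SRG}_\theta(\mathcal{A}_i)$, which contradicts \eqref{eq:matrix_nonsingular_sum_condition}.

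For (ii), the argument is the same. Set $\theta=\sum_i\theta_i$ and suppose $-1\in\Lambda(\prod_i A_i)$. Spectrum containment gives $-1\in\mathrm{SRG}_\theta(\prod_i A_i)$. Monotonicity then gives
\[
\mathrm{SRG}_\theta\Bigl(\prod_i A_i\Bigr)\subset\mathrm{SRG}_\theta\Bigl(\prod_i\mathcal{A}_i\Bigr),
\]
where $\prod_i\mathcal{A}_i$ is the set of ordered products. By Theorem~\ref{thm:sum_product_properties}(ii), this is contained in $\prod_i\mathrm{SRG}_{\theta_i}(\mathcal{A}_i)$, which contradicts \eqref{eq:matrix_nonsingular_product_condition}.

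The one point needing care is the ordering and indexing in (ii). The product in the theorem must use the same order as in Theorem~\ref{thm:sum_product_properties}(ii), namely $A_1A_2\cdots A_N$. The scalar product of the SRG sets is commutative, so the order does not matter there, but the angle entering spectrum containment must be exactly $\sum_i\theta_i$ for the inclusion to apply. Since Lemma~\ref{lem:properties_theta_SRG}(\ref{item:theta_SRG_property_1}) holds for every angle, this choice is always available and no further argument is needed.
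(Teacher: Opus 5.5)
Your proposal is correct and follows essentially the same route as the paper: spectrum containment, monotonicity of $\mathrm{SRG}_\theta$ under set inclusion, and the subadditive and submultiplicative inclusions of Theorem~\ref{thm:sum_product_properties}, with the angle $\sum_i\theta_i$ in part (ii). The only difference is presentational. You argue by contraposition, while the paper writes the chain $\Lambda(\cdot)\subset\mathrm{SRG}_\theta(\cdot)\subset\cdots\subset\sum_i\mathrm{SRG}_\theta(\mathcal{A}_i)$ directly and concludes that $-1$ is not an eigenvalue.
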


\begin{proof}
    For (i), it follows from Theorem \ref{thm:sum_product_properties} that 
    \begin{align*}
        \Lambda\left(\sum_{i = 1}^N  A_i \right) & \subset \mathrm{SRG}_\theta \left( \sum_{i=1}^N A_i \right) \\ 
        & \subset \mathrm{SRG}_\theta \left( \sum_{i=1}^N \mathcal{A}_i \right) \subset \sum_{i =1}^N\mathrm{SRG}_\theta \left(  \mathcal{A}_i \right). 
    \end{align*} 
    In view of \eqref{eq:matrix_nonsingular_sum_condition}, we have $-1$ is not an eigenvalue of $\sum_{i=1}^NA_i$. Hence $I + \sum_{i=1}^NA_i$ is nonsingular. \\ 
    For (ii), by Theorem \ref{thm:sum_product_properties}, we have 
    \begin{align*}
        \Lambda\left(\prod_{i=1}^N A_i\right) &  
        \! \subset \! \mathrm{SRG}_{\sum_{i=1}^N \theta_i} \! \left( \prod_{i = 1}^N A_i \right) \\ 
        & \! \subset \! \mathrm{SRG}_{\sum_{i=1}^N  \theta_i} \!  \left( \prod_{i=1}^N \mathcal{A}_i \right) \!  
        \subset\!  \prod_{i = 1}^N \mathrm{SRG}_{\theta_i} ( \mathcal{A}_i ) . 
    \end{align*}  
     By \eqref{eq:matrix_nonsingular_product_condition}, we have $-1$ is not an eigenvalue of $\prod_{i=1}^N A_i$. Hence $I+\prod_{i=1}^N A_i$ is nonsingular, which completes the proof. 
\end{proof}

Fixing $\theta = 0, \theta_i = 0$ in Theorem \ref{thm:matrix_nonsingular_condition}, we obtain the following corollary, which coincides with the $0$-symmetric SRG result \cite{Chaffey2023Graphical}. The proof follows directly from Theorem \ref{thm:matrix_nonsingular_condition} and is omitted. 

\begin{corollary}                   \label{coro:nonsingularity_0_SRG}
    Let $A_1,A_2,\dots,A_N \in \mathbb{C}^{n\times n}$ be given, then 
    \begin{enumerate}[(i)]
        \item $I+\sum_{i=1}^NA_i$ is nonsingular if 
        $-1 \notin \sum_{i=1}^N \mathrm{SRG} (\mathcal{A}_i) , $ 
        where each $\mathcal{A}_i$ is an arbitrary matrix set containing $A_i$, and at least $N-1$ of the $\mathcal{A}_i$ satisfy the chord property; 
        \item $I+\prod_{i=1}^NA_i$ is nonsingular if 
        $-1 \notin  \prod_{i=1}^N \mathrm{SRG}(\mathcal{A}_i) , $ 
        where each $\mathcal{A}_i$ is an arbitrary matrix set containing $A_i$, and at least $N-1$ of the $\mathcal{A}_i$ satisfy the arc property. 
    \end{enumerate}
\end{corollary}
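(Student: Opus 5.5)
The plan is to obtain Corollary \ref{coro:nonsingularity_0_SRG} as the special case $\theta = 0$ (respectively $\theta_1 = \cdots = \theta_N = 0$) of Theorem \ref{thm:matrix_nonsingular_condition}. This requires only two identifications: that the $0$-symmetric quantities coincide with the classical SRG, and that the $0$-chord and $0$-arc properties coincide with the chord and arc properties of Section \ref{section:preliminaries}.

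First, by \eqref{eq:definition_theta_SRG_matrix} with $\theta = 0$ we have $\mathrm{SRG}_0(C) = \mathrm{SRG}(C)$ for every matrix $C$. Taking unions over a set gives $\mathrm{SRG}_0(\mathcal{A}_i) = \mathrm{SRG}(\mathcal{A}_i)$. Hence the hypotheses $-1 \notin \sum_i \mathrm{SRG}(\mathcal{A}_i)$ and $-1 \notin \prod_i \mathrm{SRG}(\mathcal{A}_i)$ are exactly \eqref{eq:matrix_nonsingular_sum_condition} and \eqref{eq:matrix_nonsingular_product_condition} with zero angles. Note that for the product, $\sum_i \theta_i = 0$.

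Second, I will check the geometric notions. For $\theta = 0$, the reflection $\overline{z} e^{2j\theta}$ reduces to $\overline{z}$, so $\mathrm{Chord}_0(z) = \mathrm{Chord}(z)$. The rays $e^{j0}\mathbb{R}^{\pm}$ are $\mathbb{R}^{\pm}$, so $\mathrm{Arc}_0^{\pm}(z) = \mathrm{Arc}^{\pm}(z)$. Combining this with $\mathrm{SRG}_0 = \mathrm{SRG}$ on sets, a matrix set satisfies the $0$-chord (respectively $0$-arc) property if and only if it satisfies the chord (respectively arc) property. Therefore the assumption that at least $N-1$ of the $\mathcal{A}_i$ satisfy the chord or arc property is precisely the hypothesis of Theorem \ref{thm:matrix_nonsingular_condition}(i) or (ii) at zero angles.

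Invoking Theorem \ref{thm:matrix_nonsingular_condition} then yields nonsingularity of $I+\sum_i A_i$ and of $I+\prod_i A_i$. There is no real obstacle here. The only point needing care is the arc identification, since the $\theta$-arc property allows either the $+$ or the $-$ branch uniformly over the set; at $\theta = 0$ this matches the disjunction in the original definition verbatim.
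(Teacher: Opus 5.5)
Your proposal is correct and is the paper's own route: the paper obtains this corollary by fixing $\theta = 0$ and $\theta_i = 0$ in Theorem~\ref{thm:matrix_nonsingular_condition}. Your identifications $\mathrm{SRG}_0 = \mathrm{SRG}$, $\mathrm{Chord}_0 = \mathrm{Chord}$ and $\mathrm{Arc}_0^{\pm} = \mathrm{Arc}^{\pm}$ (including the either/or form of the arc property) are exactly what makes that specialization valid.
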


The following example demonstrates the superiority of Theorem \ref{thm:matrix_nonsingular_condition} over Corollary \ref{coro:nonsingularity_0_SRG}, revealing how the parameter flexibility $\theta$ provides a crucial degree of freedom to reduce conservatism.
\begin{example}
    We examine the sum-type and product-type cases separately as follows. 
    \begin{enumerate}[(i)]
        \item For the sum-type case, consider 
        $ A_1\!=\!A_2\!=\!A_3\!=\! \mathrm{diag} \{ 
            e^{j\frac{\pi}{3}},  e^{j\frac{2\pi}{3}} \} $.  
        One can readily verify that Corollary \ref{coro:nonsingularity_0_SRG} fails to determine the nonsingularity of $I+\sum_{i=1}^3 A_i $. In contrast, by choosing $\theta = \frac{\pi}{2}$, Theorem \ref{thm:matrix_nonsingular_condition} remains applicable and successfully guarantees its nonsingularity. 
        \item For the product-type case, consider 
        \begin{equation*}
            A_1 \! =\! \begin{bmatrix}
                e^{-j\frac{\pi}{6}} & 0 \\ 0 & e^{j\frac{\pi}{6}}
            \end{bmatrix}, \  A_2 \! = \! \begin{bmatrix}
                e^{j\frac{\pi}{3}} & 0 \\ 0 & e^{j\frac{2\pi}{3}}
            \end{bmatrix}, \  A_3 \! =\!  \overline{A}_2. 
        \end{equation*} Similarly, Corollary \ref{coro:nonsingularity_0_SRG} fails to determine the nonsingularity of $I+ \prod_{i=1}^3A_i$. However, Theorem \ref{thm:matrix_nonsingular_condition} remains effective to establish the nonsingularity by choosing $\theta_1 = 0, \theta_2 = \frac{\pi}{2}, \theta_3 = - \frac{\pi}{2}$. 
    \end{enumerate}  
\end{example}

\begin{remark} 
    Regarding Theorem~\ref{thm:matrix_nonsingular_condition} (ii), a critical open question is how to select the optimal $\vartheta = [\theta_1, \dots, \theta_N]^T \in \mathbb{R}^N$ that maximizes the distance between $-1$ and the product of the $\theta$-symmetric SRGs. This leads to the optimization problem:  
\begin{equation}                     \label{eq:find_optimal_theta}
    \vartheta^\star = \arg\max_{\vartheta \in \mathbb{R}^N} \min_{s_i \in \mathrm{SRG}_{\theta_i}(\mathcal{A}_i) } \left|1+\prod_{i=1}^N s_i \right| . 
\end{equation} 
This problem is analogous in spirit to the block-diagonal $D$-scaling used in structured singular value analysis \cite{Zhou1996Robust}, which minimizes $\overline{\sigma}(DMD^{-1})$ to tighten the upper bound of $\mu_\Delta(M)$. It is also closely related to the phase-based scaling problem investigated in \cite{Chen2026Cyclic}.  
Note that choosing $\theta_i$ to minimize the phase spread of each $\mathrm{SRG}_{\theta_i}(\mathcal{A}_i)$ separately does not, in general, solve the problem \eqref{eq:find_optimal_theta}. In fact, \eqref{eq:find_optimal_theta} 
is a mixed gain-phase optimization problem capturing the interactions among different $\mathrm{SRG}_{\theta_i}(\mathcal{A}_i)$, which is the subject of future work.  
\end{remark}

When gain and phase are considered separately, we obtain the following corollaries. The proofs are omitted for brevity. 

\begin{corollary}           \label{cor:separate_gain_phase_sum_prod_matrix}
     Let $A_1,A_2,\dots,A_N \in \mathbb{C}^{n\times n}$ be given, then 
     \begin{enumerate}[(i)]
         \item $I+\sum_{i=1}^N A_i$ is nonsingular if either $  \sum_{i = 1}^N \overline{\sigma}(A_i) < 1 $ holds, or there exists $\theta  \in \mathbb{R} $ such that $\Gamma_\theta(A_i) < \frac{\pi}{2}, i = 1,2,\dots,N$, and 
         \begin{equation*}
            \left[ \min_{1 \leq i \leq  N } \, \underline{\psi}_\theta(A_i) , \ \max_{1 \leq i \leq N }\, \overline{\psi}_\theta (A_i)  \right]  \subset (-\pi,\pi). 
        \end{equation*} 
        \item $I+\prod_{i=1}^N A_i$ is nonsingular if either $\prod_{i = 1}^N \overline{\sigma}(A_i) < 1 $ holds, or there exists $\theta_i \in \mathbb{R}, i = 1,2,\dots, N$, such that 
        \begin{equation*}
            \sum_{i=1}^N \Psi_{\theta_i}(A_i) \subset (-\pi,\pi). 
        \end{equation*} 
     \end{enumerate} 
\end{corollary}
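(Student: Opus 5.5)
The plan is to reduce both statements to Theorem \ref{thm:matrix_nonsingular_condition}, treating the gain and phase hypotheses separately. For each case I will construct matrix sets $\mathcal{A}_i \ni A_i$ that satisfy the required $\theta$-chord or $\theta$-arc property. The sum or product of their $\theta$-symmetric SRGs should then lie in a region that visibly excludes $-1$.

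\emph{Gain case.}
\begin{itemize}
\item[(i)] Take $\mathcal{A}_i = \{ B : \overline{\sigma}(B) \le \overline{\sigma}(A_i)\}$. Since $\mathrm{SRG}_\theta(\mu C) = \mu\, \mathrm{SRG}_\theta(C)$ for real $\mu$, and by the rotation relation \eqref{eq:definition_theta_SRG_matrix}, $\mathrm{SRG}_\theta(\mathcal{A}_i)$ is the closed disk of radius $\overline{\sigma}(A_i)$ centered at $0$. Every disk centered at the origin contains all chords and arcs between $z$ and its reflection $\overline{z}e^{2j\theta}$, so each $\mathcal{A}_i$ has both the $\theta$-chord and the $\theta$-arc property. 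The sum of these disks is the disk of radius $\sum_i \overline{\sigma}(A_i) < 1$, so $-1$ is excluded and Theorem \ref{thm:matrix_nonsingular_condition}(i) applies.
\item[(ii)] The same sets work with any $\theta_i$. The product of the disks is the disk of radius $\prod_i \overline{\sigma}(A_i) < 1$, and Theorem \ref{thm:matrix_nonsingular_condition}(ii) applies. Alternatively, in both gain cases one can simply bound $\rho\le\overline{\sigma}$ directly.
\end{itemize}

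\emph{Phase case.} Here I will use cone-shaped sets.
\begin{itemize}
\item[(ii)] Let $\mathcal{A}_i$ be the set of matrices $B$ with $\Psi_{\theta_i}(B)\subset \Psi_{\theta_i}(A_i)$. Then $\mathrm{SRG}_{\theta_i}(\mathcal{A}_i)$ is contained in the closed sector $\{re^{j\phi}: r\ge0,\ \phi\in\Psi_{\theta_i}(A_i)\}$. I expect it to equal this sector (minus possibly the origin): taking $B = re^{j\theta_i}\,\mathrm{diag}(e^{j\alpha},e^{-j\alpha},\dots)$ with $|\alpha|\le\Gamma_{\theta_i}(A_i)$ realizes every angle in the segment and every magnitude. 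Such a symmetric sector is closed under $\mathrm{Arc}^+_{\theta_i}$, which gives the $\theta_i$-arc property. The product of the sectors is the sector with angular range $\sum_i\Psi_{\theta_i}(A_i)\subset(-\pi,\pi)$. This sector excludes the direction $\pi$, hence excludes $-1$, and Theorem \ref{thm:matrix_nonsingular_condition}(ii) applies.
\item[(i)] With a common $\theta$ and $\Gamma_\theta(A_i)<\pi/2$, define each $\mathcal{A}_i$ by the enclosing symmetric interval $\theta+[-\Gamma,\Gamma]$, where $\Gamma=\max_i\Gamma_\theta(A_i)$. Each such set is a convex symmetric sector of half-angle $<\pi/2$ about $e^{j\theta}\mathbb{R}^+$. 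A convex set symmetric about the axis contains its chords, which gives the $\theta$-chord property. The Minkowski sum of identical convex cones is the same cone, with angular range $\theta+[-\Gamma,\Gamma] = [\min_i\underline{\psi}_\theta(A_i),\max_i\overline{\psi}_\theta(A_i)]$. By hypothesis this range lies in $(-\pi,\pi)$, so the cone misses $-1$, and Theorem \ref{thm:matrix_nonsingular_condition}(i) finishes the argument.
\end{itemize}

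The main obstacle is checking that these cone sets genuinely satisfy the chord and arc properties in the stated form, i.e., that their SRGs are exactly the full symmetric sectors rather than strict subsets lacking some chords. I would verify this with the explicit diagonal unitary-times-scalar constructions above. In case (i), the argument also needs the half-angle to be below $\pi/2$ so that the sector is convex.
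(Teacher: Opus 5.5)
Your proposal is correct and matches the argument the paper leaves implicit. That argument specializes Theorem~\ref{thm:matrix\_nonsingular\_condition} to norm-ball and phase-cone over-approximations; equivalently, it combines the classical $\overline{\sigma}$ bounds and the phase corollary of Theorem~\ref{thm:sum\_product\_properties} with spectral containment. The obstacle you flag is immediate: the scalar matrices $zI$, with $\mathrm{SRG}\_\theta(zI)=\{z,\overline{z}e^{2j\theta}\}$, already fill out the entire disk or sector, so the chord and arc properties hold exactly as you claim.
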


\section{Robust Nonsingularity of $I+\prod_{i=1}^N A_i $} 
\label{section:robust_nonsingularity_matrices}

For Theorem \ref{thm:matrix_nonsingular_condition}, a natural question is whether the sufficient conditions are also necessary in the presence of uncertainties. In this part, we focus on the product-type and derive necessary and sufficient conditions for the nonsingularity of $I+ \prod_{i=1}^NA_i$ when some matrices are uncertain.  The sum-type $I+\sum_{i=1}^N A_i$ can be treated similarly and is relatively simpler since it only requires a common $\theta$ parameter. 

Since Theorem \ref{thm:matrix_nonsingular_condition} requires at least $N-1$ matrix sets to satisfy the $\theta$-chord or $\theta_i$-arc properties, we focus on a setting with $N-1$ uncertain matrices and one nominal matrix. Our goal is to derive necessary and sufficient conditions on the nominal matrix such that the nonsingularity conditions hold for all admissible choices of the $N-1$ uncertain matrices. Without loss of generality, assume the nominal matrix is $A_k$ with $k \in \{1,2,\dots,N\}$. For the index $k$, denote its complement as $\mathcal{K}' = \{ 1,2,\dots,N\} \!\setminus\! \{ k \}$.

We begin by characterizing the uncertainty sets. For given $0\leq\beta-\alpha\leq\pi$ and $\gamma>0$, denote $\theta = \frac{1}{2}(\beta+\alpha), \Gamma = \frac{1}{2}(\beta-\alpha)$ and define the following region 
\begin{align*}
    & \mathcal{R}[\alpha,\beta,\gamma] \! = \! \bigg\{  r e^{j(\theta+\delta)} \in\mathbb{C} : \\ 
    & \quad r \! \leq \!  \frac{\gamma}{1+\sin\Gamma} \left(  \cos\delta \! +\! \sqrt{\sin^2\Gamma - \sin^2\delta} \right) , \delta \! \in \! [-\Gamma,\Gamma]   \bigg\}  .  
\end{align*}
An illustration of $\mathcal{R}[\alpha,\beta,\gamma]$ is shown in Fig.~\ref{subfig:R_region}. When $\alpha = -\beta$, we write $\mathcal{R}[-\beta,\beta,\gamma]$ simply as $\mathcal{R}[\beta,\gamma]$.

\begin{figure}[htb] 
\centering 
\subfigure[]{
\begin{minipage}[t]{0.46\linewidth}
\centering 
\begin{tikzpicture}
\begin{scope}  
\draw [white, fill = gray, opacity = 0.5] (0.8,1.2) -- ++(-15:1.2) arc (-105:155:1) -- cycle; 
\draw [dashed] ($(0.8,1.2)+(25:1.555)$) circle(1); 
\draw (0.8,1.2) -- ++(65:2.6);
\draw (0.8,1.2) -- ++(-15:2.6); 
\draw [blue] (0.8,1.2) -- ++(25:3.0); 
\draw [blue] (0.8,1.2) -- ++(-155:0.4); 
\draw[->] (1.05,1.2) arc(0:65:0.25); 
\draw[->] (1.53,1.2) arc(0:25:0.73); 
\draw[->] (1.8,1.2) arc(0:-15:1);  
\end{scope} 
\draw[-latex, thick](0.3,1.2)--(4,1.2); 
\draw[-latex, thick](0.8,0.2)--(0.8,4); 
\draw [red, fill=red] ($(0.8,1.2)+(25:2.555)$) circle(0.8pt); 
\node[right] at (3.09,2.15){$\gamma e^{j\theta}$}; 
\node[below] at (3.8,1.2){$\mathrm{Re}$}; 
\node[left] at(0.8,3.8){$\mathrm{Im}$}; 
\node[left] at(0.8,1.4){$0$}; 
\node[right] at (0.88,1.55){$\beta$}; 
\node[right] at (1.48,1.38){$\theta$}; 
\node[below] at (2,1.28){$\alpha$};   
\end{tikzpicture} 
\label{subfig:R_region}
\end{minipage} 
}
\subfigure[]{
\begin{minipage}[t]{0.46\linewidth}
\centering 
\begin{tikzpicture}
\begin{scope}  
\draw [white, fill = gray, opacity = 0.5] (0.8,1.2) -- ++(-15:2.5) arc (-15:65:2.5) -- cycle; 
\draw (0.8,1.2) -- ++(65:2.9);
\draw (0.8,1.2) -- ++(-15:2.9); 
\draw [blue] (0.8,1.2) -- ++(25:3.0); 
\draw [blue] (0.8,1.2) -- ++(-155:0.4); 
\draw[->] (1.2,1.2) arc(0:65:0.4); 
\draw[->] (1.4,1.2) arc(0:25:0.6); 
\draw[->] (1.5,1.2) arc(0:-15:0.7);  
\end{scope} 
\draw[-latex, thick](0.3,1.2)--(4,1.2); 
\draw[-latex, thick](0.8,0.2)--(0.8,4); 
\draw [red, fill=red] ($(0.8,1.2)+(25:2.5)$) circle(0.8pt); 
\node[right] at (3.07,2.15){$\gamma e^{j\theta}$}; 
\node[below] at (3.8,1.2){$\mathrm{Re}$}; 
\node[left] at(0.8,3.8){$\mathrm{Im}$}; 
\node[left] at(0.8,1.4){$0$}; 
\node[right] at (1.03,1.65){$\beta$}; 
\node[right] at (1.4,1.38){$\theta$}; 
\node[below] at (1.75,1.28){$\alpha$};    
\end{tikzpicture} 
\label{subfig:S_region} 
\end{minipage} 
} 
\vspace{-0.5em}  
\caption{Illustrations of (a) {\unboldmath$\mathcal{R}[\alpha,\beta,\gamma]$} (gray region), and (b) {\unboldmath$\mathcal{S}[\alpha,\beta,\gamma]$} (gray region), where {\unboldmath $\theta = \frac{1}{2}(\beta+\alpha)$} and the red point represents the complex scalar {\unboldmath $\gamma e^{j\theta}$}.}    
\label{fig:R_and_S_region} 
\vspace{-0.5em} 
\end{figure} 

Accordingly, graphically define the matrix set via its canonical SRG as 
\begin{equation*}               
    \mathcal{M}[\alpha,\beta,\gamma] = \{ C\in\mathbb{C}^{n\times n}: \mathrm{SRG}_{\theta^\star}(C) \subset \mathcal{R}[\alpha,\beta,\gamma] \} . 
\end{equation*} 
Note that for different elements in $\mathcal{M}[\alpha,\beta,\gamma]$, the corresponding phase centers $\theta^\star$ of the canonical SRG are generally different. Alternatively, by prescribing a uniform phase center, we can define another matrix set as 
\begin{align*}
    \mathcal{M}_c[\alpha,\beta,\gamma] \! = \! \{  C \! \in \!  \mathbb{C}^{n\times n} \!: \! \mathrm{SRG}_{\frac{1}{2}(\alpha+\beta)}(C) \! \subset \! \mathcal{R}[\alpha,\beta,\gamma] \} , 
\end{align*} 
The following result gives a unified necessary and sufficient condition for the nonsingularity of $I+\prod_{i=1}^NA_i$ under these two uncertainty descriptions.  

\begin{theorem}                     \label{thm:matrix_iff_condition} 
    Let $0\leq \beta_i - \alpha_i\leq \pi, \gamma_i >0, i \in  \mathcal{K}'$, be given, then $I+\prod_{i=1}^NA_i$ is nonsingular for all $A_i \in \mathcal{M}[\alpha_i,\beta_i,\gamma_i], i \in \mathcal{K}'$, or respectively, for all $  A_i \in \mathcal{M}_c[\alpha_i,\beta_i,\gamma_i], i \in \mathcal{K}'$, if and only if there exists $\theta\in\mathbb{R}$ such that 
    \begin{equation}                    \label{eq:matrix_nonsingularity_iff_condition}
        -1 \notin  \mathrm{SRG}_{\theta} (A_k) \prod_{i \in \mathcal{K}'} \mathcal{R} [\alpha_i,\beta_i,\gamma_i] . 
    \end{equation} 
\end{theorem}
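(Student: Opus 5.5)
The plan is to prove sufficiency through Theorem~\ref{thm:matrix_nonsingular_condition}(ii) and necessity by an explicit construction of destabilizing uncertainties.

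\emph{Sufficiency.} Assume \eqref{eq:matrix_nonsingularity_iff_condition} holds for some $\theta$. Set $\theta_k=\theta$ and $\theta_i=\frac{1}{2}(\alpha_i+\beta_i)$ for $i\in\mathcal{K}'$. Put $\mathcal{A}_k=\{A_k\}$ and $\mathcal{A}_i=\mathcal{M}_c[\alpha_i,\beta_i,\gamma_i]$.
\begin{enumerate}[(a)]
\item By definition, $\mathrm{SRG}_{\theta_i}(\mathcal{A}_i)\subset\mathcal{R}[\alpha_i,\beta_i,\gamma_i]$.
\item The region $\mathcal{R}$ is a lens-like set symmetric about $e^{j\theta_i}\mathbb{R}$ and star-shaped in the radial sense.
\item I will show that $\mathcal{M}_c$ has the $\theta_i$-arc property, using $\mathrm{Arc}^+_{\theta_i}$. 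Given a point $re^{j(\theta_i+\delta)}$ of $\mathrm{SRG}_{\theta_i}(\mathcal{M}_c)$, each point $re^{j(\theta_i+\delta')}$ with $|\delta'|\le|\delta|$ should be realized by a $2\times 2$ normal matrix with eigenvalues placed on the same circle. The key estimate is that the bound on $r$ in the definition of $\mathcal{R}$ decreases in $|\delta|$. Hence if $re^{j(\theta_i+\delta)}$ lies in $\mathcal{R}$, so does the whole arc through angles $|\delta'|\le|\delta|$.
\item Theorem~\ref{thm:matrix_nonsingular_condition}(ii) then gives nonsingularity for all $A_i\in\mathcal{M}_c$.
\end{enumerate}
For $\mathcal{M}$ the phase centers vary with the matrix. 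I will show $\mathcal{M}[\alpha_i,\beta_i,\gamma_i]\subset\mathcal{M}_c[\alpha_i,\beta_i,\gamma_i]$, so the $\mathcal{M}$ case follows from the $\mathcal{M}_c$ case. The idea is that $\mathrm{SRG}_{\theta^\star}(C)\subset\mathcal{R}$ forces the canonical phase center to be $\frac{1}{2}(\alpha_i+\beta_i)$. If that argument fails, I will instead use the disk characterization below to pass to $\theta_i$.

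\emph{Link between $\mathcal{R}$ and disks.} I expect $\mathcal{R}[\alpha,\beta,\gamma]$ to be exactly the intersection of the disk centered at $\frac{\gamma}{1+\sin\Gamma}e^{j\theta}$ with radius $\frac{\gamma\sin\Gamma}{1+\sin\Gamma}$ with the sector $\theta+[-\Gamma,\Gamma]$. This follows by solving the quadratic in $r$ obtained from the disk boundary. By Proposition~\ref{prop:theta_segmental_phase_LMI_graphical_interpretation}(iv), this disk is $\mathcal{D}[\theta,\mu]$ with $\delta=\Gamma$ and $\mu=\frac{\gamma\cos\Gamma}{1+\sin\Gamma}$. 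So membership in $\mathcal{M}_c$ is equivalent to $\|\frac{\cos\Gamma}{\mu}e^{-j\theta}C-I\|\le\sin\Gamma$. This is the norm form from Theorem~\ref{thm:segmental_phase_norm_minimization}, and it makes the set convenient for constructions.

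\emph{Necessity, the main obstacle.} Suppose that for every $\theta$ there exist $s_k\in\mathrm{SRG}_\theta(A_k)$ and $r_i\in\mathcal{R}_i$ with $s_k\prod_i r_i=-1$. I must construct $A_i\in\mathcal{M}\cap\mathcal{M}_c$ making $I+\prod A_i$ singular.
\begin{enumerate}[(a)]
\item Choose a particular $\theta$, possibly $\theta$ tied to $\sum\theta_i$ and $\pi$, so that the angle of $A_k$ is what must be compensated. Pick a unit $x$ with $\frac{\|A_kx\|}{\|x\|}e^{j(\theta\pm\angle_\theta(x,A_kx))}=s_k$.
\item The goal is to build the uncertainties so that $\prod_{i\in\mathcal{K}'}A_i$ maps $y=A_kx$ to $-x$.
\item \emph{Single-factor case.} Find $\Delta$ with $\Delta y=-x$ and $\|\gamma' e^{-j\theta_i}\Delta-I\|\le\sin\Gamma_i$. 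This is an interpolation problem on the pair $(y,-x)$.
\item \emph{Feasibility of the single factor.} Restrict to the plane spanned by $x$ and $y$, and take $\Delta$ as a $2\times2$ matrix there plus a scalar on the orthogonal complement. The condition reduces to a $2\times2$ norm inequality. It should hold exactly when $-x$ relative to $y$ corresponds to a point of $\mathcal{R}$, with $\angle_{\theta_i}$ and the gain matching. Checking that the achievable set of $(\|x\|/\|y\|,\ \angle(y,-x))$ fills $\mathcal{R}$ is the delicate computation.
\item \emph{Several factors.} Route through intermediate vectors $y=v_0\to v_1\to\cdots\to v_{N-1}=-x$. The angles and gains are split according to the factorization $\prod r_i$, using that the $\theta$-angles add along a common plane when the vectors are coplanar in the real sense.
\item Finally, verify that each constructed $A_i$ has canonical center $\theta_i$, so that it lies in $\mathcal{M}$ as well.
\end{enumerate}
The hardest step is the coupled choice of $\theta$ together with the coplanar interpolation. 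I would first try $\theta=\pi-\sum_i\theta_i$, and argue by compactness and continuity in $\theta$ where needed.
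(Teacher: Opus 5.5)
\textbf{Sufficiency.} Your argument for $\mathcal{M}_c$ works, and it differs mildly from the paper's. You take the whole set $\mathcal{M}_c[\alpha_i,\beta_i,\gamma_i]$ as the arc-property set. This is valid: $\mathrm{SRG}_{\theta_i}(\mathcal{M}_c)=\mathcal{R}_i$, already realized by scalar matrices $wI$, and $\mathcal{R}_i$ is closed under $\mathrm{Arc}^+_{\theta_i}$ because the radial bound decreases in $|\delta|$. The paper instead fixes each $A_i$, takes $\theta_i=\theta^\star(A_i)$, and applies Theorem~\ref{thm:matrix_nonsingular_condition}(ii) to the arc hull $\mathcal{M}^{\text{arc}}_{\theta_i}(A_i)$.

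Your reduction of $\mathcal{M}$ to $\mathcal{M}_c$, however, rests on a false claim. With $\theta_c=\frac{1}{2}(\alpha+\beta)$, the inclusion $\mathrm{SRG}_{\theta^\star}(C)\subset\mathcal{R}$ does not force $\theta^\star=\theta_c$: take $C=wI$ with $w\in\mathcal{R}$ off the axis, so $\theta^\star=\angle w$. The inclusion $\mathcal{M}\subset\mathcal{M}_c$ is nevertheless true, but it needs an argument. One option is $\angle_{\theta_c}(x,Cx)\le\angle_{\theta^\star}(x,Cx)+|\theta^\star-\theta_c|$ (the triangle inequality for real angles in $\mathbb{R}^{2n}$), combined with monotonicity of the radial bound. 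The other is the paper's per-matrix route.

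\textbf{Disk characterization and necessity.} The real gaps are here. $\mathcal{R}[\alpha,\beta,\gamma]$ is not the disk intersected with the sector; that disk is tangent to the sector edges, so the intersection would be just the disk. The definition only bounds $r$ from above, so $\mathcal{R}$ is the convex hull of $0$ and the disk: a cone with a circular cap (Fig.~\ref{fig:R_and_S_region}). For $\Gamma<\frac{\pi}{2}$ it contains the points $re^{j\theta}$ with $0\le r<\gamma\frac{1-\sin\Gamma}{1+\sin\Gamma}$, which lie outside the disk. For $\Gamma=0$ the disk is the single point $\gamma e^{j\theta}$, while $\mathcal{R}$ is a segment. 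So $\mathcal{M}_c$ is not described by one norm inequality. An interpolation governed by $\|\frac{\cos\Gamma}{\mu}e^{-j\theta}\Delta-I\|\le\sin\Gamma$ cannot realize witnesses $z_i$ of small modulus, which is exactly what occurs when $A_k$ has large gain.

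Beyond this, steps (c)--(f) are a plan rather than a proof. You never say how each factor is built as a complex-linear map with the required SRG while the intermediate vectors stay coplanar.

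The missing idea is to make all uncertainties simultaneously diagonal and normal:
\[
\tilde A_i=\mathrm{diag}\big(z_i,\; e^{j(\alpha_i+\beta_i)}\overline{z}_i,\; e^{\frac{1}{2}j(\alpha_i+\beta_i)}|z_i|I\big).
\]
Each $\tilde A_i$ has $\theta_{c,i}$-symmetric SRG equal to the arc of radius $|z_i|$ joining $z_i$ to its mirror image. Hence it lies in $\mathcal{R}_i$ and has canonical center $\theta_{c,i}$, which covers $\mathcal{M}$ and $\mathcal{M}_c$ at once.

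Take $\theta=-\sum_i\theta_{c,i}$, which is equivalent to your $\pi-\sum_i\theta_i$ since $\mathrm{SRG}_{\theta+\pi}=\mathrm{SRG}_\theta$. Then $B=\prod_i\tilde A_i$ has leading diagonal entries $\prod_i z_i$ and its mirror. Let $a=\mathrm{Im}\{e^{-j\theta}x^HA_kx\}$ and $b^2=\|A_kx\|^2\|x\|^2-(\mathrm{Re}\{e^{-j\theta}x^HA_kx\})^2$. The real unit vector $y$ with squared entries $\frac{b+a}{2b},\frac{b-a}{2b}$ (and zeros elsewhere) gives $y^HBy=-x^HA_k^Hx/\|A_kx\|^2$ and $\|By\|=\|x\|/\|A_kx\|$. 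A unitary $U$ then maps $(A_kx/\|A_kx\|,\,-x/\|A_kx\|)$ to $(y,By)$. Setting $A_i=U^H\tilde A_iU$ yields $(I+A_kU^HBU)A_kx=0$.

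This collapses your multi-step routing into a single two-vector matching. The angle additivity you anticipated is automatic, because $\phi\mapsto(y_1e^{j\phi},y_2e^{-j\phi},0)$ traces a great circle in $\mathbb{R}^{2n}$. No continuity or compactness argument in $\theta$ is needed.
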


\begin{proof}
    See Appendix \ref{app:proofs}. 
\end{proof}

When only independent gain and phase bounds are available for the uncertainty, rather than the refined gain-phase coupling captured by $\mathcal{R}[\alpha,\beta,\gamma]$, the following region on the complex plane serves as a useful characterization. 
\begin{equation*}
    \mathcal{S} [\alpha,\beta,\gamma] = \{ z \in \mathbb{C}: \angle z \in [\alpha,\beta], |z| \leq \gamma \}. 
\end{equation*} 
An illustration of $\mathcal{S}[\alpha,\beta,\gamma]$, which is a sector-shaped region, is shown in Fig.~\ref{subfig:S_region}. Compared with $\mathcal{R}[\alpha,\beta,\gamma]$, there holds 
$ \mathcal{S}[\alpha,\beta,\gamma] \supset \mathcal{R}[\alpha,\beta,\gamma]  $.  
If $\alpha = - \beta$, we write $\mathcal{S}[-\beta,\beta,\gamma]$ simply as $\mathcal{S}[\beta,\gamma]$. We can graphically define the following matrix set via the canonical SRG as  
\begin{equation*}
    \mathcal{N}  [\alpha,\beta,\gamma] \! = \! \{ C \! \in \!  \mathbb{C}^{n\times n} \! : \! \mathrm{SRG}_{\theta^\star} (C) \! \subset \!  \mathcal{S}[\alpha,\beta,\gamma] \} . 
\end{equation*}

For the uncertainty set $\mathcal{N} [\alpha,\beta,\gamma]$, a necessary and sufficient condition for the nonsingularity of $I+\prod_{i=1}^NA_i$ can be established in a similar manner. The proof is analogous to that of Theorem \ref{thm:matrix_iff_condition} and is omitted for brevity. 

\begin{proposition}
    Let $0\leq \beta_i - \alpha_i \leq \pi, \gamma_i >0, i \in \mathcal{K}'$, be given, then $I+ \prod_{i=1}^NA_i$ is nonsingular for all $A_i \in \mathcal{N} [\alpha_i,\beta_i,\gamma_i], i  \! \in \! \mathcal{K}'$, if and only if there exists $\theta \! \in \!  \mathbb{R}$ such that 
    \begin{equation*}
        -1 \notin \mathrm{SRG}_\theta(A_k) \mathcal{S} \left[ \sum_{i \in \mathcal{K}'} \alpha_i, \sum_{i \in \mathcal{K}'} \beta_i, \prod_{i \in \mathcal{K}'} \gamma_i \right] . 
    \end{equation*}
\end{proposition}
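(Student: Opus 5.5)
The proof follows the template of Theorem~\ref{thm:matrix_iff_condition}, with the disk-type region $\mathcal{R}$ replaced by the sector $\mathcal{S}$. It has two directions: sufficiency via Theorem~\ref{thm:matrix_nonsingular_condition}(ii), and necessity via an explicit destabilizing construction.

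\emph{Sufficiency.} For $i\in\mathcal{K}'$ set $\theta_i=\frac12(\alpha_i+\beta_i)$.

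First, we show $\mathcal{N}[\alpha_i,\beta_i,\gamma_i]$ satisfies $\theta_i$-arc property. Take $A\in\mathcal{N}[\alpha_i,\beta_i,\gamma_i]$ with canonical center $\theta^\star$. The condition $\mathrm{SRG}_{\theta^\star}(A)\subset\mathcal{S}[\alpha_i,\beta_i,\gamma_i]$ says three things:
\begin{itemize}
\item $\overline{\sigma}(A)\le\gamma_i$;
\item $\Psi_{\theta^\star}(A)\subset[\alpha_i,\beta_i]$, i.e.\ the phase interval of $A$ is at most $\pi$ wide;
\item $\theta^\star$ lies in $[\alpha_i,\beta_i]$.
\end{itemize}
Re-centering at $\theta_i$ is justified by the gain/phase separation of Section~\ref{section:segmental_phase_computation} (Theorem~\ref{thm:segmental_phase_norm_minimization} and Proposition~\ref{prop:theta_segmental_phase_LMI_graphical_interpretation}): the interval $\Psi_{\theta_i}(A)$ stays inside $[\alpha_i,\beta_i]$, since $\Gamma_{\theta_i}\le\Gamma^\star+|\theta_i-\theta^\star|$. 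This gives $\mathrm{SRG}_{\theta_i}(A)\subset\mathcal{S}[\alpha_i,\beta_i,\gamma_i]$.

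Second, we show the union of these SRGs fills the sector. Scalars $re^{j\phi}$ with $\phi\in[\alpha_i,\beta_i]$ and $r\le\gamma_i$ belong to $\mathcal{N}$, and their canonical SRG is the single point $re^{j\phi}$. Realizing the reflected point $re^{j(2\theta_i-\phi)}$ requires matrices whose $\theta_i$-SRG contains that point. Taking $\mathrm{diag}(re^{j\alpha_i},re^{j\beta_i})$ fills the arcs $\mathrm{Arc}^+_{\theta_i}$ at every radius $r$. Hence
\[
\mathrm{SRG}_{\theta_i}(\mathcal{N}[\alpha_i,\beta_i,\gamma_i])=\mathcal{S}[\alpha_i,\beta_i,\gamma_i],
\]
and this set is closed under $\mathrm{Arc}^+_{\theta_i}$, which is the $\theta_i$-arc property.

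Third, for $A_k$ set $\theta_k=\theta-\sum_{i\in\mathcal{K}'}\theta_i$. Theorem~\ref{thm:sum_product_properties}(ii) gives $\Lambda(\prod A_i)\subset\mathrm{SRG}_{\theta_k}(A_k)\prod_i\mathcal{S}[\alpha_i,\beta_i,\gamma_i]$. Since the sectors are centered at the origin, this product equals $\mathcal{S}[\sum\alpha_i,\sum\beta_i,\prod\gamma_i]$. The index shift is harmless because the statement quantifies over $\theta$, and $\theta_k$ ranges over $\mathbb{R}$ as $\theta$ does. So the hypothesis excludes $-1$ from the spectrum, and $I+\prod A_i$ is nonsingular. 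If sums of angle widths exceed $\pi$, the product sector is understood as the angular set; nothing changes.

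\emph{Necessity.} Suppose that for every $\theta$ there exist $s\in\mathrm{SRG}_\theta(A_k)$ and $z\in\mathcal{S}[\sum\alpha_i,\sum\beta_i,\prod\gamma_i]$ with $sz=-1$. We must construct $A_i\in\mathcal{N}[\alpha_i,\beta_i,\gamma_i]$ making $I+\prod A_i$ singular.

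Following the proof of Theorem~\ref{thm:matrix_iff_condition}, the plan is:
\begin{enumerate}
\item Choose $\theta$ appropriately, e.g.\ $\theta=\sum_{i}\theta_i$ up to the admissible center, so that $s=\frac{\|A_kx\|}{\|x\|}e^{j(\theta\pm\angle_\theta(x,A_kx))}$ for some $x\neq0$, with $y=A_kx$.
\item Factor $z=\prod_i z_i$ with $z_i\in\mathcal{S}[\alpha_i,\beta_i,\gamma_i]$. Splitting the angle and modulus proportionally inside each sector makes this possible.
\item Build the uncertain matrices as rank-one-plus-scalar interpolants sending $y$ successively toward $-x$. Concretely, each $A_i$ maps a prescribed unit direction $u_{i}$ to $|z_i|$ times a unit direction $u_{i-1}$, with the required $\theta_i$-angle. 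On the orthogonal complement, $A_i$ is chosen so that its canonical SRG stays inside the sector.
\end{enumerate}
The interpolation in step 3 can be done with the normal/unitary-times-scalar two-dimensional blocks from Example~\ref{examp:for_optimal_theta_SRG}(ii). For a pair of unit vectors $u,v$ whose complex angle splits into a phase of modulus $\le\beta_i-\alpha_i$, there is a $2\times2$ block $\gamma\,U$ with $U$ unitary, eigenphases inside $[\alpha_i,\beta_i]$, and $Uu=v$. Its canonical SRG is an arc inside the sector.

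The main obstacle is this interpolation step. It requires an exact matching: the $\theta$-angle of $(x,A_kx)$ must be distributed among the $N-1$ factors, each needing a normal matrix with eigenphases in $[\alpha_i,\beta_i]$ that rotates one vector to another by the required $\theta_i$-angle. Its canonical center must remain in range as well. I would first handle $N=2$, a single uncertain block, using a unitary block with eigenvalues $e^{j\alpha},e^{j\beta}$ scaled by $|z|$. I would then compose, choosing intermediate vectors along a geodesic in the complex projective sense.
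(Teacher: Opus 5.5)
Your sufficiency half is correct, and it takes a slightly different route from the paper. You fix the common center $\theta_i=\frac12(\alpha_i+\beta_i)$ and show that $\mathrm{SRG}_{\theta_i}(\mathcal{N}[\alpha_i,\beta_i,\gamma_i])=\mathcal{S}[\alpha_i,\beta_i,\gamma_i]$, which gives the $\theta_i$-arc property. The paper instead keeps each $A_i$'s own canonical center and replaces $A_i$ by the arc hull $\mathcal{M}^{\mathrm{arc}}_{\theta_i}(A_i)$. Three small corrections:
the bound $\Gamma_{\theta_i}\le\Gamma^\star+|\theta_i-\theta^\star|$ is true, but it follows directly from the definition ($\theta\mapsto\arccos\mathrm{Re}(e^{-j\theta}w)$ is $1$-Lipschitz when $|w|\le 1$), not from Section~\ref{section:segmental_phase_computation};
the scalars $re^{j\phi}I$ already give the reverse inclusion, and your claim that $\mathrm{diag}(re^{j\alpha_i},re^{j\beta_i})$ fills the arcs is false (by Example~\ref{examp:for_optimal_theta_SRG}(ii) its canonical SRG is just those two points), though you do not need it;
there is no index shift, since you can use the hypothesized $\theta$ directly as the parameter of $A_k$.

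The genuine gap is necessity. You never construct the destabilizing $A_i$, and the route you sketch fails as stated. First, your $N=2$ block $|z|\,\mathrm{diag}(e^{j\alpha},e^{j\beta})$, or any unitary similarity of it, has eigenvalues only on the sector edges. Take $\alpha=-\beta<0$, $0<r\le\gamma$ and $A_k=-\frac1r I$. The condition fails for every $\theta$, yet $I+A_kA_1$ is singular only if $A_1$ has the eigenvalue $r$, which your block never has. The eigenvalues must sit at the particular $z$ delivered by the failing condition and at its mirror image $e^{j(\alpha+\beta)}\bar z$. Second, chaining non-commuting maps $u_i\mapsto|z_i|u_{i-1}$ along a geodesic does not control what must be matched. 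For the composite $B$ you need the exact complex number $y^HBy=-x^HA_k^Hx/\|A_kx\|^2$ together with $\|By\|=\|x\|/\|A_kx\|$, and the end-to-end inner product is not the product of the stepwise ones. Padding the orthogonal complement is also not innocuous, because unequal moduli widen the phase spread: e.g.\ $\mathrm{diag}(1,100)$ has phase spread $\arccos\frac{20}{101}$ despite real eigenvalues. Third, the parameter must be $\theta=-\sum_{i\in\mathcal{K}'}\theta_i$ (mod $\pi$), not $+\sum\theta_i$. The component of the target along $e^{j\sum\theta_i}$ is governed by $\angle_{-\sum\theta_i}(x,A_kx)$.

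The missing idea is to make all uncertain factors commute, so that the $N-1$ blocks collapse into one normal matrix. This is the paper's proof of Theorem~\ref{thm:matrix_iff_condition}, carried over verbatim with $\mathcal{R}$ replaced by $\mathcal{S}$. Factor $z=\prod z_i$ with $z_i\in\mathcal{S}[\alpha_i,\beta_i,\gamma_i]$ and set $\tilde A_i=\mathrm{diag}(z_i,\,e^{j(\alpha_i+\beta_i)}\bar z_i,\,e^{j\theta_i}|z_i|I)$. Each is $|z_i|$ times a unitary, so its canonical SRG is the arc of radius $|z_i|$ between $z_i$ and its mirror image, which lies inside the sector. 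Then $B=\prod\tilde A_i$ is diagonal with eigenvalues $\prod z_i$ and its mirror image across $e^{j\sum\theta_i}\mathbb{R}$. A real $y$ supported on the first two coordinates reaches the required inner product, because the target lies on the chord between these two eigenvalues; the norm condition then holds automatically. Finally, take a single unitary $U$ with $U[\hat u,\,-\hat u_\perp]=[y,\,y_\perp]$, where $\hat u=A_kx/\|A_kx\|$, and apply it to every factor: $A_i=U^H\tilde A_iU$. This keeps $A_i\in\mathcal{N}[\alpha_i,\beta_i,\gamma_i]$ and gives $(I+A_kA_{k+1}\cdots A_NA_1\cdots A_{k-1})A_kx=0$.
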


\vspace{5pt} 

When $\gamma \to \infty$, $\mathcal{M}[\alpha,\beta,\gamma]$ and $\mathcal{N} [\alpha,\beta,\gamma]$ coincide and reduce to a matrix cone defined by 
\begin{equation*}
    \mathcal{C} [\alpha,\beta] = \{ C \in \mathbb{C}^{n \times n}: \Psi_{\theta^\star} (C) \subset [\alpha,\beta] \} . 
\end{equation*}
In this case, we obtain the following corollary, which concerns only the phase-bounded uncertainties. The proof follows directly from Theorem \ref{thm:matrix_iff_condition} and is omitted. 

\begin{corollary}
    Let $0 \leq \beta_i-\alpha_i\leq \pi,  i \in \mathcal{K}'$, be given, then $I+\prod_{i=1}^NA_i$ is nonsingular for all $A_i\in \mathcal{C} [\alpha_i,\beta_i] , i \in \mathcal{K}'$, if and only if there exists $\theta \in \mathbb{R}$ such that 
    \begin{equation*}
        \Psi_\theta(A_k) + \sum_{i \in \mathcal{K}'} [\alpha_i,\beta_i] \subset (-\pi,\pi) . 
    \end{equation*}
\end{corollary}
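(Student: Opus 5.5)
The plan is to derive the corollary from Theorem \ref{thm:matrix_iff_condition} by letting $\gamma_i\to\infty$, and then to turn the resulting mixed gain-phase condition into a pure phase condition.

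\emph{Sufficiency.} Suppose there is $\theta$ with $\Psi_\theta(A_k)+\sum_{i\in\mathcal{K}'}[\alpha_i,\beta_i]\subset(-\pi,\pi)$. Take arbitrary $A_i\in\mathcal{C}[\alpha_i,\beta_i]$ and let $\theta_i^\star$ be the canonical center of $A_i$. Then $\Psi_{\theta_i^\star}(A_i)\subset[\alpha_i,\beta_i]$ and $\Gamma_{\theta_i^\star}(A_i)\le\pi/2$. By Corollary~\ref{cor:separate_gain_phase_sum_prod_matrix}(ii), applied with $\theta$ for $A_k$ and $\theta_i^\star$ for $i\in\mathcal{K}'$, we get
\begin{equation*}
\Psi_\theta(A_k)+\sum_{i\in\mathcal{K}'}\Psi_{\theta_i^\star}(A_i)\subset\Psi_\theta(A_k)+\sum_{i\in\mathcal{K}'}[\alpha_i,\beta_i]\subset(-\pi,\pi).
\end{equation*}
This gives nonsingularity directly. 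The only hypothesis to check is the arc property needed by Theorem~\ref{thm:matrix_nonsingular_condition}, and it can be met by replacing each $A_i$ with the set $\mathcal{C}[\alpha_i,\beta_i]$. That set is a cone whose canonical SRGs fill sectors, so it satisfies the $\theta_i$-arc property with $\theta_i=\frac12(\alpha_i+\beta_i)$.

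\emph{Necessity.} Suppose $I+\prod A_i$ is nonsingular for all $A_i\in\mathcal{C}[\alpha_i,\beta_i]$. For every $\gamma_i>0$ we have $\mathcal{M}[\alpha_i,\beta_i,\gamma_i]\subset\mathcal{C}[\alpha_i,\beta_i]$, because $\mathcal{R}$ lies in the sector $\{\angle z\in[\alpha_i,\beta_i]\}$. Theorem~\ref{thm:matrix_iff_condition} therefore gives, for each choice of $\gamma_i$, some $\theta$ (possibly depending on $\gamma$) with $-1\notin\mathrm{SRG}_\theta(A_k)\prod\mathcal{R}[\alpha_i,\beta_i,\gamma_i]$.

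As $\gamma_i\to\infty$, the sets $\mathcal{R}[\alpha_i,\beta_i,\gamma_i]$ grow and exhaust the open sector; their product is essentially the sector of directions $\sum[\alpha_i,\beta_i]$ with all radii. So $-1\notin\mathrm{SRG}_\theta(A_k)\cdot\mathcal{S}[\sum\alpha_i,\sum\beta_i,\infty]$ says that no point of $\mathrm{SRG}_\theta(A_k)$ has an argument $\phi$ with $\phi+\sum[\alpha_i,\beta_i]\ni\pi$ mod $2\pi$. The arguments of $\mathrm{SRG}_\theta(A_k)$ fill the interval $\Psi_\theta(A_k)$, because the set is $\theta$-symmetric and its phase spread is attained continuously over $x$ on a connected unit sphere. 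Combining these facts yields the stated interval inclusion.

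\emph{Main obstacle.} The hardest step is the limiting argument: $\theta$ depends on $\gamma$, and the condition at infinity is a non-strict versus strict containment. I would handle this with compactness. Since $\theta$ lives on a circle modulo $\pi$ (Lemma~\ref{lem:properties_theta_SRG}(iii)), take a convergent subsequence of the $\theta(\gamma)$. Then use continuity of $\Gamma_\theta(A_k)$ in $\theta$, together with closedness of $\Psi_\theta$, to show that a limiting $\theta$ satisfies the strict inclusion in $(-\pi,\pi)$. If equality at $\pm\pi$ occurs in the limit, I would exhibit a rank-one or scalar-phase element of the cone attaining $-1$, interpolating as in the proof of Theorem~\ref{thm:matrix_iff_condition}, to rule it out.
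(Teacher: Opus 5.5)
Your sufficiency argument is fine. Corollary~\ref{cor:separate_gain_phase_sum_prod_matrix}(ii) carries no arc hypothesis, since its phase part rests only on the triangle inequality for $\angle_\theta$, so your remark about the arc property of the cone can be dropped. For necessity you take the route the paper has in mind, letting $\gamma_i\to\infty$ in Theorem~\ref{thm:matrix_iff_condition}. The argument fails, however, at the step you treat as routine: the claim that the arguments of $\mathrm{SRG}_\theta(A_k)$ fill $\Psi_\theta(A_k)$.

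This claim fails in two ways.
\begin{itemize}
\item The nonzero points of $\mathrm{SRG}_\theta(A_k)$ have arguments $\theta\pm a$, with $a$ in the range of $x\mapsto\angle_\theta(x,A_kx)$, so there is a gap around $\theta$ (e.g.\ $\mathrm{SRG}_\theta(zI)=\{z,\overline{z}e^{2j\theta}\}$). This part is repairable: $-1\notin\mathrm{SRG}_\theta(A_k)\cdot(\text{sector})$ still gives the interval inclusion at $\theta$ or at $\theta+\pi$.
\item The serious part: when $A_k$ is singular, $x\mapsto\angle_\theta(x,A_kx)$ is discontinuous on $\ker A_k$ and the supremum $\Gamma_\theta(A_k)$ need not be attained. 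In that case the ``only if'' is false, so neither compactness nor your fallback of exhibiting a cone element attaining $-1$ can succeed.
\end{itemize}

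Here is a counterexample. Take $N=2$, $k=1$, $A_1=\begin{bmatrix}1&1\\0&0\end{bmatrix}$ and $[\alpha_2,\beta_2]=[-\frac{\pi}{4},\frac{\pi}{4}]$.
\begin{itemize}
\item The supremum is not attained. For $x=[1,\,-1-t]^T$ one gets $\cos\angle_0(x,A_1x)=-1/\sqrt{1+(1+t)^2}\to-1/\sqrt{2}$ as $t\downarrow0$. Yet $\cos\angle_0(x,A_1x)>-1/\sqrt{2}$ whenever $A_1x\neq0$, so $\Gamma_0(A_1)=\frac{3\pi}{4}$ is not attained.
\item The interval condition fails for every $\theta$. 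Since $\Gamma_0\le\Gamma_\theta+|\theta|$ for $|\theta|\le\pi$, no $\theta$ gives $|\theta|+\Gamma_\theta(A_1)<\frac{3\pi}{4}$.
\item Robust nonsingularity nevertheless holds. Every $A_2\in\mathcal{C}[-\frac{\pi}{4},\frac{\pi}{4}]$ has $\Gamma_0(A_2)\le|\theta^\star|+\Gamma_{\theta^\star}(A_2)\le\frac{\pi}{4}$. Also $\det(I+A_1A_2)=1+a+c$ with $[a,\,c]^T=A_2e_1$. If $a+c=-1$, then $\mathrm{Re}\,a\ge\frac{1}{\sqrt{2}}\|A_2e_1\|\ge\frac{1}{\sqrt{2}}\sqrt{(\mathrm{Re}\,a)^2+(1+\mathrm{Re}\,a)^2}$, which is impossible.
\end{itemize}

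What is missing, and what makes the statement true under an extra hypothesis such as $A_k$ nonsingular, is to open up Theorem~\ref{thm:matrix_iff_condition} rather than use it as a black box. Its necessity proof works at the fixed $\theta_0=-\frac{1}{2}\sum_{i\in\mathcal{K}'}(\alpha_i+\beta_i)$, independent of $\gamma$, so no limit in $\theta$ is needed. At $\theta_0$ the interval condition reads $\Gamma_{\theta_0}(A_k)+\sum_{i\in\mathcal{K}'}\frac{\beta_i-\alpha_i}{2}<\pi$, and failure for every $\theta$ implies failure at $\theta_0$. Suppose it fails and $A_k$ is nonsingular.
\begin{itemize}
\item Pick $x$ attaining $\Gamma_{\theta_0}(A_k)$; it exists by continuity on the unit sphere.
\item Set $z_i=t_ie^{j(\frac{\alpha_i+\beta_i}{2}+\phi_i)}$ with $|\phi_i|\le\frac{\beta_i-\alpha_i}{2}$, $\sum_i\phi_i=\pi-\Gamma_{\theta_0}(A_k)$ and $\prod_it_i=\|x\|/\|A_kx\|$.
\item For large $\gamma_i$ these $z_i$ lie in $\mathcal{R}[\alpha_i,\beta_i,\gamma_i]$. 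The construction in that proof then yields $A_i\in\mathcal{M}[\alpha_i,\beta_i,\gamma_i]\subset\mathcal{C}[\alpha_i,\beta_i]$ with $I+\prod_iA_i$ singular.
\end{itemize}
The case $\beta_i-\alpha_i=\pi$ needs separate care, since $\mathcal{R}$ then omits its boundary rays.

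Alternatively, staying within your black-box approach: for nonsingular $A_k$ a single sufficiently large $\gamma$ already suffices, because only uncertainty gains up to $1/\underline{\sigma}(A_k)$ can matter. Combined with the $\theta$ versus $\theta+\pi$ choice above, this avoids compactness entirely. The paper's ``follows directly'' glosses over the same attainment issue.
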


\section{Robust Stability Analysis of Cyclic Cascaded Systems} 
\label{section:single_loop_stability_systems}

For a system $G \in \mathcal{RH}_\infty^{m\times m}$, by plotting the $\theta$-symmetric SRG of $G(j\omega)$ at each frequency, we obtain an over-approximation of the eigenloci. 
A typical choice of $\theta(\omega)$ is to apply Theorem \ref{thm:segmental_phase_norm_minimization} frequency-wisely to get $\theta^\star(\omega)$. By selecting a continuous $\theta^\star(\omega)$, this yields the frequency-wise canonical SRG, which serves as a natural MIMO extension of the Nyquist plot. 
Compared with the MIMO eigenloci, the advantage of the frequency-wise $\theta$-symmetric SRG lies in its compositionality: the properties of loop transfer matrix can be directly deduced from those of the subsystems, whereas its eigenloci do not admit such a direct compositional characterization.  
An illustrative example of the frequency-wise canonical SRG of a system is provided below.  

\begin{example}                     \label{examp:system_theta_SRG_example}
    Consider a system 
    \begin{equation}                \label{eq:system_example} 
        G(s) = \begin{bmatrix}
            \frac{s+2}{s+1} & 0 \\ 1 & \frac{s^2+3s+27}{2s^2+4s+6} 
        \end{bmatrix} . 
    \end{equation}  
    By solving the SDP formulated in Theorem \ref{thm:segmental_phase_norm_minimization} at each frequency, we compute the optimal $\theta^\star(\omega)$. 
    Then the frequency-wise canonical SRG and eigenloci of $G(s)$ are shown in Fig.~\ref{fig:system_frequency_wise_theta_SRG}. From the plot, one can see that $G(s)$ exhibits an overall phase-lag behavior, which the $0$-symmetric SRG method fails to capture. 
\end{example}

\begin{figure}[htb]
    \centering
    \begin{overpic}[width=0.5\linewidth]{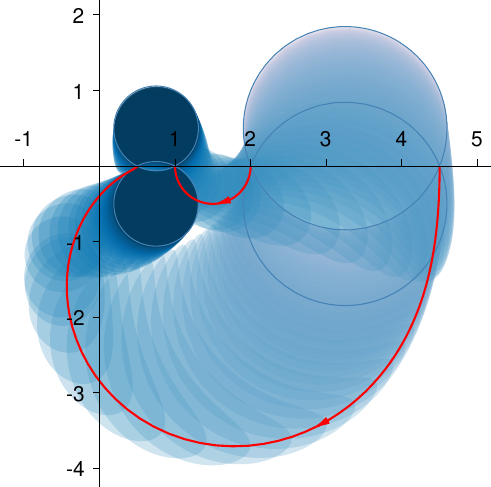}
        \put(120,75){\footnotesize$\mathrm{Re}$} 
        \put(28,120){\footnotesize$\mathrm{Im}$}  
    \end{overpic}
    \caption{An illustration of the frequency-wise  
    canonical SRG of {\unboldmath$G(s)$} in \eqref{eq:system_example}, where the red curves denote the eigenloci of {\unboldmath$G(s)$}. }
    \label{fig:system_frequency_wise_theta_SRG} 
\end{figure} 

For the SISO case, the frequency-wise canonical SRG degenerate precisely to the classical Nyquist plot. The proof follows directly from the definition of canonical SRG. 

\begin{proposition}             \label{prop:reduction_to_SISO_Nyquist_plot} 
    For a SISO system $g \in \mathcal{RH}_\infty$, there holds 
    \begin{equation*}
        \mathrm{SRG}_{\theta^\star(\omega)} (g(j\omega)) = g(j\omega), \quad \forall \omega \in [0,\infty], 
    \end{equation*} 
    where $\theta^\star(\omega) = \angle g(j\omega)$ is continuous. 
\end{proposition}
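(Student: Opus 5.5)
The plan is to work directly from the definition \eqref{eq:definition_PS_SRG_matrix} with $n=1$, then settle the choice of $\theta^\star(\omega)$ and its continuity separately.

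Fix $\omega$ and write $z = g(j\omega) \in \mathbb{C}$. For $n=1$, any nonzero $x\in\mathbb{C}$ gives $\|zx\|/\|x\| = |z|$. Next I compute the $\theta$-angle $\angle_\theta(x,zx)$. From
\begin{equation*}
\mathrm{Re}\langle x, e^{-j\theta} z x\rangle = |x|^2\,\mathrm{Re}(e^{-j\theta}z),
\end{equation*}
we get, for $z\neq 0$,
\begin{equation*}
\angle_\theta(x,zx) = \arccos \frac{\mathrm{Re}(e^{-j\theta}z)}{|z|},
\end{equation*}
which does not depend on $x$. Hence $\mathrm{SRG}_\theta(z)$ is the pair $\{z,\ \overline{z}e^{2j\theta}\}$, namely $z$ and its reflection across $e^{j\theta}\mathbb{R}$. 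This pair collapses to the single point $z$ exactly when $\theta\equiv\angle z \pmod \pi$. The same computation gives $\Gamma_\theta(z) = \arccos\big(\mathrm{Re}(e^{-j\theta}z)/|z|\big)$, which equals $0$ precisely when $\theta = \angle z \pmod{2\pi}$. So the minimizer in \eqref{eq:optimal_theta} is $\theta^\star = \angle z$, and with it $\mathrm{SRG}_{\theta^\star}(z)=\{z\}$, matching Example~\ref{examp:for_optimal_theta_SRG}(i).

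The degenerate case $z=0$ needs a separate check. There $\|zx\|=0$, so $\mathrm{SRG}_\theta(0)=\{0\}$ for every $\theta$. The identity holds for any choice of $\theta^\star(\omega)$ at such frequencies, and the choice is only needed to keep $\theta^\star$ continuous.

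The main obstacle is the continuity claim: $\angle g(j\omega)$ is defined only modulo $2\pi$, and it is undefined where $g(j\omega)=0$. My approach is to take a continuous branch. Since $g\in\mathcal{RH}_\infty$ is real rational, $g(j\omega)$ is continuous on $[0,\infty]$. If $g\not\equiv 0$, it has only finitely many zeros on the extended imaginary axis. Away from these zeros, a continuous argument exists by path lifting on the interval.

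At an isolated zero $\omega_0$, I will write $g(j\omega) = (j\omega - j\omega_0)^k h(j\omega)$ with $h(j\omega_0)\neq 0$, and treat the zero at $\omega=\infty$ analogously via strict properness. The argument then jumps by at most $k\pi$ across $\omega_0$. Because the SRG identity only requires $\theta^\star\equiv \angle g \pmod \pi$, and the value at the zero itself is arbitrary, I can shift the branch by a multiple of $\pi$ using Lemma~\ref{lem:properties_theta_SRG}(iii). This turns the jump into a removable one and defines $\theta^\star(\omega_0)$ by continuity.

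Finally, the case $g\equiv 0$ is trivial: take $\theta^\star$ to be any constant.
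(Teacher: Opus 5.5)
Your scalar computation, giving $\mathrm{SRG}_\theta(z)=\{z,\overline{z}e^{2j\theta}\}$ and $\Gamma_\theta(z)=0$ exactly when $\theta\equiv\angle z\pmod{2\pi}$, is what the paper means by ``follows directly from the definition of canonical SRG.'' The paper says nothing about continuity or about frequencies where $g(j\omega)=0$, so that part of your proposal goes beyond it.

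In that part, you need to be explicit about what you are proving. Once you shift a branch by an odd multiple of $\pi$ across an odd-order zero $\omega_0\in(0,\infty)$, the resulting function is congruent to $\angle g(j\omega)$ only modulo $\pi$. On the shifted side it is therefore no longer the minimizer in \eqref{eq:optimal_theta}, since there $\Gamma_\theta(g(j\omega))=\pi$ rather than $0$. You should not keep calling it $\theta^\star$.

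This weakening cannot be avoided, because the continuity claim in the statement does not hold literally. For $g(s)=(s^2+1)/(s+1)^2\in\mathcal{RH}_\infty$, $\angle g(j\omega)$ equals $-2\arctan\omega$ for $\omega<1$ and $\pi-2\arctan\omega$ for $\omega>1$. It jumps from $-\pi/2$ to $\pi/2$ at $\omega=1$, so no continuous branch modulo $2\pi$ exists.

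What your argument correctly establishes is a continuous $\theta(\omega)\equiv\angle g(j\omega)\pmod{\pi}$ with $\mathrm{SRG}_{\theta(\omega)}(g(j\omega))=\{g(j\omega)\}$ for all $\omega\in[0,\infty]$. This agrees with the canonical choice whenever $g$ has no odd-order zeros on $j(0,\infty)$. Zeros at $\omega=0$ or $\omega=\infty$ are endpoints of the interval, so they need only a one-sided limit and no shift.
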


\subsection{Cyclic Stability Criteria via $\theta$-Symmetric SRG}

Consider the feedback interconnection of a cascade of systems shown in Fig.~\ref{fig:cyclic_feedback_system}.  
When $P_i\in \mathcal{RH}_\infty^{m\times m}$, the cyclic feedback system is stable if and only if \cite{Zhou1996Robust,Chen2026Cyclic} 
\begin{equation}                \label{eq:to_show_cyclic_stability_condition}
    (I+P_1P_2\cdots P_N)^{-1} \in \mathcal{RH}_\infty^{m \times m}. 
\end{equation} 

Now we are ready to show the main result in this section, which provides a graphical stability criterion for the feedback interconnection of cascaded systems via $\theta$-symmetric SRG.

\begin{theorem}                         \label{thm:cyclic_system_theta_SRG}
    The feedback interconnection of a cascade of systems $P_1,P_2,\dots,P_N\in \mathcal{RH}_\infty^{m \times m}$ in Fig.~\ref{fig:cyclic_feedback_system} is stable if there exist functions $\theta_i(\omega) \in \mathbb{R}, i = 1, 2, \dots, N$, such that for all $\omega \in [0,\infty]$,   
    \begin{equation}            \label{eq:cyclic_system_stability_condition}
        [-\infty, -1]  \cap   \prod_{i = 1}^N \mathrm{SRG}_{\theta_i(\omega)}(\mathcal{P}_i(j\omega)) = \emptyset,  
    \end{equation}  
    where each $\mathcal{P}_i(j\omega)$ is an arbitrary matrix set containing $P_i(j\omega)$, and at least $N-1$ of the $\mathcal{P}_i(j\omega)$ satisfy $\theta_i(\omega)$-arc property. 
\end{theorem}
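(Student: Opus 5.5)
The plan is a homotopy (continuity) argument combined with the frequency-wise matrix result, Theorem \ref{thm:matrix_nonsingular_condition}(ii). Since all $P_i \in \mathcal{RH}_\infty^{m\times m}$, stability is equivalent to \eqref{eq:to_show_cyclic_stability_condition}. This in turn holds if and only if $\det(I+L(s)) \neq 0$ on the closed right half-plane, including $s=\infty$, where $L = P_1P_2\cdots P_N$. Because $L$ is stable, the generalized Nyquist criterion reduces this to two requirements: $\det(I+L(j\omega))\neq 0$ for all $\omega\in[0,\infty]$, and the winding number of $\det(I+L(j\omega))$ about the origin is zero. I will obtain both at once by embedding $L$ in the family $\tau L$, $\tau\in[0,1]$. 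I will show that $\det(I+\tau L(j\omega))\neq 0$ for every $\tau\in[0,1]$ and every $\omega\in[0,\infty]$.

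Once that is shown, the winding number of $\det(I+\tau L(j\omega))$ is a continuous integer-valued function of $\tau$. The curve is a closed curve because $\omega\in[0,\infty]$ together with conjugate symmetry covers the whole imaginary axis, and $L$ is proper. The winding number is $0$ at $\tau=0$, so it is $0$ at $\tau=1$. Since $I+L$ has no RHP poles, the argument principle then gives no RHP zeros of $\det(I+L)$. Properness also gives nonsingularity at $s=\infty$, so $(I+L)^{-1}\in\mathcal{RH}_\infty$.

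The key step is nonsingularity of $I+\tau L(j\omega)$ for $\tau\in(0,1]$; the case $\tau=0$ is trivial. The plan is to absorb $\tau$ into the first factor, $\tau L = (\tau P_1)P_2\cdots P_N$. By Lemma \ref{lem:properties_theta_SRG}(ii), $\mathrm{SRG}_{\theta_1}(\tau\mathcal{P}_1) = \tau\,\mathrm{SRG}_{\theta_1}(\mathcal{P}_1)$, and scaling by $\tau>0$ preserves the $\theta_1$-arc property. Hence at each $\omega$ the sets $\tau\mathcal{P}_1(j\omega),\mathcal{P}_2(j\omega),\dots$ still satisfy the hypotheses of Theorem \ref{thm:matrix_nonsingular_condition}(ii), with product SRG equal to $\tau\prod_i \mathrm{SRG}_{\theta_i(\omega)}(\mathcal{P}_i(j\omega))$. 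If $-1$ belonged to this set, then $-1/\tau\in[-\infty,-1]$ would belong to $\prod_i\mathrm{SRG}_{\theta_i(\omega)}(\mathcal{P}_i(j\omega))$, contradicting \eqref{eq:cyclic_system_stability_condition}. Theorem \ref{thm:matrix_nonsingular_condition}(ii) therefore yields $\det(I+\tau L(j\omega))\neq 0$. This is exactly why the condition excludes the whole ray $[-\infty,-1]$ rather than just the point $-1$.

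The main obstacle I expect is the rigor of the homotopy step. The functions $\theta_i(\omega)$ are arbitrary and need not be continuous, but this does not matter: they enter only pointwise, and continuity is needed only for $\det(I+\tau L(j\omega))$, which is jointly continuous in $(\tau,\omega)$ on the compact set $[0,1]\times[0,\infty]$. Compactness then bounds this determinant away from zero, which justifies invariance of the winding number. Some care is also needed at $\omega=\infty$, where properness makes $L(j\infty)$ finite; there the pointwise condition at $\omega=\infty$ already gives well-posedness.
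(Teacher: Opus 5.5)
Your proof is correct and follows the paper's argument. Both reduce stability, via the generalized Nyquist criterion, to nonsingularity of $I+\tau P_1(j\omega)\cdots P_N(j\omega)$ for all $\tau\in[0,1]$ and $\omega\in[0,\infty]$, then apply Theorem~\ref{thm:matrix_nonsingular_condition}(ii) frequency-wise, using the exclusion of the whole ray $[-\infty,-1]$ to cover every $\tau$. The differences are cosmetic: the paper handles $\tau$ through spectrum containment (no eigenvalue of the loop matrix lies in $[-\infty,-1]$) rather than by scaling $\mathcal{P}_1$ by $\tau$, and it cites the winding-number homotopy that you spell out.
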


\begin{proof}
    Since $P_i \in \mathcal{RH}_\infty^{m \times m}$, it suffices to show condition \eqref{eq:to_show_cyclic_stability_condition} holds. It follows from the generalized Nyquist criterion \cite{Desoer1980Generalized,Griggs2012InterconnectionsMixed} that \eqref{eq:to_show_cyclic_stability_condition} holds if 
    \begin{equation}           \label{eq:sufficient_nyquist_condition}
        I + \tau P_1(j\omega)P_2(j\omega)\cdots P_N(j\omega) \ \text{ is nonsingular,   } 
    \end{equation} 
    for all $\omega \in [0,\infty]$ and all $\tau\in[0,1]$. \\ 
    By Theorem \ref{thm:matrix_nonsingular_condition}, \eqref{eq:cyclic_system_stability_condition} implies that for all $\omega \in [0,\infty]$, 
    \begin{equation*}
        [-\infty, -1] \cap \Lambda(P_1(j\omega)P_2(j\omega)\cdots P_N(j\omega)) =\emptyset. 
    \end{equation*}  
    This indicates that \eqref{eq:sufficient_nyquist_condition} holds. 
    Hence the cyclic feedback system is stable and the proof is completed.  
\end{proof} 

\begin{figure}[htb] 
    \setlength{\unitlength}{1mm}
		\begin{center}
			\begin{picture}(84,20)
				\thicklines 
                \put(0,14){\vector(1,0){6}}
                \put(7,14){\circle{2}} 
                \put(8,14){\vector(1,0){6}} 
                \put(14,10){\framebox(8,8){$P_N$}} 
                \put(22,14){\vector(1,0){5}} 
                \put(30,14){\makebox(0,0){$\cdots$}} 
                \put(33,14){\vector(1,0){6}} 
                \put(40,14){\circle{2}} 
                \put(40,20){\vector(0,-1){5}} 
                \put(41,14){\vector(1,0){6}} 
                \put(47,10){\framebox(8,8){$P_2$}} 
                \put(55,14){\vector(1,0){5}} 
                \put(61,14){\circle{2}} 
                \put(61,20){\vector(0,-1){5}} 
                \put(62,14){\vector(1,0){6}} 
                \put(68,10){\framebox(8,8){$P_1$}} 
                \put(76,14){\vector(1,0){7}} 
                \put(79,14){\line(0,-1){11}} 
                \put(79,3){\line(-1,0){72}} 
                \put(7,3){\vector(0,1){10}} 
                \put(3,9){\makebox(3,3){$-$}} 
                \put(1.5,15){\makebox(3,3){$e_N$}} 
                \put(9,10){\makebox(3,3){$u_N$}} 
                \put(36,18){\makebox(3,3){$e_2$}} 
                \put(42.5,10){\makebox(3,3){$u_2$}} 
                \put(57,18){\makebox(3,3){$e_1$}} 
                \put(63.5,10){\makebox(3,3){$u_1$}} 
			\end{picture}
			\vspace{-0.5em} 
                \caption{Feedback interconnection of a cascade of systems. }
			\label{fig:cyclic_feedback_system} 
		\end{center} 
        \vspace{-0.5em} 
\end{figure}

By fixing $\theta_i(\omega) = 0$, Theorem \ref{thm:cyclic_system_theta_SRG} reduces to the following $0$-symmetric SRG result. The proof is omitted for brevity. 

\begin{corollary}           \label{cor:standard_SRG_stability_system}
    The feedback interconnection of a cascade of systems $P_1,P_2,\dots,P_N \in \mathcal{RH}_\infty^{m\times m}$ in Fig.~\ref{fig:cyclic_feedback_system} is stable if for all $\omega \in [0,\infty]$, 
    \begin{equation*}
        [-\infty, -1] \cap  \prod_{i=1}^N\mathrm{SRG}(\mathcal{P}_i(j\omega)) = \emptyset, 
    \end{equation*}  
    where each $\mathcal{P}_i(j\omega)$ is an arbitrary matrix set containing $P_i(j\omega)$, and at least $N-1$ of the $\mathcal{P}_i(j\omega)$ satisfy arc property. 
\end{corollary}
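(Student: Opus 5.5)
The plan is to specialize Theorem~\ref{thm:cyclic_system_theta_SRG} to the constant choice $\theta_i(\omega)\equiv 0$ for all $i$ and all $\omega\in[0,\infty]$. Two facts then need to be checked: that the $0$-symmetric quantities coincide with the ones in the corollary, and that the hypotheses translate correctly.

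First I will record that, by \eqref{eq:definition_theta_SRG_matrix} with $\theta=0$, $\mathrm{SRG}_0(C)=\mathrm{SRG}(C)$ for every matrix $C$. Taking unions, the same identity holds for every matrix set, so $\mathrm{SRG}_0(\mathcal{P}_i(j\omega))=\mathrm{SRG}(\mathcal{P}_i(j\omega))$. Next I will check that the $0$-arc property coincides with the arc property of Section~\ref{section:preliminaries}. For $\theta=0$ the reflection $\overline{z}e^{2j\theta}$ is just $\overline{z}$, and the rays $e^{j\theta}\mathbb{R}^{\pm}$ are $\mathbb{R}^{\pm}$. Hence $\mathrm{Arc}_0^{\pm}(z)=\mathrm{Arc}^{\pm}(z)$, and the two definitions agree verbatim.

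With these identifications, the hypothesis of the corollary states that, for every $\omega$, $[-\infty,-1]\cap\prod_{i}\mathrm{SRG}_0(\mathcal{P}_i(j\omega))=\emptyset$, with at least $N-1$ of the sets satisfying the $0$-arc property. This is exactly condition \eqref{eq:cyclic_system_stability_condition} with $\theta_i(\omega)\equiv0$. Theorem~\ref{thm:cyclic_system_theta_SRG} then yields stability of the cyclic interconnection. Since the theorem imposes no regularity on the functions $\theta_i(\omega)$, constant functions are admissible.

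There is no real obstacle here. The only point needing care is the verification that the $\theta$-arc notions reduce to the original arc notions when $\theta=0$, which is immediate from the definitions.
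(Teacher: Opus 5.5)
Your proposal is correct and is exactly the paper's (omitted) argument: set $\theta_i(\omega)\equiv 0$ in Theorem~\ref{thm:cyclic_system_theta_SRG}. It rests on the two facts you identify, namely $\mathrm{SRG}_0=\mathrm{SRG}$ by \eqref{eq:definition_theta_SRG_matrix}, and that the $0$-arc property coincides verbatim with the arc property of Section~\ref{section:preliminaries}.
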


Here we present a simple illustrative example to clarify the gap between Theorem \ref{thm:cyclic_system_theta_SRG} and Corollary \ref{cor:standard_SRG_stability_system}, which highlights the benefits of the $\theta$-symmetric SRG compared with the $0$-symmetric SRG.  

\begin{example}
    Consider two SISO transfer functions 
    \begin{equation*}
        p_1(s) = \frac{s^2+4s+36}{s^2+3s+5}, \quad p_2(s) = \frac{1}{p_1(s) } = \frac{s^2+3s+5}{s^2+4s+36} . 
    \end{equation*}  
    Clearly, $\left(1+p_1(s)p_2(s)\right)^{-1} \in \mathcal{RH}_\infty$, i.e, the feedback system of $p_1(s)$ and $p_2(s)$ is stable. We next show that Corollary \ref{cor:standard_SRG_stability_system} is not applicable to certify the feedback stability, whereas Theorem \ref{thm:cyclic_system_theta_SRG} successfully certifies it. 
    
    Note that for $\omega_0 = 3$, we have  
    \begin{equation*}
        \mathrm{SRG}(p_1(j\omega_0)) \! = \!  \{  -3j, 3j \}, \ \  \mathrm{SRG}(p_2(j\omega_0)) \! = \!  \left\{ \frac{1}{3}j,-\frac{1}{3}j  \right\} , 
    \end{equation*}
    implying 
    \begin{equation*} 
        [-\infty, -1] \cap  \mathrm{SRG}(p_1(j\omega_0)) \mathrm{SRG}(p_2(j\omega_0))   =  \{ -1 \} \neq \emptyset. 
    \end{equation*}  
    Hence, Corollary \ref{cor:standard_SRG_stability_system} is not applicable.  
    Now we show the effectiveness of Theorem \ref{thm:cyclic_system_theta_SRG}. 
    Let $\theta_1(\omega) = \angle p_1(j\omega), \theta_2(\omega) = \angle p_2(j\omega)$, then by Proposition \ref{prop:reduction_to_SISO_Nyquist_plot},  
    \begin{equation*}
         [-\infty,-1]  \cap \mathrm{SRG}_{\theta_1(\omega)}(p_1(j\omega)) \mathrm{SRG}_{\theta_2(\omega)}(p_2(j\omega))  = \emptyset  . 
    \end{equation*}   
    Since $p_2(j\omega)$ itself satisfies the $\theta_2(\omega)$-arc property, it follows from Theorem \ref{thm:cyclic_system_theta_SRG} that the feedback system is stable. 
\end{example}

When the gain and phase are considered separately, we obtain the following result, which can be reduced to the results in \cite{Chen2026Cyclic}. It follows from Theorem \ref{thm:cyclic_system_theta_SRG} and the proof is omitted. 

\begin{proposition}               \label{prop:cyclic_system_mixed_gain_phase}
        The feedback interconnection of a cascade of systems $P_1,P_2,\dots,P_N \in \mathcal{RH}_\infty^{m\times m}$ in Fig.~\ref{fig:cyclic_feedback_system} is stable if for each $\omega \in [0,\infty]$, one of the following conditions holds: 
        \begin{enumerate}[(i)]
            \item $\displaystyle  \prod_{i=1}^N \overline{\sigma}(P_i(j\omega)) < 1$; 
            \item there exist $\theta_i(\omega) \in \mathbb{R}, i = 1,2,\dots,N$, such that 
            \begin{equation*}
                \sum_{i=1}^N \Psi_{\theta_i(\omega)} (P_i(j\omega)) \subset (-\pi,\pi) . 
            \end{equation*}
        \end{enumerate}
\end{proposition}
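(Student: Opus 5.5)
The plan is to prove the statement frequency by frequency, using the generalized Nyquist route already used in the proof of Theorem \ref{thm:cyclic_system_theta_SRG}. Fix $\omega\in[0,\infty]$. By \eqref{eq:sufficient_nyquist_condition} it suffices to show that $I+\tau P_1(j\omega)\cdots P_N(j\omega)$ is nonsingular for all $\tau\in[0,1]$. Equivalently, we must show that $\Lambda(P_1(j\omega)\cdots P_N(j\omega))\cap[-\infty,-1]=\emptyset$. We treat the two alternatives (i) and (ii) separately, since at a given frequency either one may be the one that holds.

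Under (i), for any eigenvalue $\lambda$ of the product,
\[
|\lambda|\le\overline{\sigma}\Big(\prod_{i=1}^N P_i(j\omega)\Big)\le\prod_{i=1}^N\overline{\sigma}(P_i(j\omega))<1,
\]
by submultiplicativity of the spectral norm. Hence $\tau\lambda\neq-1$ for every $\tau\in[0,1]$.

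Under (ii), we use the phase part of the submultiplicative property, i.e.\ Corollary (ii) following Theorem \ref{thm:sum_product_properties}. This gives $\Psi_{\theta}(\prod_i P_i(j\omega))\subset\sum_i\Psi_{\theta_i(\omega)}(P_i(j\omega))\subset(-\pi,\pi)$ with $\theta=\sum_i\theta_i(\omega)$. Two points need care, and the second is the main obstacle.
\begin{enumerate}[(a)]
\item That corollary was stated without arc properties. It is obtained by applying the containment argument to the sets $\mathcal{A}_i=\{C:\Psi_{\theta_i}(C)\subset\Psi_{\theta_i}(P_i)\}$, which are cone-like matrix sets and so satisfy the $\theta_i$-arc property. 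I take the corollary as given.
\item From the phase containment we must conclude that no eigenvalue lies on the negative real axis. By Lemma \ref{lem:properties_theta_SRG}(i), every eigenvalue $\lambda\neq0$ of $M=\prod_iP_i(j\omega)$ lies in $\mathrm{SRG}_\theta(M)$. Its argument, measured relative to $\theta$, is $\theta\pm\angle_\theta(x,Mx)$, which lies in $\Psi_\theta(M)$. If $\Psi_\theta(M)\subset(-\pi,\pi)$, then the argument of $\lambda$ can be chosen in $(-\pi,\pi)$, so $\lambda\notin\mathbb{R}^-$. Scaling by $\tau\ge0$ preserves this, so $\tau\lambda\neq-1$, and zero eigenvalues are harmless.
\end{enumerate}
The subtle point in (b) is branch consistency: $\Psi_\theta$ is an interval of real angles, not angles mod $2\pi$. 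Its width can reach $2\pi$ only if $\Gamma_\theta=\pi$, but containment in the open interval $(-\pi,\pi)$ excludes $\pm\pi$. So every element of $\mathrm{SRG}_\theta(M)$ has an argument representative strictly inside $(-\pi,\pi)$, and the negative real axis is avoided.

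Finally, combine the frequencies. At each $\omega$ at least one alternative holds, so $I+\tau P_1\cdots P_N(j\omega)$ is nonsingular for all $\omega\in[0,\infty]$ and all $\tau\in[0,1]$. No continuity of the $\theta_i(\omega)$ in $\omega$ is needed, because the Nyquist condition is checked pointwise. The generalized Nyquist criterion then yields $(I+P_1\cdots P_N)^{-1}\in\mathcal{RH}_\infty^{m\times m}$, and by \eqref{eq:to_show_cyclic_stability_condition} the cyclic system is stable. An alternative route is to view (i) and (ii) as special choices of the matrix sets $\mathcal{P}_i$ (gain balls and phase cones, both satisfying the arc property) in Theorem \ref{thm:cyclic_system_theta_SRG}. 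Because that theorem allows the choice to vary with $\omega$, the result also follows from it directly.
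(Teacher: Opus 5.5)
Your proposal is correct and follows essentially the paper's route. The paper omits the proof, saying only that it follows from Theorem~\ref{thm:cyclic_system_theta_SRG}: that is the same frequency-wise generalized Nyquist argument, with gain-disk or phase-cone over-approximations (both satisfying the arc property), and your two cases unpack exactly this, namely the norm bound under (i) and the phase corollary plus spectrum containment under (ii), as your closing remark notes.
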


\subsection{Necessary and Sufficient Conditions for Robust Stability} 
\label{subsection:necessary_sufficient_conditions_systems}

For the cyclic interconnection shown in Fig.~\ref{fig:cyclic_feedback_system}, consider the case where some subsystems are uncertain and belong to a prescribed uncertainty set. Our objective is to establish the necessary and sufficient conditions on the nominal system to guarantee robust closed-loop stability.
In particular, we focus on the following set of uncertain systems, whose $0$-symmetric SRGs satisfy mixed gain and phase constraints. 
\begin{align*}
    & \mathfrak{P}[\beta,\gamma] \!=\! \{ P \! \in \! \mathcal{RH}_\infty^{m\times m} \! : \\  & \qquad \qquad \qquad  \quad  \mathrm{SRG}(P(j\omega)) \! \subset \! \mathcal{R}[\beta,\gamma],  \ \forall \omega \! \in \! [0,\infty] \} . 
\end{align*} 

To apply the matrix-level necessary and sufficient conditions in Theorem \ref{thm:matrix_iff_condition} to dynamical systems, the critical step lies in solving an interpolation problem associated with the $\theta$-symmetric SRG. 
Analogous to classical interpolation problems in the small-gain and passivity frameworks \cite{Ball1990Interpolation}, this is a fundamental yet considerably more challenging task, as the interpolating transfer matrix must satisfy simultaneous gain and phase constraints. Although the general problem remains largely open, we develop a constructive method for a specific class of cases, enabling us to derive exact robust stability conditions for cyclic interconnections. Specifically, the following result demonstrates that the single-frequency interpolation problem is solvable under the mixed constraints induced by the $0$-symmetric SRG. The proof is established via the explicit construction of the corresponding transfer matrix.

\begin{proposition}             \label{prop:interpolation_result}
    Let $\omega_0\in[0,\infty], \beta \in [0,\frac{\pi}{2}], \gamma > 0$ and $\Delta \in \mathbb{C}^{m\times m}$ be given, if 
    \begin{equation}                    \label{eq:interpolation_condition}
        \mathrm{SRG}(\Delta) \subset  \mathcal{R}[\beta,\gamma], 
    \end{equation} 
    then there exists a transfer matrix $P \in \mathfrak{P}[\beta,\gamma]$ such that $P(j\omega_0) = \Delta$. 
\end{proposition}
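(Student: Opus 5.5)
The plan is to construct $P$ explicitly as a real-rational combination of $\Delta$, its entrywise conjugate $\overline{\Delta}$, and $0$, with scalar stable weights. Membership in $\mathfrak{P}[\beta,\gamma]$ would then follow from a structural description of the matrix set
\[
\mathcal{Q}=\{C\in\mathbb{C}^{m\times m}:\mathrm{SRG}(C)\subset\mathcal{R}[\beta,\gamma]\}.
\]

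\textbf{Step 1 (geometry of $\mathcal{Q}$).} With $\theta=0$ and $\Gamma=\beta$, the boundary curve defining $\mathcal{R}[\beta,\gamma]$ is exactly the far arc of the disk centered at $c=\gamma/(1+\sin\beta)$ with radius $c\sin\beta$. So $\mathcal{R}[\beta,\gamma]$ is the union of the segments $[0,w]$ over $w$ in this disk, i.e.\ the cone-hull of the disk truncated at the far arc.

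Using the identity $\mathrm{SRG}(C)=\{\|Cx\|e^{\pm j\angle(x,Cx)}\}$, I would rewrite $C\in\mathcal{Q}$ as a family of quadratic inequalities in the pair $(x,Cx)$, in the spirit of Proposition~\ref{prop:theta_segmental_phase_LMI_graphical_interpretation}. For $\beta<\frac{\pi}{2}$ the expected form is a sector condition $\mathrm{Re}\langle x,Cx\rangle\ge\cos\beta\,\|x\|\|Cx\|$ together with a disk-type gain cap. I would then prove three closure properties of $\mathcal{Q}$:
\begin{enumerate}[(a)]
\item $\mathcal{Q}$ is closed under nonnegative scalings $t\le1$ (immediate from star-shapedness and $\mathrm{SRG}(tC)=t\,\mathrm{SRG}(C)$);
\item $\mathcal{Q}$ is closed under entrywise conjugation, since $\mathrm{SRG}(\overline{C})=\overline{\mathrm{SRG}(C)}$ and $\mathcal{R}[\beta,\gamma]$ is symmetric about the real axis;
\item $\mathcal{Q}$ is convex.
\end{enumerate}
Convexity is where the cone-hull structure must be exploited. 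I would try to show that $\mathcal{R}[\beta,\gamma]$ is the intersection of disks/half-planes $\mathcal{H}$ that are \emph{symmetric} about the real axis. For each such $\mathcal{H}$, the condition $\mathrm{SRG}(C)\subset\mathcal{H}$ is a matrix inequality of the form $\|C-aI\|\le r$ or $\mathrm{Re}(e^{\mp j\phi}\cdots)\ge0$ paired with its mirror, and each such inequality is convex in $C$.

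\textbf{Step 2 (scalar interpolants).} The endpoint cases are handled separately. For $\omega_0=0$ or $\omega_0=\infty$, realness of $P$ requires $\Delta$ to be real for interpolation to be possible; in that case the constant $P\equiv\Delta$ works. For $0<\omega_0<\infty$, I would pick real-rational stable scalar functions $a(s),b(s)$ with
\[
a(j\omega_0)=1,\qquad b(j\omega_0)=0,
\]
such that for every $\omega$ the pair $(a(j\omega),b(j\omega))$ allows
\[
a(j\omega)\Delta+b(j\omega)\overline{\Delta}
\]
to be written as a convex combination of $\Delta$, $\overline{\Delta}$, and $0$. The natural realization is
\[
P(s)=\tfrac{1}{2}\bigl(\Delta+\overline{\Delta}\bigr)f(s)+\tfrac{1}{2j}\bigl(\Delta-\overline{\Delta}\bigr)g(s),
\]
where $f,g$ are real-rational and stable with $f(j\omega_0)=1$, $g(j\omega_0)=j$, and $f(j\omega)\pm g(j\omega)/j$ lie in a suitable region for all $\omega$. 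A low-order lossless-type choice, e.g.\ built from $\frac{2\omega_0 s}{(s+\omega_0)^2}$ and $\frac{\omega_0^2-s^2}{(s+\omega_0)^2}$, makes $(\mathrm{Re},\mathrm{Im})$ trace a circle through $(1,j)$. By Step 1, $P(j\omega)\in\mathcal{Q}$ for all $\omega\in[0,\infty]$, and hence $P\in\mathfrak{P}[\beta,\gamma]$.

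\textbf{Main obstacle.} The hardest part is Step 1(c) combined with the fact that, at a general frequency, the combination coefficients of $\Delta$ and $\overline{\Delta}$ are complex rather than convex weights. I would first aim to show that $\mathcal{Q}$ is invariant under the specific real-structured combinations produced by the lossless pair $(f,g)$. If that fails, the fallback is to work directly with the LMI description of each symmetric half-plane or disk and verify it along the circle traced by $(f,g)$.
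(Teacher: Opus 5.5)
Your Step 1 is sound as far as it goes. Properties (a) and (b) are immediate, and (c) is actually true: for a mirror pair of half-planes $\{z:\mathrm{Re}(e^{\mp j\phi}z)\le h\}$, requiring $\mathrm{SRG}(C)$ to lie in both reads $a\cos\phi+|\sin\phi|\,\|Cx-ax\|\le h$ for all unit $x$, with $a=\mathrm{Re}\langle x,Cx\rangle$. This is convex in $C$.

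The gap is in Step 2. Properties (a)--(c) only give closure under $\lambda\Delta+\nu\overline{\Delta}$ with $\lambda,\nu\ge 0$ and $\lambda+\nu\le 1$. Your interpolant is $P(j\omega)=\lambda\Delta+\nu\overline{\Delta}$ with $\lambda=\frac{1}{2}(f+g/j)$ and $\nu=\frac{1}{2}(f-g/j)$ evaluated at $j\omega$; these coefficients are unique when $\Delta$ and $\overline{\Delta}$ are linearly independent. Real $\lambda,\nu$ at every frequency would require $f(j\omega)\in\mathbb{R}$ and $g(j\omega)\in j\mathbb{R}$ for all $\omega$, i.e.\ $f$ even and $g$ odd. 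For stable proper real-rational functions this forces $f$ constant and $g\equiv 0$, contradicting $g(j\omega_0)=j$.

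So the argument rests entirely on invariance of $\mathcal{Q}$ under complex weights, which you never establish. Your concrete candidate in fact violates it. Take $f=\frac{2\omega_0 s}{(s+\omega_0)^2}$ and $g=-\frac{\omega_0^2-s^2}{(s+\omega_0)^2}=\frac{s-\omega_0}{s+\omega_0}$, the sign needed for $g(j\omega_0)=j$. Then $P(0)=-\Delta_I$ and $P(\infty)=\Delta_I$. Since $\mathrm{SRG}(-C)=-\mathrm{SRG}(C)$ and $\mathcal{R}[\beta,\gamma]\cap(-\mathcal{R}[\beta,\gamma])=\{0\}$, this fails for every non-real $\Delta$, already when $m=1$. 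Structurally, an interpolant confined to $\mathrm{span}\{\Delta_R,\Delta_I\}$ has no reason to respect the SRG. The SRG is invariant under complex unitary similarity, not under the entrywise real/imaginary splitting.

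The missing idea is to replace $\Delta$ by a matrix ball with a real scalar center that contains $\Delta$ and lies in $\mathcal{Q}$, and then hand the realness issue to classical bounded-real interpolation.

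\emph{Case $\beta=\frac{\pi}{2}$.} Here $\mathcal{R}[\frac{\pi}{2},\gamma]$ is the disk $|z-\frac{\gamma}{2}|\le\frac{\gamma}{2}$, and the hypothesis is exactly $\|\frac{2}{\gamma}\Delta-I\|\le 1$.

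\emph{Case $\beta<\frac{\pi}{2}$.} Proposition~\ref{prop:theta_segmental_phase_LMI_graphical_interpretation} (with $\theta=0$, $\delta=\beta$) gives $\mu>0$ with $\|\Delta-\frac{\mu}{\cos\beta}I\|\le\mu\tan\beta$. The feasible $\mu$ form an interval. Its left endpoint is at most $\gamma\cos\beta/(1+\sin\beta)$, because every point of $\mathrm{SRG}(\Delta)$ lies in some scaled copy $tD$, $t\in[0,1]$, of the disk $D$ you identified. Choosing that $\mu$ puts the disk with center $\frac{\mu}{\cos\beta}$ and radius $\mu\tan\beta$ inside $\mathcal{R}[\beta,\gamma]$.

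\emph{Interpolation.} Take $N\in\mathcal{RH}_\infty^{m\times m}$ with $\|N\|_\infty\le 1$ and $N(j\omega_0)=(\Delta-\frac{\mu}{\cos\beta}I)/(\mu\tan\beta)$, and set $P=\mu\tan\beta\,N+\frac{\mu}{\cos\beta}I$ (for $\beta=\frac{\pi}{2}$, use $P=\frac{\gamma}{2}(N+I)$ with $N(j\omega_0)=\frac{2}{\gamma}\Delta-I$). At every other frequency $P(j\omega)$ lies in the same ball, whatever contraction $N(j\omega)$ turns out to be. No invariance under particular combinations of $\Delta$ and $\overline{\Delta}$ is needed; this is the paper's proof.

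Your side remark is correct: for $\omega_0\in\{0,\infty\}$ a real-rational $P$ can only interpolate a real $\Delta$. The statement itself glosses over this point.
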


\begin{proof}
    See Appendix \ref{app:proofs}. 
\end{proof}

\begin{remark}
    A natural extension involves uncertainties characterized by $\theta$-symmetric SRGs with asymmetric phase bounds. However, solving the corresponding interpolation problem is nontrivial and does not follow directly from Proposition \ref{prop:interpolation_result}. The main challenge lies in ensuring that the constructed interpolating transfer matrix has real coefficients. This direction is left for future research. 
\end{remark}

As $\gamma \to \infty$, $\mathfrak{P}[\beta,\gamma]$ degenerates to the cone $\mathfrak{C}[\beta]$ given by 
\begin{equation*} 
    \mathfrak{C}[\beta] \! = \! \{ P \! \in \! \mathcal{RH}_\infty^{m \times m} \! :  
    \Psi_0(P(j\omega))  \! \subset \! [-\beta,\beta],  \ \forall \omega  \! \in \!  [0,\infty]  \} . 
\end{equation*}
Consequently, Proposition \ref{prop:interpolation_result} yields the following result, which provides a solution to the interpolation problem under phase constraints alone. The proof follows directly from Proposition \ref{prop:interpolation_result} and is omitted. 

\begin{corollary}
    Let $\omega_0\in[0,\infty], \beta \in [0,\frac{\pi}{2}]$ and $\Delta \in \mathbb{C}^{m\times m}$ be given, if $\Psi_0(\Delta)\subset[-\beta,\beta]$, then there exists a transfer matrix $P \in \mathfrak{C}[\beta]$ such that $P(j\omega_0) = \Delta$.  
\end{corollary}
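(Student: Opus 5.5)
The plan is to reduce the corollary to Proposition~\ref{prop:interpolation_result} by choosing the gain bound $\gamma$ large enough, and then to use the inclusion $\mathfrak{P}[\beta,\gamma]\subset\mathfrak{C}[\beta]$. That inclusion is immediate: if $\mathrm{SRG}(P(j\omega))\subset\mathcal{R}[\beta,\gamma]$, every point of the SRG has argument in $[-\beta,\beta]$, so $\Gamma_0(P(j\omega))\le\beta$, i.e.\ $\Psi_0(P(j\omega))\subset[-\beta,\beta]$ for all $\omega$. So it suffices to find $\gamma>0$ with $\mathrm{SRG}(\Delta)\subset\mathcal{R}[\beta,\gamma]$.

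\emph{Case $\beta<\frac{\pi}{2}$.} Every point of $\mathrm{SRG}(\Delta)$ has the form $re^{j\delta}$ with $|\delta|\le\Gamma_0(\Delta)\le\beta$ and $r\le\overline{\sigma}(\Delta)$. In the definition of $\mathcal{R}[\beta,\gamma]$, the radial bound
\begin{equation*}
\frac{\gamma}{1+\sin\beta}\left(\cos\delta+\sqrt{\sin^2\beta-\sin^2\delta}\right)
\end{equation*}
is even in $\delta$ and decreasing in $|\delta|$ on $[0,\beta]$. Its minimum is therefore attained at $|\delta|=\beta$ and equals $\gamma\cos\beta/(1+\sin\beta)>0$. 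Choosing
\begin{equation*}
\gamma\ \ge\ \overline{\sigma}(\Delta)\,\frac{1+\sin\beta}{\cos\beta}
\end{equation*}
gives $\mathrm{SRG}(\Delta)\subset\mathcal{R}[\beta,\gamma]$. Proposition~\ref{prop:interpolation_result} then produces $P\in\mathfrak{P}[\beta,\gamma]\subset\mathfrak{C}[\beta]$ with $P(j\omega_0)=\Delta$.

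\emph{Case $\beta=\frac{\pi}{2}$.} This is the step I expect to be the main obstacle. Here $\mathcal{R}[\frac{\pi}{2},\gamma]$ is the closed disk with diameter $[0,\gamma]$, so it contains no nonzero point on the imaginary axis. Yet $\Gamma_0(\Delta)=\frac{\pi}{2}$ may be attained by some $x$ with $\Delta x\neq0$ and $\mathrm{Re}\langle x,\Delta x\rangle=0$; for example $\Delta=j$. The reduction to Proposition~\ref{prop:interpolation_result} then fails. In this case the condition $\Psi_0(\Delta)\subset[-\frac{\pi}{2},\frac{\pi}{2}]$ means exactly that $\mathrm{Re}\,\Delta\ge0$, and membership in $\mathfrak{C}[\frac{\pi}{2}]$ means $P$ is positive real. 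I would therefore proceed in two steps:
\begin{enumerate}[(a)]
\item First check whether the explicit construction in the proof of Proposition~\ref{prop:interpolation_result} (presumably built from real-rational phase-shaping filters applied to a decomposition of $\Delta$) goes through verbatim when the gain bound is dropped. If it does, it covers this boundary case directly.
\item Failing that, build a positive-real interpolant directly. Write $\Delta=H+jK$ with $H=\mathrm{Re}\,\Delta\ge0$ and $K$ Hermitian, and look for $P(s)=H+Q(s)$. Here $Q$ is a real-rational stable lossless-like term with $Q(j\omega_0)=jK$ and $\mathrm{Re}\,Q(j\omega)\ge -H$ for all $\omega$. Since $K$ is indefinite this is delicate, and this is the step I would work hardest on.
\end{enumerate}
As a fallback for (b), I would handle the frequencies $\omega_0=0$ and $\omega_0=\infty$ separately: there $\Delta$ must be real for a real-rational interpolant to exist, so a constant interpolant $P=\Delta$ should work.
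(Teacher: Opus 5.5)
For $\beta<\frac{\pi}{2}$ your argument is the paper's: the paper says the corollary follows from Proposition~\ref{prop:interpolation_result} by letting $\gamma\to\infty$. Your explicit choice of $\gamma$ and the inclusion $\mathfrak{P}[\beta,\gamma]\subset\mathfrak{C}[\beta]$ are both correct.

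You are also right that this reduction breaks at $\beta=\frac{\pi}{2}$. There $\mathcal{R}[\frac{\pi}{2},\gamma]=\{re^{j\delta}: r\le\gamma\cos\delta,\ |\delta|\le\frac{\pi}{2}\}$ meets the imaginary axis only at $0$, while $\Delta=j$ satisfies the hypothesis with $\mathrm{SRG}(\Delta)=\{\pm j\}$. So the paper's ``follows directly'' reduction does not cover this case either.

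The gap is that your proposal does not close this case. Plan (a) is a dead end. The paper's construction for $\beta=\frac{\pi}{2}$ is the affine map $M_0=\frac{2}{\gamma}\Delta-I$, followed by small-gain interpolation and $P=\frac{\gamma}{2}(M+I)$, and it needs $\|\frac{2}{\gamma}\Delta-I\|\le 1$. For $\Delta=j$ this norm is $\sqrt{1+4/\gamma^2}>1$ for every $\gamma$. Dropping the gain bound does not help, because every $P$ of that form has $\mathrm{SRG}(P(j\omega))$ inside a bounded disk tangent to the imaginary axis at $0$. Plan (b) is only sketched.

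The missing idea is to replace the affine map by a Cayley transform and add a scalar peaking factor.
\begin{itemize}
\item Since $\Delta+\Delta^H\ge 0$, the matrix $\Delta+I$ is invertible and $I-M_0=2(\Delta+I)^{-1}$ is invertible, where $M_0=(\Delta-I)(\Delta+I)^{-1}$.
\item $M_0$ is a contraction, because $\|(\Delta+I)z\|^2-\|(\Delta-I)z\|^2=2z^H(\Delta+\Delta^H)z\ge0$.
\item Take the small-gain interpolant $\hat M\in\mathcal{RH}_\infty^{m\times m}$ with $\|\hat M\|_\infty\le 1$ and $\hat M(j\omega_0)=M_0$, which the paper already uses. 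Put $M=\phi\hat M$ with $\phi(s)=\frac{2\epsilon s}{s^2+2\epsilon s+\omega_0^2}$, $\epsilon>0$.
\item Then $\phi(j\omega_0)=1$, $|\phi(j\omega)|<1$ for $|\omega|\ne\omega_0$ (including $\omega=0,\infty$), and $|\phi|<1$ in the open right half-plane. Hence $I-M$ is invertible on the closed right half-plane and at infinity; at $\pm j\omega_0$ this holds because $I-M_0$ and its conjugate are invertible.
\item Therefore $P=(I+M)(I-M)^{-1}\in\mathcal{RH}_\infty^{m\times m}$ and $P(j\omega_0)=\Delta$.
\item On the imaginary axis, $P+P^H=2(I-M)^{-H}(I-M^HM)(I-M)^{-1}\ge0$, so $P\in\mathfrak{C}[\frac{\pi}{2}]$.
\end{itemize}
The factor $\phi$ is essential: with $\hat M=\frac{s-\omega_0}{s+\omega_0}$ alone, $\Delta=j$ gives the improper $P=s/\omega_0$.

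Finally, your endpoint fallback is not an argument. The hypothesis does not make $\Delta$ real. As you yourself note, for non-real $\Delta$ (e.g.\ $m=1$, $\Delta=e^{j\beta}$ with $\beta>0$) no real-rational $P$ can satisfy $P(j\omega_0)=\Delta$ at $\omega_0\in\{0,\infty\}$. So the statement holds only for $\omega_0\in(0,\infty)$, or for real $\Delta$, where $P\equiv\Delta$ works for every $\beta$. The same restriction applies to Proposition~\ref{prop:interpolation_result}, on which your $\beta<\frac{\pi}{2}$ case rests. You should state this restriction explicitly rather than tacitly assume $\Delta$ is real.
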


We are now ready to state the necessary and sufficient conditions for the robust stability of cyclic interconnection in Fig.~\ref{fig:cyclic_feedback_system} when some subsystems are uncertainties. Without loss of generality, assume $P_k(s)$ is the nominal system, while $P_i(s), i \in \mathcal{K}'$, represent uncertain systems.

\begin{theorem}                 \label{thm:system_iff_condition} 
    Let $\beta_i \in [0,\frac{\pi}{2}], \gamma_i > 0, i \in \mathcal{K}'$, be given, then the feedback interconnection of a cascade of systems $P_1,P_2,\dots,P_N \in \mathcal{RH}_\infty^{m\times m}$ in Fig.~\ref{fig:cyclic_feedback_system} is stable for all $P_i \in \mathfrak{P}[\beta_i,\gamma_i], i  \in \mathcal{K}' $, if and only if there exists a function $\theta(\omega) \in \mathbb{R}$ such that 
    \begin{equation*}
        -1 \notin \mathrm{SRG}_{\theta(\omega)} (P_k(j\omega)) \prod_{i \in \mathcal{K}'} \mathcal{R}[\beta_i,\gamma_i] , \quad \forall \omega \in [0,\infty]. 
    \end{equation*}
\end{theorem}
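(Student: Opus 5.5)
We prove the two directions separately. Sufficiency reduces to the frequency-wise matrix result. Necessity uses the matrix-level necessity of Theorem \ref{thm:matrix_iff_condition} together with the interpolation result of Proposition \ref{prop:interpolation_result}.

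\emph{Sufficiency.} Fix $\omega$ and let $P_i\in\mathfrak{P}[\beta_i,\gamma_i]$, $i\in\mathcal{K}'$, be arbitrary. Each $P_i(j\omega)$ then lies in the matrix set $\mathcal{M}_0[\beta_i,\gamma_i]:=\{C:\mathrm{SRG}(C)\subset\mathcal{R}[\beta_i,\gamma_i]\}$, which is $\mathcal{M}_c[-\beta_i,\beta_i,\gamma_i]$ with phase center $0$. First we check that this set has the $0$-arc property. Indeed, it is the union of the SRGs of all matrices whose SRG lies in $\mathcal{R}[\beta_i,\gamma_i]$. Since $\beta_i\le\pi/2$ and $\mathcal{R}[\beta_i,\gamma_i]$ is symmetric about the real axis and angularly convex around the positive real axis, the scalar $|z|e^{j\phi}$ belongs to the set for every $z\in\mathcal{R}[\beta_i,\gamma_i]$ and every $|\phi|\le|\angle z|$. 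Hence $\mathrm{SRG}_0(\mathcal{M}_0[\beta_i,\gamma_i])=\mathcal{R}[\beta_i,\gamma_i]$, and $\mathrm{Arc}^+(z)\subset\mathcal{R}[\beta_i,\gamma_i]$ for all $z$ in it. We take $\theta_i(\omega)=0$ for $i\in\mathcal{K}'$ and $\theta_k(\omega)=\theta(\omega)$. Theorem \ref{thm:matrix_nonsingular_condition}(ii), through the inclusion chain in its proof, then gives $\Lambda(\prod P_i(j\omega))\subset \mathrm{SRG}_{\theta(\omega)}(P_k(j\omega))\prod_{i\in\mathcal{K}'}\mathcal{R}[\beta_i,\gamma_i]$. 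The left side of this inclusion uses the same cyclic product as in the proof of Theorem \ref{thm:cyclic_system_theta_SRG}, with the $\theta$'s summing appropriately. Because every $\mathcal{R}[\beta_i,\gamma_i]$ is star-shaped with respect to $0$ (it is closed under scaling by $\tau\in[0,1]$), the hypothesis $-1\notin(\cdots)$ at every $\omega$ yields $-1\notin\tau\Lambda(\prod P_i(j\omega))$ for all $\tau\in(0,1]$. The case $\tau=0$ is trivial. The generalized Nyquist argument from the proof of Theorem \ref{thm:cyclic_system_theta_SRG} then gives stability.

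\emph{Necessity.} We argue by contraposition. Suppose that for some $\omega_0$ we have $-1\in\mathrm{SRG}_{\theta}(P_k(j\omega_0))\prod_{i\in\mathcal{K}'}\mathcal{R}[\beta_i,\gamma_i]$ for every $\theta$. If no function $\theta(\omega)$ satisfies the condition, such an $\omega_0$ exists, because the condition is pointwise in $\omega$ and a valid $\theta$ could otherwise be chosen independently at each frequency. The necessity part of Theorem \ref{thm:matrix_iff_condition}, applied with $\alpha_i=-\beta_i$ and the uniform-center class $\mathcal{M}_c$, produces matrices $\Delta_i$ with $\mathrm{SRG}(\Delta_i)\subset\mathcal{R}[\beta_i,\gamma_i]$ such that $I+\prod_i A_i$ is singular at $\omega_0$, where $A_k=P_k(j\omega_0)$ and $A_i=\Delta_i$ otherwise. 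Proposition \ref{prop:interpolation_result} then provides $P_i\in\mathfrak{P}[\beta_i,\gamma_i]$ with $P_i(j\omega_0)=\Delta_i$. Consequently $\det(I+P_1\cdots P_N)(j\omega_0)=0$, so $(I+P_1\cdots P_N)^{-1}\notin\mathcal{RH}_\infty$, and the loop is unstable by \eqref{eq:to_show_cyclic_stability_condition}.

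\emph{Main obstacle.} The delicate point is the endpoint $\omega_0\in\{0,\infty\}$. There $P_i(j\omega_0)$ must be real, while the destabilizing $\Delta_i$ from Theorem \ref{thm:matrix_iff_condition} may be complex. We would handle this as follows. At $\omega_0=0,\infty$ the nominal value $P_k(j\omega_0)$ is real, so the failing condition is symmetric under conjugation. We would then either extract real $\Delta_i$ by a conjugation/averaging argument, or else show that the set of frequencies at which the condition fails is closed. In the latter case a failure only at an endpoint is approximated by failures at nearby finite frequencies; alternatively, by continuity, failure at an endpoint can be pushed to a nearby $\omega_0\in(0,\infty)$ where the complex interpolation of Proposition \ref{prop:interpolation_result} applies without reality constraints. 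We would first try the route of showing that the failing set is closed, since that route needs only continuity.
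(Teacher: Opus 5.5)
Your core argument is the paper's own proof: sufficiency comes from the frequency-wise product criterion (Theorems~\ref{thm:matrix_nonsingular_condition} and \ref{thm:cyclic_system_theta_SRG}) applied with the $0$-arc sets $\{C:\mathrm{SRG}(C)\subset\mathcal{R}[\beta_i,\gamma_i]\}$. Necessity comes from fixing a bad frequency, applying the necessity half of Theorem~\ref{thm:matrix_iff_condition} to $\mathcal{M}_c[-\beta_i,\beta_i,\gamma_i]$, and interpolating with Proposition~\ref{prop:interpolation_result}. The details you add are correct and actually needed: star-shapedness of $\mathcal{R}[\beta_i,\gamma_i]$ upgrades $-1\notin(\cdot)$ to $[-\infty,-1]\cap(\cdot)=\emptyset$, and the pointwise choice of $\theta(\omega)$ yields a single failing $\omega_0$.

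\textbf{The gap is the endpoint case, and your proposed repair does not work.} You are right that for $\omega_0\in\{0,\infty\}$ every $P_i\in\mathcal{RH}_\infty^{m\times m}$ takes a real value. So Proposition~\ref{prop:interpolation_result} cannot produce a complex $\Delta_i$ there, and the paper's proof does not address this either. Neither of your remedies closes the hole.
\begin{itemize}
\item \emph{Closedness.} The failing set is automatically closed: its complement is open, since $-1\notin K$ for a compact $K$ varying continuously in $(\omega,\theta)$ is an open condition. But this is useless, because a closed set can be $\{0\}$. You need the converse, namely that a failure at an endpoint forces a failure at some interior frequency. Continuity propagates exclusion, not membership.
\item \emph{Realification.} This fails in general. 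Take $m=1$, $P_k(0)=1$, and three uncertain blocks with $\beta_i=\frac{\pi}{2}$, $\gamma_i=2$, so $\mathcal{R}[\beta_i,\gamma_i]=\{z:|z-1|\le 1\}$. Since $1\in\mathrm{SRG}_\theta(1)=\{1,e^{2j\theta}\}$, and $(e^{j\pi/3})^3=-1$ with $e^{j\pi/3}$ admissible, the condition fails at $\omega=0$ for every $\theta$. Yet every real admissible value lies in $[0,2]$, so $1+P_k(0)\Delta_1\Delta_2\Delta_3\ge 1$ and no real destabilizer exists at $\omega_0=0$.
\end{itemize}

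\textbf{What is needed} is an argument that an endpoint failure cannot be isolated. Such an argument has to use analyticity, not continuity. For $m=1$ it can be done as follows.
\begin{itemize}
\item If $p_0:=P_k(0)<0$, a failure at $\omega=0$ is realized by real constant uncertainties. So assume $p_0>0$, and suppose the condition held on $(0,\infty)$.
\item The radial extent of $\prod_{i\in\mathcal{K}'}\mathcal{R}[\beta_i,\gamma_i]$ does not decrease as the direction moves from $\pi$ toward $0$. Hence $|P_k(j\omega)|<p_0$ for $\omega>0$.
\item That extent grows at least linearly as the direction leaves $\pi$. Hence $p_0-|P_k(j\omega)|\ge c\,|\angle P_k(j\omega)|$ for small $\omega$.
\item Now $\|P_k\|_\infty=p_0$ is attained at $s=0$, and the boundary Schwarz (Julia) lemma gives $P_k'(0)<0$. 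So $|\angle P_k(j\omega)|$ is of exact order $\omega$, while $p_0-|P_k(j\omega)|=O(\omega^2)$. This is a contradiction.
\item The endpoint $\omega=\infty$ follows via $s\mapsto 1/s$.
\end{itemize}
For $m\ge 2$ a separate argument is required. As written, your necessity proof, like the paper's, is complete only when some failing frequency lies in $(0,\infty)$.
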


\vspace{0.5em} 

\begin{proof}
    The sufficiency follows from Theorem \ref{thm:cyclic_system_theta_SRG} and we show the necessity here. Suppose by contradiction that for any $\theta(\omega)$, there always exists $\omega_0 \in [0,\infty]$ such that 
    \begin{equation*}
        -1 \in \mathrm{SRG}_{\theta(\omega_0)} (P_k(j\omega_0)) \prod_{i \in \mathcal{K}'} \mathcal{R}[\beta_i,\gamma_i] . 
    \end{equation*} 
    Then it follows from Theorem~\ref{thm:matrix_iff_condition} that there exist matrices $\Delta_i $ such that 
    \begin{equation*}
       \mathrm{SRG}(\Delta_i) \subset \mathcal{R}[\beta_i,\gamma_i], \quad i \in \mathcal{K}' 
    \end{equation*} 
    and $I+ \Delta_1  \cdots \Delta_{k-1} P_k(j\omega_0) \Delta_{k+1} \cdots  \Delta_N$ is singular. Then by Proposition \ref{prop:interpolation_result}, there exist $P_i\in\mathcal{RH}_\infty^{m\times m}$ such that 
    \begin{equation*}
        P_i \in \mathfrak{P}[\beta_i,\gamma_i],  \ \  \text{ and } \ \  P_i(j\omega_0) = \Delta_i, \quad i \in \mathcal{K}'. 
    \end{equation*}
    This indicates that for these $P_i(s)$, we have 
    \begin{equation*}
        I + P_1(j\omega_0) \cdots P_k(j\omega_0) \cdots P_N(j\omega_0) \ \text{ is singular, } 
    \end{equation*} 
    implying the feedback system is not stable, which is a contradiction. This completes the proof. 
\end{proof}

Focusing solely on phase uncertainties, the following result is a direct consequence of Theorem \ref{thm:system_iff_condition}. It establishes a necessary and sufficient condition for robust stability under phase-bounded uncertainties.

\begin{corollary}
    Let $\beta_i \in [0,\frac{\pi}{2}], i \in \mathcal{K}'$, be given, then the feedback interconnection of a cascade of systems $P_1,P_2,\dots,P_N \in \mathcal{RH}_\infty^{m\times m}$ in Fig.~\ref{fig:cyclic_feedback_system} is stable for all $P_i \in \mathfrak{C}[\beta_i], i \in \mathcal{K}'$, if and only if there exists a function $\theta(\omega)\in\mathbb{R}$ such that 
    \begin{equation*}
        \Psi_{\theta(\omega)}(P_k(j\omega)) \! + \! \sum_{i \in \mathcal{K}'} [-\beta_i,\beta_i] \! \subset \! (-\pi,\pi) , \quad \forall \omega \! \in \!  [0,\infty] . 
    \end{equation*}
\end{corollary}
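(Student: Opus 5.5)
The sufficiency direction is immediate. For each uncertain $P_i\in\mathfrak{C}[\beta_i]$ we have $\Psi_0(P_i(j\omega))\subset[-\beta_i,\beta_i]$. Choosing $\theta_i(\omega)=0$ for $i\in\mathcal{K}'$ and $\theta_k(\omega)=\theta(\omega)$, the assumed inclusion gives $\sum_{i}\Psi_{\theta_i(\omega)}(P_i(j\omega))\subset(-\pi,\pi)$ at every frequency. Proposition~\ref{prop:cyclic_system_mixed_gain_phase}(ii) then yields stability. Since $\theta_i$ enters only pointwise there, no continuity of $\theta(\omega)$ is required.

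For necessity I would argue by contrapositive, reducing to Theorem~\ref{thm:system_iff_condition} via $\mathfrak{P}[\beta_i,\gamma_i]\subset\mathfrak{C}[\beta_i]$. This inclusion holds because $\mathcal{R}[\beta,\gamma]\subset\mathcal{S}[\beta,\gamma]$ forces $\Psi_0\subset[-\beta,\beta]$. The condition is pointwise in $\omega$, so its failure for every function $\theta(\omega)$ means there is an $\omega_0$ at which it fails for every $\theta\in\mathbb{R}$. By Lemma~\ref{lem:properties_theta_SRG}(iii), together with the $2\pi$-periodicity of the phase statement, it suffices to consider $\theta$ in a compact interval.

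Write $B=\sum_{i\in\mathcal{K}'}\beta_i$ and $\Delta_0=P_k(j\omega_0)$. The next step is to translate the phase failure into a geometric statement. Each point of $\mathrm{SRG}_\theta(\Delta_0)$ has argument in $\Psi_\theta(\Delta_0)$, and the extreme values $\theta\pm\Gamma_\theta$ are attained, or approached, by nonzero points. The product of the sectors $\mathcal{S}[\beta_i,\infty]$ is the full cone of half-angle $B$ (or the whole plane if $B\ge\pi$). Hence failure of $\Psi_\theta(\Delta_0)+[-B,B]\subset(-\pi,\pi)$ means the cone $\mathrm{SRG}_\theta(\Delta_0)\cdot\mathcal{S}[B,\infty]$ contains the ray $e^{j\pi}\mathbb{R}^+$, and in particular contains $-1$. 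The degenerate case $\Delta_0=0$ (where $\mathrm{SRG}_\theta=\{0\}$ and $\Psi_\theta=\{\theta\}$) only fails when $B\ge\pi$, and I would handle it separately: there the choice $\theta=0$ succeeds, so it never causes failure.

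The main obstacle is passing from the cones $\mathcal{S}[\beta_i,\infty]$ to the bounded regions $\mathcal{R}[\beta_i,\gamma_i]$ with \emph{finite} $\gamma_i$ that work uniformly in $\theta$, so that Theorem~\ref{thm:system_iff_condition} can be invoked. The difficulty is that the representation $-1=s\,r$, with $s\in\mathrm{SRG}_\theta(\Delta_0)$ and $r$ in the product of the regions, requires $|r|=1/|s|$.

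For a fixed $\theta$, any $r$ on the ray of the appropriate angle (angle within $[-B,B]$) lies in $\prod_i\mathcal{R}[\beta_i,\gamma_i]$ once the $\gamma_i$ are large enough. So the plan is to pick $s_\theta$ with $|s_\theta|$ bounded below uniformly in $\theta$. I would first try taking $s_\theta$ from the points $\|\Delta_0x\|/\|x\|\,e^{j(\theta\pm\angle_\theta(x,\Delta_0x))}$. These have modulus at least $\underline{\sigma}(\Delta_0)$, which settles the nonsingular case directly. If $\Delta_0$ is singular, I would instead use compactness of the unit sphere in $x$ and of the $\theta$-interval, together with continuity of $\angle_\theta(x,\Delta_0x)$ away from $\ker\Delta_0$, to show that the required angles are realized by $x$ bounded away from the kernel. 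This gives a uniform lower bound on $|s_\theta|$, hence a uniform finite $\gamma_i$.

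With that $\gamma_i$, the condition of Theorem~\ref{thm:system_iff_condition} fails at $\omega_0$ for every $\theta$. Theorem~\ref{thm:system_iff_condition} then yields destabilizing $P_i\in\mathfrak{P}[\beta_i,\gamma_i]\subset\mathfrak{C}[\beta_i]$, which completes the contrapositive.
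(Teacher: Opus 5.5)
Your sufficiency argument is correct. For necessity you follow the paper's route: the paper only calls the corollary a direct consequence of Theorem~\ref{thm:system_iff_condition}, i.e.\ the limit $\gamma_i\to\infty$. That limit fails in the degenerate cases you tried to dispose of. There the ``only if'' direction is in fact false, so those steps cannot be repaired.

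\textbf{The zero and singular cases.} Your treatment of $\Delta_0=0$ is wrong. Since $\Psi_\theta(0)=\{\theta\}$, for $B\ge\pi$ no $\theta$ satisfies the phase inclusion, including $\theta=0$. Yet with $P_k\equiv0$ and two blocks with $\beta_i=\frac{\pi}{2}$, the loop is trivially stable.

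For singular $\Delta_0\neq0$, a phase that is only ``approached'' does not put $-1$ in the cone. Your compactness claim, that the extreme angles are realized by $x$ bounded away from $\ker\Delta_0$, is also false. For $\Delta_0=\begin{bmatrix}0&1\\0&0\end{bmatrix}$ one has $\angle_\theta(x,\Delta_0x)<\pi$ for every $x$, but $\Gamma_\theta(\Delta_0)=\pi$ for every $\theta$. The supremum is approached only as $x\to\ker\Delta_0$. Take $P_k\equiv\Delta_0$, $N=2$, $\beta=0$. Every element of $\mathfrak{C}[0]$ is a nonnegative multiple $cI$ of the identity on $j\mathbb{R}$, and $\det(I+\tau c\Delta_0)=1$. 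So the loop is robustly stable, although the condition fails for every $\theta$.

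This is not an artifact of $\beta=0$. Take $\Delta_0=\begin{bmatrix}0&1\\0&a\end{bmatrix}$ with $a>0$ and $\beta=\arctan a$. One computes $\angle_0(u,\Delta_0u)<\pi-\beta=\Gamma_0(\Delta_0)$ for all $u\notin\ker\Delta_0$. This rules out a singular $I+\tau\Delta_0\Delta$ with $\Gamma_0(\Delta)\le\beta$. Indeed, $u=\tau\Delta x$ would give $\Delta_0u=-x$ and hence $\angle_0(u,\Delta_0u)\ge\pi-\beta$. On the other hand, $\Gamma_0\le|\theta|+\Gamma_\theta$ shows that no $\theta$ passes the test. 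The closed interval $\Psi_\theta$ records $\Gamma_\theta$ whether or not it is attained by a nonzero point of $\mathrm{SRG}_\theta$, and robust stability depends on exactly that distinction.

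\textbf{The boundary case $\beta_i=\frac{\pi}{2}$.} Even for nonsingular $\Delta_0$, the passage to finite $\gamma_i$ breaks when some $\beta_i=\frac{\pi}{2}$. The region $\mathcal{R}[\frac{\pi}{2},\gamma]$ is the disk $|z-\frac{\gamma}{2}|\le\frac{\gamma}{2}$, which has no nonzero point of argument $\pm\frac{\pi}{2}$. So your claim that every $r$ with $\angle r\in[-B,B]$ lies in $\prod_i\mathcal{R}[\beta_i,\gamma_i]$ for large $\gamma_i$ fails at the boundary angles.

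Concretely, take $P_k(j\omega_0)=jI$ and one block with $\beta=\frac{\pi}{2}$. The phase condition fails for every $\theta$. Yet $\mathrm{SRG}_{\pi/2}(jI)\,\mathcal{R}[\frac{\pi}{2},\gamma]=j\,\mathcal{R}[\frac{\pi}{2},\gamma]$ avoids $-1$ for every $\gamma$, so the failure at $\omega_0$ that you need never occurs. The conclusion is true here, but only through an interpolant in $\mathfrak{C}[\frac{\pi}{2}]$ that lies in no $\mathfrak{P}[\frac{\pi}{2},\gamma]$. An example is a stable positive-real $p$ with $p(j\omega_0)=j$, obtainable by a Cayley transform. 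Neither your argument nor Proposition~\ref{prop:interpolation_result} provides it.

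\textbf{The pointwise translation.} Your step ``the phase test fails at $\theta$, hence $-1$ lies in the cone at $\theta$'' is not valid even for nonsingular $\Delta_0$. The arguments of $\mathrm{SRG}_\theta(\Delta_0)$ need not fill $\Psi_\theta(\Delta_0)$. Also, the $(-\pi,\pi)$ test is not $2\pi$-periodic in $\theta$. For example, with $\Delta_0=I$ the test fails at $\theta=\pi$, although $\mathrm{SRG}_\pi(I)=\{1\}$. In the nondegenerate case you must combine the hypothesis at $\theta$ and at $\theta+\pi$.
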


\begin{remark}
    While derived for stable subsystems, analogous results hold for semi-stable cases provided the $\theta$-symmetric SRG is evaluated on a generalized Nyquist contour. 
\end{remark}

\section{Cactus Network Stability Analysis} 
\label{section:cactus_graph}  

In Section \ref{section:single_loop_stability_systems}, we have performed stability analysis for the cyclic interconnection, corresponding to the single loop case from a network perspective. We now generalize the $\theta$-symmetric SRG framework to multi-loop networks with cactus structure. Such cactus networks arise in a wide range of applications \cite{Arcak2011DiagonalStabilityCactus,Azuma2019CactusExpandable,Duan2021GraphStability}. 

Let $\mathbb{G} = (\mathcal{V},\mathcal{E})$ be a directed graph representation of an interconnected network with vertices $\mathcal{V} $ and edges $\mathcal{E} \subset \mathcal{V}\times \mathcal{V}$. A path is a sequence of edges that connect a sequence of vertices.  
A path is called a loop if the starting and ending vertices are the same. A loop is called a simple loop if there are no repeated vertices other than the starting and ending vertices. The graph $\mathbb{G}$ is said to be strongly connected if for any $i,j\in\mathcal{V}$, there exists a path from vertex $i$ to vertex $j$. We focus on graphs with a cactus structure, defined as follows. 

\begin{definition}[\!\! \cite{Arcak2011DiagonalStabilityCactus}]
    A graph $\mathbb{G}=(\mathcal{V},\mathcal{E})$ is said to be a cactus graph, if $\mathbb{G}$ is strongly connected and each pair of distinct simple loops have at most one common vertex. 
\end{definition} 

For a cactus graph $\mathbb{G}=(\mathcal{V},\mathcal{E})$, let $l$ be the number of simple loops and denote their lengths by $n_q, q = 1,2,\dots,l$. Let $\mathcal{V}_q$ be the set of vertices traversed by loop $q$. 

Consider a directed cactus graph $\mathbb{G}$ with $N$ vertices, where each vertex $i$ is associated with a transfer matrix $P_i \in \mathcal{RH}_\infty^{m\times m}$. 
The system interconnections are described by a signed adjacency matrix $A = (a_{ij}) \in \mathbb{R}^{N\times N}$, where the entry $a_{ij} \in \{1, -1\}$ indicates a directed edge from vertex $j$ to vertex $i$, and $a_{ij} = 0$ otherwise. To ensure that every simple loop in $\mathbb{G}$ operates as a negative feedback configuration, exactly one edge within each simple loop carries a sign of $-1$, while all other edges in the loop carry a sign of $1$. 
Denote $\widetilde{P} = \mathrm{diag} \{ P_1, P_2,\dots,P_N \} \in \mathcal{RH}_\infty^{Nm\times Nm}$, then the interconnected network is stable if \cite{Zhou1996Robust} 
\begin{equation}              
\label{eq:network_stability_condition_direct_inverse} 
    \left(I + (-A \otimes I) \widetilde{P} \right)^{-1} \in \mathcal{RH}_{\infty}^{Nm\times Nm}. 
\end{equation} 
Directly evaluating this global inverse is computationally intractable for large-scale networks and practically infeasible when exact plant models are uncertain. In contrast, for cactus networks, this condition can be efficiently verified by decomposing the network into basic parallel, cascade and feedback interconnections among the individual subsystems. Consequently, the global stability test reduces to sequentially verifying a series of local loops and their aggregated equivalents.  
This decomposition not only significantly reduces the computational complexity but also provides clear structural insights into the network.

\subsection{Sunflower Case} 
Among cactus graphs, the sunflower graph is a representative structure. 
A cactus graph is called a sunflower graph if one of its loops intersects all other loops.  
Such a special loop is called a central (or global) loop while the other loops are called petal (or local) loops. Furthermore, the pairwise intersections, referred to as hubs, are strictly contained in the central loop.   
An illustration of a sunflower graph is shown in Fig.~\ref{fig:sunflower}.

Without loss of generality, let loop $1$ be the central loop and $\mathcal{V}_1$ denote its vertex set. Let $\mathcal{V}_\text{hub} \subset \mathcal{V}_1$ be the set of hubs shared with petal loops. For each hub $i \in \mathcal{V}_\text{hub}$, let $\mathcal{Q}_i$ denote the index set of the petal loops containing vertex $i$.

To derive stability conditions for such interconnected structures using the $\theta$-symmetric SRG framework, we first introduce the notions of $\theta$-chord hull and $\theta$-arc hull, respectively. For a region $\mathcal{R} \subset \mathbb{C}$ and $\theta \in \mathbb{R}$, define 
\begin{align*} 
        \mathcal{H}_{\theta}^\text{chord} (\mathcal{R}) & = \{ w \in \mathbb{C}: \exists z \in \mathcal{R} \text{ such that } w \in  \mathrm{Chord}_\theta(z) \}, \\ 
        \mathcal{H}_{\theta}^\text{arc} (\mathcal{R})  & = \{ w \in \mathbb{C}: \exists z \in \mathcal{R} \text{ such that } w \in \mathrm{Arc}_\theta^+  (z)  \} . 
\end{align*}  
Accordingly, let $\kappa \in \{ \text{chord}, \text{arc} \}$ denote the specified geometric type. We can define the following matrix sets via their $\theta$-symmetric SRGs as 
\begin{equation}            \label{eq:matrix_set_theta_chord_and_arc_hull}
    \mathcal{M}_{\theta}^\kappa (C) \! = \! \{  M \! \in \! \mathbb{C}^{n\times n} \! : \! \mathrm{SRG}_\theta(M) \! \subset \! \mathcal{H}_\theta^\kappa (\mathrm{SRG}_\theta(C)) \} . 
\end{equation}  
The following lemma is a direct consequence of this definition. 

\begin{lemma}                   \label{lem:theta_chord_arc_hull}
    Let a matrix $C \in \mathbb{C}^{n\times n}$ and $\theta \in \mathbb{R}$ be given. Then $\mathcal{M}_{\theta}^\text{chord}(C)$ satisfies the $\theta$-chord property and $\mathcal{M}_{\theta}^\text{arc}(C)$ satisfies the $\theta$-arc property. 
\end{lemma}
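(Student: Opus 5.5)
The plan is to prove that the $\theta$-symmetric SRG of the matrix set equals the hull exactly, and then that each hull is closed under the corresponding chord/arc operation. Write $\mathcal{H}=\mathcal{H}_\theta^\kappa(\mathrm{SRG}_\theta(C))$ for $\kappa\in\{\text{chord},\text{arc}\}$. The claim is
\begin{equation*}
\mathrm{SRG}_\theta(\mathcal{M}_\theta^\kappa(C))=\mathcal{H}.
\end{equation*}
The inclusion ``$\subset$'' is immediate from the definition \eqref{eq:matrix_set_theta_chord_and_arc_hull}, since $\mathrm{SRG}_\theta$ of a set is the union over its members.

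For ``$\supset$'', I would realize every point of $\mathcal{H}$ by an explicit normal matrix whose $\theta$-symmetric SRG is just that point and its mirror image. Given $w=e^{j\theta}(a+bj)\in\mathcal{H}$, set
\begin{equation*}
M_w=e^{j\theta}(aI+jbH),\qquad H=\mathrm{diag}\{1,-1,\dots,-1\}
\end{equation*}
(for $n=1$ take $H=1$). Then $e^{-j\theta}M_w$ is normal with eigenvalues $a\pm jb$, so $\|e^{-j\theta}M_wx\|=|w|\,\|x\|$ for every $x$. Moreover $\mathrm{Re}\langle x,e^{-j\theta}M_wx\rangle=a\|x\|^2$, because $x^HHx$ is real. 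Hence $\angle_\theta(x,M_wx)$ is constant, and by \eqref{eq:definition_PS_SRG_matrix} we get $\mathrm{SRG}_\theta(M_w)=\{w,\overline{w}e^{2j\theta}\}$.

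This set lies in $\mathcal{H}$, because both hulls are symmetric about $e^{j\theta}\mathbb{R}$: the endpoints of a chord or arc are mirror images of each other. Therefore $M_w\in\mathcal{M}_\theta^\kappa(C)$ and $w\in\mathrm{SRG}_\theta(\mathcal{M}_\theta^\kappa(C))$, which proves ``$\supset$''.

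With the equality in hand, it remains to show that $\mathcal{H}$ is closed under the relevant operation. For the chord case, take $w\in\mathrm{Chord}_\theta(z)$ with $z\in\mathrm{SRG}_\theta(C)$. All $\theta$-chords are segments perpendicular to $e^{j\theta}\mathbb{R}$ and symmetric about it. Hence $\mathrm{Chord}_\theta(w)$ is a subsegment of $\mathrm{Chord}_\theta(z)$, and so $\mathrm{Chord}_\theta(w)\subset\mathcal{H}$.

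For the arc case, take $w\in\mathrm{Arc}_\theta^+(z)$. Then $|w|=|z|$ and $w$ lies on the arc of that circle that is symmetric about the ray $e^{j\theta}\mathbb{R}^+$. Thus $\mathrm{Arc}_\theta^+(w)$ is a subarc of $\mathrm{Arc}_\theta^+(z)$, which lies in $\mathcal{H}$. This gives option (i) of the $\theta$-arc property.

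The only delicate step is the realization $M_w$. It must have a two-point $\theta$-SRG in every dimension $n$, including odd $n$ and $b=0$. The Hermitian-unitary choice of $H$ handles all of these cases uniformly. The remaining steps are elementary planar geometry.
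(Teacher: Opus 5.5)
Your proof is correct and follows the paper's route: both show that $\mathrm{SRG}_\theta(\mathcal{M}_\theta^\kappa(C))$ equals the hull by realizing each hull point $w$ with a matrix whose $\theta$-symmetric SRG is $\{w,\overline{w}e^{2j\theta}\}$, and the paper simply takes the scalar matrix $wI$ (your construction with $H=I$), so the non-identity choice of $H$ is not actually needed. Your explicit check that the chord and arc hulls are closed under $\mathrm{Chord}_\theta$ and $\mathrm{Arc}_\theta^+$ fills in a step the paper leaves implicit.
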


\begin{proof}
    See Appendix \ref{app:proofs}. 
\end{proof}

Clearly, $\mathcal{M}_{\theta}^\text{chord}(C)$ and $\mathcal{M}_{\theta}^\text{arc}(C)$ are over-approximations of the matrix $C$. It is worth noting that they are the tightest over-approximating sets subject to the $\theta$-chord and $\theta$-arc properties, respectively.  

We briefly contrast the proposed $\theta$-arc hull with the S-hull~\cite{Lestas2006ScalableDecentralizedStability}, an over-approximation tool for scalable stability analysis of SISO networks defined as 
\begin{equation*}
    \mathcal{H}_S(\mathcal{R}) = \left(  \mathrm{Co} \left( \sqrt{\mathcal{R}}\right) \right)^2 \quad \text{ with } \sqrt{\mathcal{R}} = \{ x: x^2 \in \mathcal{R} \} . 
\end{equation*}  
While both notions serve to over-approximate complex regions, their structural properties and applicability differ significantly. The S-hull is used to bound the numerical range of specifically arranged matrix products and always contains the origin. In contrast, the $\theta$-arc hull is tailored for the $\theta$-symmetric SRG, a set inherently distinct from the numerical range, enabling it to bound the product of arbitrary matrices. 
Their relative conservatism is thus highly problem-dependent.  

\begin{figure}[htb]
\centering
\begin{tikzpicture}[
        >=Stealth,
        node/.style={circle,draw,minimum size=5mm,inner sep=0pt,font=\small},
        edge/.style={->,thick}
    ]

    \node[node] (1) at (1,2.1) {$1$};
    \node[node] (2) at (-0.5,0.8) {$2$};
    \node[node] (3) at (2,-0.3) {$3$};
    \node[node] (4) at (-2,1.9) {$4$};
    \node[node] (5) at (-3,1.1) {$5$};
    \node[node] (6) at (-1,-0.2) {$6$};
    \node[node] (7) at (3.5,0.6) {$7$};

    \draw[edge] (1) to[bend right=18] (2);
    \draw[edge] (2) to[bend right=18] (3);
    \draw[edge] (3) to[bend right=24] (1);

    \draw[edge] (2) to[bend right=22] (4);
    \draw[edge] (4) to[bend right=22] (5);
    \draw[edge] (5) to[bend right=22] (2);

    \draw[edge] (2) to[bend right=20] (6);
    \draw[edge] (6) to[bend right=20] (2);

    \draw[edge] (3) to[bend right=18] (7);
    \draw[edge] (7) to[bend right=18] (3);

    \end{tikzpicture}
\caption{A sunflower graph with a central loop {\unboldmath$1\to2\to3\to1$} and petal loops attached at hubs {\unboldmath$2$} and {\unboldmath$3$}.}
\label{fig:sunflower} 
\vspace{-0.5em} 
\end{figure}

With these over-approximations in place, we are now ready to characterize the individual subsystems in the sunflower graph. 

\begin{definition}
    For given $ \xi_k(\omega) \in \mathbb{R}, k =1,2,\dots,N,$ and $ \eta_i(\omega) \in \mathbb{R}, i \in \mathcal{V}_1$, matrix set  $\mathcal{P}_i(j\omega),i=1,2,\dots,N$ is said to be an admissible over-approximation of $P_i(j\omega)$ with respect to $(\xi_k(\omega), \eta_i(\omega))$ if 
    \begin{enumerate}[(i)]
    \item petal-loop condition holds: 
    for each petal loop $q \! \in \!  \mathcal{Q}_i$ with $i \! \in \!  \mathcal{V}_{\text{hub}}$, at least $n_q\! - \! 1$ matrix sets in $\{ \mathcal{P}_k(j\omega): k \! \in \! \mathcal{V}_q \}$ satisfy $\xi_k(\omega)$-arc property; 
    \item central-loop condition holds: 
    $\mathcal{P}_i(j\omega)$ satisfies $\eta_i(\omega)$-arc property for all hubs $i \! \in \!  \mathcal{V}_\text{hub}$ and for all but at most one non-hubs $i \! \in \! \mathcal{V}_1\!\setminus\!\mathcal{V}_{\text{hub}}$.  
\end{enumerate} 
\end{definition}
 
Furthermore, to capture the collective effect of the petal loops at each hub $i \in \mathcal{V}_\text{hub}$, define 
\begin{equation*}                           
    \mathcal{W}_i(j\omega) \! = \!\!\!  \sum_{q\in \mathcal{Q}_i } \!      \mathcal{H}_{\theta_i(\omega)}^{\text{chord}} \! \left(  \prod_{k \in \mathcal{V}_q} \! \mathrm{SRG}_{\xi_k(\omega)} ( \mathcal{P}_k(j\omega) ) \! \right)\!, \, i \!\in\! \mathcal{V}_\text{hub} , 
\end{equation*} 
where  $\theta_i(\omega)$ is the identical parameter sum shared by all associated petal loops, satisfying
\begin{equation}                \label{eq:uniform_theta}
    \theta_i(\omega) = \sum_{k\in\mathcal{V}_q} \xi_k(\omega), \qquad \forall  q\in \mathcal{Q}_i   . 
\end{equation}

The following result provides a graphical stability condition for sunflower graphs based on the $\theta$-symmetric SRG framework. This condition is derived hierarchically: first, the stability conditions for the individual petal loops are verified; then, the petal loops are eliminated by absorbing their collective effects into the vertex dynamics on the shared hubs, yielding an equivalent cyclic loop; finally, global stability is determined by verifying this reduced equivalent loop.

\begin{theorem}                     \label{thm:sun_flower_graph_stability}
    Consider a sunflower graph of systems $P_1,P_2,\dots,P_N \in \mathcal{RH}_\infty^{m\times m}$, the graph is stable if there exist functions  $ \xi_k(\omega) \in \mathbb{R}, k =1,2,\dots,N, \eta_i(\omega) \in \mathbb{R}, i \in \mathcal{V}_1$, such that for all $\omega \in [0,\infty]$, 
    \begin{align}
         [-\infty, -1] & \cap  \mathcal{W}_i(j\omega) = \emptyset ,  
         \qquad i \! \in \! \mathcal{V}_\text{hub} , \label{eq:stability_condition_for_petal_loops} \\  
         [-\infty, -1] & \cap \!\!\!  \prod_{i \in \mathcal{V}_1\!\setminus\!  \mathcal{V}_{\text{hub}}} \!\!\!\!\!  \mathrm{SRG}_{\eta_i(\omega)} (\mathcal{P}_i(j\omega) ) \!\! \prod_{i \in \mathcal{V}_{\text{hub}}} \widetilde{\mathcal{W}}_i(j\omega) = \emptyset,    \label{eq:stability_condition_for_central_loop}  
    \end{align} 
    where $\mathcal{P}_i(j\omega) $ is an admissible over-approximation of $P_i(j\omega)$ with respect to $(\xi_k(\omega), \eta_i(\omega))$ and 
    \begin{equation*} 
    \widetilde{\mathcal{W}}_i(j\omega) =   \mathcal{H}_{\eta_i(\omega)-\theta_i(\omega)}^\text{arc} \! \left(  \frac{\mathrm{SRG}_{\eta_i(\omega)}(\mathcal{P}_i(j\omega))}{[1, e^{2j\theta_i(\omega)}] + \mathcal{W}_i(j\omega)} \right) . 
    \end{equation*} 
     
\end{theorem}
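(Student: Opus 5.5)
The plan is to run the hierarchical reduction described before the statement. At every frequency I will certify nonsingularity of the local and reduced return differences by Theorem~\ref{thm:matrix_nonsingular_condition}, and then lift this to stability through the homotopy argument of the generalized Nyquist criterion, exactly as in the proof of Theorem~\ref{thm:cyclic_system_theta_SRG}. Since the sets on the left of \eqref{eq:stability_condition_for_petal_loops}--\eqref{eq:stability_condition_for_central_loop} contain the full ray $[-\infty,-1]$, each certified matrix remains nonsingular when the relevant loop gain is multiplied by $\tau\in[0,1]$. This supplies the homotopy.

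\textbf{Petal loops.} Fix a hub $i\in\mathcal{V}_\text{hub}$. Breaking the central loop at $i$, the petal loops in $\mathcal{Q}_i$ act on $P_i$ as parallel feedback paths. Each petal $q$ contributes a loop matrix $L_q=\prod_{k\in\mathcal{V}_q}P_k$, taken in loop order, where one factor is $P_i$ itself. The local closed loop at hub $i$ is then governed by $I+\sum_{q\in\mathcal{Q}_i}L_q$.

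By the petal-loop condition and Theorem~\ref{thm:sum_product_properties}(ii), with the common sum $\theta_i=\sum_{k\in\mathcal{V}_q}\xi_k$ from \eqref{eq:uniform_theta}, I get $\mathrm{SRG}_{\theta_i}(L_q)\subset\prod_{k}\mathrm{SRG}_{\xi_k}(\mathcal{P}_k)$. Enlarging this to its $\theta_i$-chord hull gives a matrix set of the form \eqref{eq:matrix_set_theta_chord_and_arc_hull}. By Lemma~\ref{lem:theta_chord_arc_hull} this set has the $\theta_i$-chord property, so Theorem~\ref{thm:sum_product_properties}(i) applies with all summands chord-type. It yields $\mathrm{SRG}_{\theta_i}(\sum_q L_q)\subset\mathcal{W}_i$. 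Condition \eqref{eq:stability_condition_for_petal_loops} together with spectrum containment then gives nonsingularity of $I+\tau\sum_q L_q$ for $\tau\in[0,1]$, and hence stability of every local sub-network.

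\textbf{Equivalent hub operator.} Absorbing the petals turns hub $i$ into the operator $\widehat P_i=(I+\sum_q L_q)^{-1}P_i$, or its mirror ordering depending on the loop direction. I will bound $\mathrm{SRG}_{\eta_i}(\widehat P_i)$ through the product rule with angles $\eta_i-\theta_i$ and $\theta_i$. First I need $\mathrm{SRG}_{-\theta_i}((I+\sum_qL_q)^{-1})$, using Lemma~\ref{lem:properties_theta_SRG}(iv) and the shift rule (ii). Shifting by $I$ along the axis $e^{j\theta_i}\mathbb{R}$ requires a rotation by $\theta_i$. That rotation produces the term $[1,e^{2j\theta_i}]$, which I read as the $\theta_i$-chord of $1$. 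I would then combine the inverse with $\mathrm{SRG}_{\eta_i}(\mathcal{P}_i)$ via Theorem~\ref{thm:sum_product_properties}(ii), using the $\eta_i$-arc property of the hub set from the central-loop condition.

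\textbf{Central loop.} Finally, taking the $\theta$-arc hull with parameter $\eta_i-\theta_i$ makes $\widetilde{\mathcal W}_i$ an arc-property set by Lemma~\ref{lem:theta_chord_arc_hull}. Together with the non-hubs, all but one of which are arc-type, Theorem~\ref{thm:cyclic_system_theta_SRG} applied to the reduced cyclic loop with condition \eqref{eq:stability_condition_for_central_loop} gives stability of the reduced loop. The last step is to show that stability of all local loops plus the reduced cycle implies \eqref{eq:network_stability_condition_direct_inverse}. I would do this by block elimination (Schur complements) of $I+(-A\otimes I)\widetilde P$ along the cactus structure. The structure ensures that petals interact with the rest of the network only through their hub.

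\textbf{Main obstacle.} I expect the hardest step to be the bookkeeping of the rotation angles in the inverse/shift step, that is, showing rigorously that $\mathrm{SRG}_{\eta_i}(\widehat P_i)$ lies inside the quotient set defining $\widetilde{\mathcal W}_i$. The ordering of the non-commuting factors around each petal also needs care. I would first verify the scalar case, where the statement reduces to the exact identity $p_i/(1+\sum_q\ell_q)$, and then check how the chord and arc hulls compensate for the loss of commutativity.
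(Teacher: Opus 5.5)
Your overall route is the paper's. You use submultiplicativity, $\theta_i$-chord hulls and subadditivity for the petals at each hub, absorb the petals into $\widehat P_i$, and apply $\theta$-arc hulls with Theorem~\ref{thm:cyclic_system_theta_SRG} to the reduced central loop, closing each step with the Nyquist homotopy. Your block-elimination step goes beyond the paper, which only asserts that stability of the petals and of the reduced cycle suffices for \eqref{eq:network_stability_condition_direct_inverse}. Your Schur-complement argument justifies this correctly: the non-hub petal vertices form directed paths, so their block has a stable inverse, and eliminating them leaves the return differences $I+\sum_q L_q$ on the hub blocks. However, the step you flagged as the main obstacle does not work as you describe it, for two reasons.

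\emph{The shift rule cannot produce $[1,e^{2j\theta_i}]$.} Lemma~\ref{lem:properties_theta_SRG}(ii) only translates by $\mu e^{j\theta_i}$ with $\mu$ real, and adding $I$ is an off-axis shift unless $\theta_i\in\pi\mathbb{Z}$. Rotating does not help: $\mathrm{SRG}_{\theta_i}(I+S)=e^{j\theta_i}\mathrm{SRG}(e^{-j\theta_i}I+e^{-j\theta_i}S)$, and $e^{-j\theta_i}I$ is still not a real shift. In any case a rotation maps points to points and cannot create a segment. The segment comes from subadditivity instead:
\begin{itemize}
\item[(a)] Treat $I$ as an extra summand. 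Since $\angle_{\theta_i}(x,x)=\arccos\cos\theta_i$, one has $\mathrm{SRG}_{\theta_i}(I)=\{1,e^{2j\theta_i}\}$.
\item[(b)] Apply Theorem~\ref{thm:sum_product_properties}(i) to $I$ together with the $|\mathcal{Q}_i|$ chord-hull petal sets. These supply the $\theta_i$-chord property for every summand except $I$, giving $\mathrm{SRG}_{\theta_i}(I+\sum_q L_q)\subset\{1,e^{2j\theta_i}\}+\mathcal{W}_i\subset[1,e^{2j\theta_i}]+\mathcal{W}_i$.
\item[(c)] Lemma~\ref{lem:properties_theta_SRG}(iv) then inverts this, with nonsingularity guaranteed by \eqref{eq:stability_condition_for_petal_loops}.
\end{itemize}

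\emph{Your angles are inconsistent.} The product rule must put $\eta_i$ on $\mathcal{P}_i$, the factor carrying the arc property, and $-\theta_i$ on $(I+\sum_q L_q)^{-1}$. The split is not ``$\eta_i-\theta_i$ and $\theta_i$''. What you actually bound is therefore $\mathrm{SRG}_{\eta_i-\theta_i}(\widehat P_i)$, not $\mathrm{SRG}_{\eta_i}(\widehat P_i)$ as you write twice. This is exactly why $\widetilde{\mathcal W}_i$ is an arc hull with parameter $\eta_i-\theta_i$. It costs nothing in the central loop, because Theorem~\ref{thm:cyclic_system_theta_SRG} allows a separate angle for each factor.

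With these two repairs your argument coincides with the paper's proof.
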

 
\begin{proof}
    See Appendix \ref{app:proofs}. 
\end{proof}

\begin{remark}              \label{rem:reduce_to_one_loop}
    For each $i \in \mathcal{V}_\text{hub}$, the $\theta$-chord hull over-approximation in \eqref{eq:stability_condition_for_petal_loops} is required for all but at most one term in the summation. 
\end{remark}

When the central loop of the sunflower graph degenerates to a single vertex, so that all simple loops intersect at one common vertex, the following result holds. Its proof follows directly from Theorem \ref{thm:sun_flower_graph_stability} and is omitted. 

\begin{corollary}           \label{coro:graph_with_one_common_vertex} 
    Consider a sunflower graph of systems $P_1,P_2,\dots,P_N\in\mathcal{RH}_\infty^{m\times m}$ where all simple loops intersect at one common vertex, the graph is stable if there exist functions $\theta(\omega)\in\mathbb{R}$ and $\xi_k(\omega) \in \mathbb{R},k=1,2,\dots,N$, such that for all $\omega \in [0,\infty]$, 
    \begin{align*} 
        [-\infty, -1] &  \cap  \sum_{q = 1}^l \mathcal{H}_{\theta(\omega)}^{\text{chord}} \left(  \prod_{k \in \mathcal{V}_q} \mathrm{SRG}_{\xi_k(\omega)}(\mathcal{P}_k(j\omega)) \right) =\emptyset, \\  
        \theta(\omega) &   =  \sum_{k \in \mathcal{V}_q} \xi_k(\omega), \quad \forall \, q = 1,2,\dots,l, 
    \end{align*} 
    where $\mathcal{P}_k(j\omega)$ is an arbitrary matrix set containing $P_k(j\omega)$, and for each loop $q$, at least $n_q-1$ matrix sets in $\{ \mathcal{P}_k(j\omega): k \in \mathcal{V}_q\}$ satisfy $\xi_k(\omega)$-arc property. 
\end{corollary}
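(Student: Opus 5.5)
I would prove Corollary \ref{coro:graph_with_one_common_vertex} by specializing Theorem \ref{thm:sun_flower_graph_stability}, but it is cleaner to argue directly along the same hierarchical lines, since the central loop has collapsed to a single hub vertex $c$ (so $\mathcal{V}_1=\mathcal{V}_\text{hub}=\{c\}$ and $\mathcal{Q}_c=\{1,\dots,l\}$).

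\textbf{Step 1: reduce to a parallel structure.} With every simple loop passing through $c$, and each loop carrying exactly one negative edge, the network is equivalent to unit negative feedback around $c$ with return ratio $\sum_{q=1}^l L_q$, where $L_q=\prod_{k\in\mathcal{V}_q}P_k$ is the ordered cascade along loop $q$, beginning and ending at $c$ and including $P_c$ exactly once in each product. The first thing to check is the global-inverse condition \eqref{eq:network_stability_condition_direct_inverse}. Block elimination of all non-hub vertices, which lie on disjoint open paths from $c$ back to $c$, shows that it is equivalent to $(I+\sum_q L_q)^{-1}\in\mathcal{RH}_\infty^{m\times m}$. This holds because all $P_i$ are stable and the eliminated blocks are nilpotent path structures, whose inverses are polynomial in stable transfer matrices.

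\textbf{Step 2: frequency-wise nonsingularity.} As in the proof of Theorem \ref{thm:cyclic_system_theta_SRG}, the generalized Nyquist criterion reduces the problem to showing that $I+\tau\sum_q L_q(j\omega)$ is nonsingular for all $\tau\in[0,1]$ and all $\omega$. This in turn is implied by $[-\infty,-1]\cap\Lambda(\sum_q L_q(j\omega))=\emptyset$.

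\textbf{Step 3: bound the spectrum.} For each $q$, Theorem \ref{thm:sum_product_properties}(ii) applies because at least $n_q-1$ of the sets are $\xi_k$-arc, and it gives
\[
\mathrm{SRG}_{\theta(\omega)}(L_q(j\omega))\subset \prod_{k\in\mathcal{V}_q}\mathrm{SRG}_{\xi_k(\omega)}(\mathcal{P}_k(j\omega)),
\]
where $\theta(\omega)=\sum_{k\in\mathcal{V}_q}\xi_k(\omega)$ is common to all loops. Now define
\[
\mathcal{L}_q=\mathcal{M}^{\text{chord}}_{\theta}\Big(\text{matrices with }\theta\text{-SRG in }\textstyle\prod_k \mathrm{SRG}_{\xi_k}(\mathcal{P}_k)\Big),
\]
that is, the set of matrices whose $\theta$-SRG lies in $\mathcal{H}^{\text{chord}}_\theta(\prod_k\mathrm{SRG}_{\xi_k}(\mathcal{P}_k))$. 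As in Lemma \ref{lem:theta_chord_arc_hull}, this set contains $L_q(j\omega)$ and satisfies the $\theta$-chord property, because the chord hull is closed under taking chords: a chord of a chord point stays on the original chord by symmetry about $e^{j\theta}\mathbb{R}$. Then Theorem \ref{thm:sum_product_properties}(i), together with spectrum containment (Lemma \ref{lem:properties_theta_SRG}(i)), gives
\[
\Lambda\Big(\sum_q L_q\Big)\subset\sum_q\mathcal{H}^{\text{chord}}_\theta\Big(\prod_{k\in\mathcal{V}_q}\mathrm{SRG}_{\xi_k}(\mathcal{P}_k)\Big),
\]
which is disjoint from $[-\infty,-1]$ by hypothesis.

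\textbf{Main obstacle.} I expect Step 1 to be the hardest: making rigorous that the cactus-with-one-hub interconnection matrix is stably invertible iff $I+\sum_q L_q$ is. I would handle it via a Schur complement on the block of $I+(-A\otimes I)\widetilde P$ indexed by non-hub vertices. That block is unipotent after permutation, because the non-hub subgraph is acyclic, so its inverse lies in $\mathcal{RH}_\infty$. The Schur complement then equals $I+\sum_q L_q$, up to the sign convention that fixes the single $-1$ edge per loop.
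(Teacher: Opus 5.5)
Your proposal is correct and is essentially the paper's argument: the corollary is the petal-loop half of the proof of Theorem~\ref{thm:sun_flower_graph_stability}, namely submultiplicativity along each loop with the common $\theta(\omega)$, passage to $\theta$-chord hulls, subadditivity, spectrum containment and the generalized Nyquist criterion, with the degenerate central loop adding no further condition. Your Schur-complement reduction of \eqref{eq:network_stability_condition_direct_inverse} to $(I+\sum_q L_q)^{-1}\in\mathcal{RH}_\infty^{m\times m}$, and your check that the $\theta$-chord hull is closed under taking chords, make explicit two steps that the paper only asserts.
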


\begin{remark}
    For the case in Corollary \ref{coro:graph_with_one_common_vertex}, an alternative stability condition can be derived by reinterpreting the graph structure: any loop can be designated as the central loop, with all other loops viewed as petal loops intersecting it at one common vertex. The stability condition then follows directly from Theorem \ref{thm:sun_flower_graph_stability}. 
\end{remark}

When there is only one single loop in the graph, as noted in Remark \ref{rem:reduce_to_one_loop}, the $\theta(\omega)$-chord hull over-approximation is no longer needed and Corollary \ref{coro:graph_with_one_common_vertex} reduces to Theorem \ref{thm:cyclic_system_theta_SRG}.

\begin{figure}[htb] 
\vspace{-0.5em} 
\centering  
\subfigure[]{
\begin{minipage}[t]{0.46\linewidth}
\centering 
\begin{tikzpicture}[scale=0.7] 
\useasboundingbox (-1.2,-2.5) rectangle (4.0,2.5);

\begin{scope}   
\draw [white, fill = gray, opacity = 0.5] 
    (15:3) -- (15:0.75) arc (15:330:0.75) -- (330:3) arc (330:375:3) -- cycle;  
\draw (15:3.3) -- (15:0.75) arc (15:330:0.75) -- (330:3.3);
\end{scope} 

\draw[-latex, thick](-1.3,0)--(3.8,0); 
\draw[-latex, thick](0,-2.2)--(0,2.2); 
\node[below] at (3.6,0){$\mathrm{Re}$}; 
\node[left] at (0,2.0){$\mathrm{Im}$};  
\node[below left] at (0,0){$0$}; 
\end{tikzpicture} 
\label{subfig:mixed_gain_phase_small} 
\end{minipage} 
} 
\subfigure[]{
\begin{minipage}[t]{0.46\linewidth}
\centering 
\begin{tikzpicture}[scale=0.7] 
\useasboundingbox (-2.5,-2.5) rectangle (2.5,2.5);

\begin{scope}   
\draw [white, fill = gray, opacity = 0.5] (0,0) -- (-147:1.6) arc (-147:163:1.6) -- cycle;  
\draw (0,0) -- (-147:1.6);
\draw (0,0) -- (163:1.6);
\draw (-147:1.6) arc (-147:163:1.6);
\end{scope} 

\draw[-latex, thick](-2.2,0)--(2.2,0); 
\draw[-latex, thick](0,-2.2)--(0,2.2); 
\node[below] at (2.0,0){$\mathrm{Re}$}; 
\node[left] at (0,2.0){$\mathrm{Im}$};  
\node[below right] at (0,0){$0$}; 
\end{tikzpicture} 
\label{subfig:mixed_gain_phase_large}
\end{minipage} 
} 
\vspace{-0.5em}  
\caption{Illustrations of regions (a) {\unboldmath$\Omega$} and (b) {\unboldmath$\widetilde{\Omega}$}.}  
\label{fig:stability_mixed_gain_phase_region} 
\vspace{-0.5em} 
\end{figure}

\begin{example}
    Consider the sunflower graph shown in Fig. \ref{fig:sunflower}, where the subsystem $P_1(s)$ is uncertain and the remaining subsystems are given by  
    \begin{align*}
        P_2(s) & \!= \! \frac{
        \begin{bmatrix}
            3s^3 \! + \! 21s^2 \! + \! 46s \! + \! 29 & s^3 \! + \! 3s^2 \! - \! 1 \\ 
            s^3 \! + \! 3s^2 \! - \! 1 &  3s^3 \! + \! 21s^2 \! + \! 46s \! + \! 29 
        \end{bmatrix}
        }{3(s\!+\!5)(2s^2\!+\!8s\!+\!7)},  \\ 
        P_3(s) & \! = \! \frac{ 
        \begin{bmatrix}
            3s \! + \! 2 & -s \\ -s & 3s\! + \! 2 
        \end{bmatrix}
        }{2s\! + \! 1}  , \quad 
        P_4 (s) \! = \! \frac{
        \begin{bmatrix}
            2s \! + \! 7 & 3 \\ 3 & 2s\! + \! 7 
        \end{bmatrix}
        }{2(s\! + \! 2)}, \\ 
        P_5(s) & \! = \!  \frac{
        \begin{bmatrix}
            2s \! + \! 5 & -1 \\ -1 & 2s\! + \! 5 
        \end{bmatrix}
        }{2(s\! + \! 2)}, \quad 
        P_7(s) \! = \!  \frac{
        \begin{bmatrix}
            3s\!+\!2 &  s \\ s & 3s\!+\!2  
        \end{bmatrix}
        }{2s+2}, \\ 
        P_6(s)  & \! = \! \frac{ 
        \begin{bmatrix}
            2s^2\!+\!10s\!+\!9 & 2s\!+\!1 \\ 2s\!+\!1 & 2s^2 \!+\! 10s\!+\!9
        \end{bmatrix} 
        }{2(s\!+\!1)(s\!+\!2)}  . 
    \end{align*} 
    For $P_1(s)$, we have a rough bound, i.e., 
    \begin{equation*}
        \mathrm{SRG}_{\eta_1} \! (P_1(j\omega)) \! \subset\! \Omega :=\! \{ re^{j\theta}\!:\! |r|\! \leq\! 0.6 \text{ or } \!-\!15^\circ \! \leq\! \theta \!\leq \! 10^\circ \} , 
    \end{equation*} for all $\omega \in [0,\infty]$ with $\eta_1 = -2.5^\circ$. An illustration of $\Omega$ is shown in  Fig.~\ref{subfig:mixed_gain_phase_small}. Under this case, the direct computation of the transfer matrix in \eqref{eq:network_stability_condition_direct_inverse} is inapplicable due to the presence of uncertainties. Nevertheless, the stability criterion based on the $\theta$-symmetric SRG in Theorem \ref{thm:sun_flower_graph_stability} remains applicable. 
   Let $\eta_2(\omega) = \xi_2(\omega)$, $\eta_3(\omega) = \xi_3(\omega)$, and $\xi_6(\omega) = \xi_4(\omega)+\xi_5(\omega)$, where the parameters $\xi_k(\omega)$ for $k \in \{2,3,4,5,7\}$ are specifically chosen to generate their corresponding frequency-wise canonical SRGs. An illustration of $\xi_k(\omega)$ is shown in Fig.~\ref{fig:different_theta_choices_for_stability}. Let $\mathcal{P}_k(j\omega) \! = \! \mathcal{M}_{\xi_k(\omega)}^{\text{arc}} (P_k(j\omega)) ), k \!=\! 2,3,\dots,7$, then we have 
    \begin{align*}
        \mathcal{W}_2(j\omega) \! \subset \!  \mathcal{S}[-63^\circ\!, 58^\circ\!, 4.3 ], \ \ 
        \mathcal{W}_3(j\omega) \! \subset\! \mathcal{S}[-50^\circ\!,47^\circ\!,4.2], 
    \end{align*} 
    implying \eqref{eq:stability_condition_for_petal_loops} holds. Moreover, by aggregating the effect of petal loops into the central loop, we have 
    \begin{equation*}
        \prod_{i \in \{2,3\} } \widetilde{\mathcal{W}}_i(j\omega) \subset \widetilde{\Omega} :=   \mathcal{S}[-164^\circ,168^\circ,1.47]. 
    \end{equation*} 
    An illustration of $\widetilde{\Omega}$ is shown in Fig.~\ref{subfig:mixed_gain_phase_large}. It is clear that $[-\infty,-1] \cap \Omega \times \widetilde{\Omega} =\emptyset$, implying \eqref{eq:stability_condition_for_central_loop} holds. Hence it follows from Theorem \ref{thm:sun_flower_graph_stability} that the sunflower graph is stable.  
\end{example}

\begin{figure}
    \centering
    \includegraphics[width=0.6\linewidth]{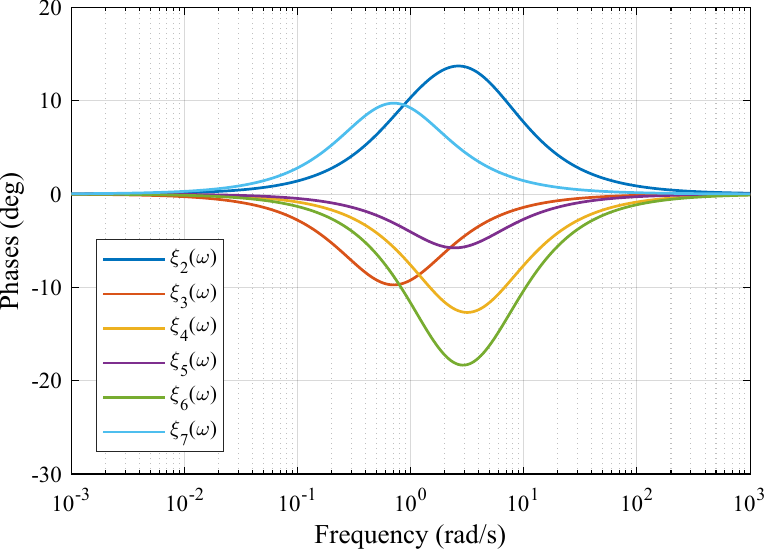}
    \caption{An illustration of the choices of {\unboldmath$\xi_k(\omega), k =2,3,\dots,7$.}}
    \label{fig:different_theta_choices_for_stability} 
    \vspace{-1em} 
\end{figure}

When the gain and phase are considered independently, Theorem \ref{thm:sun_flower_graph_stability} reduces to the following 
decoupled stability criteria for sunflower graphs.  

\begin{corollary}                   \label{coro:sunflower_gain_condition}
    A sunflower graph of systems $P_1,P_2,\dots,$ $P_N \in \mathcal{RH}_\infty^{m\times m}$ is stable if for all $\omega \in [0,\infty]$, 
    \begin{align*} 
        \gamma_i (j\omega) \! > \! 0 , \ \  i \in \mathcal{V}_{\text{hub}},  \text{ and }  \! \prod_{i \in \mathcal{V}_1} \!  \overline{\sigma}(P_i(j\omega)) \!  <  \!  \prod_{i \in \mathcal{V}_{\text{hub}}} \! \gamma_i(j\omega) ,  
    \end{align*} 
    where $\gamma_i(j\omega) \! = \! 1\! -\! \sum_{q \in \mathcal{Q}_i} \! \prod_{k \in \mathcal{V}_q } \! \overline{\sigma}(P_k(j\omega)) $ with $i \in \mathcal{V}_{\text{hub}}$. 
\end{corollary}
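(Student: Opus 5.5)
We will obtain Corollary \ref{coro:sunflower_gain_condition} from Theorem \ref{thm:sun_flower_graph_stability} by choosing all parameters equal to zero and using the largest gain-only over-approximations. For each vertex $k$, let $\mathcal{P}_k(j\omega)$ be the set of all matrices $M$ with $\overline{\sigma}(M)\le \overline{\sigma}(P_k(j\omega))$. Since $\mathrm{SRG}_\theta(M)$ has moduli $\|Mx\|/\|x\|$, every $\theta$-symmetric SRG of this norm ball lies in the closed disk of radius $\overline{\sigma}(P_k(j\omega))$. Conversely, the whole closed disk is attained: take $M=\rho U$ with $U$ unitary and suitable eigenvalues, or use rank-one perturbations. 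The disk is invariant under reflection across any line through the origin, and it contains all chords and arcs. Hence $\mathcal{P}_k(j\omega)$ satisfies both the $\theta$-chord and the $\theta$-arc property for every $\theta$. The admissibility conditions therefore hold with $\xi_k=\eta_i=0$, and \eqref{eq:uniform_theta} is satisfied with $\theta_i=0$. Denote by $\sigma_k=\overline{\sigma}(P_k(j\omega))$ the radius and by $D(r)$ the closed disk of radius $r$.

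The first step is to verify \eqref{eq:stability_condition_for_petal_loops}. A product of disks $D(r_1)\cdots D(r_p)$ equals $D(r_1\cdots r_p)$. Taking the $0$-chord hull of a disk gives the same disk. A Minkowski sum of disks is the disk whose radius is the sum of the radii. Therefore $\mathcal{W}_i(j\omega)=D\bigl(\sum_{q\in\mathcal{Q}_i}\prod_{k\in\mathcal{V}_q}\sigma_k\bigr)=D(1-\gamma_i)$. The hypothesis $\gamma_i>0$ gives $1-\gamma_i<1$, so this disk does not meet $[-\infty,-1]$, and \eqref{eq:stability_condition_for_petal_loops} holds.

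The second step is to bound $\widetilde{\mathcal{W}}_i$. With $\theta_i=0$, the interval $[1,e^{2j\theta_i}]$ degenerates to $\{1\}$. Every point $w$ of the denominator $1+\mathcal{W}_i$ then satisfies $|w|\ge 1-(1-\gamma_i)=\gamma_i>0$. Consequently the quotient set $D(\sigma_i)/(1+D(1-\gamma_i))$ is contained in $D(\sigma_i/\gamma_i)$. The arc hull $\mathcal{H}^{\text{arc}}_0$ of a disk centred at the origin is the disk itself, so $\widetilde{\mathcal{W}}_i\subset D(\sigma_i/\gamma_i)$.

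The final step is to verify \eqref{eq:stability_condition_for_central_loop}. The left-hand product is contained in $D\bigl(\prod_{i\in\mathcal{V}_1\setminus\mathcal{V}_{\text{hub}}}\sigma_i\prod_{i\in\mathcal{V}_{\text{hub}}}\sigma_i/\gamma_i\bigr)$. By the stated inequality $\prod_{i\in\mathcal{V}_1}\sigma_i<\prod_{i\in\mathcal{V}_{\text{hub}}}\gamma_i$, this radius is strictly less than $1$, so the disk avoids $[-\infty,-1]$. Stability then follows from Theorem \ref{thm:sun_flower_graph_stability}.

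The main obstacle is the bookkeeping around the degenerate expression $[1,e^{2j\theta_i}]$ and the hull operations in Theorem \ref{thm:sun_flower_graph_stability}. The approach depends on confirming that at $\theta_i=0$ these all collapse to disk-preserving operations. A second technical check is that the norm-ball set really does have a full disk as its SRG, which is needed for the chord and arc properties. If that is delicate, we can instead define $\mathcal{P}_k(j\omega)$ directly as $\{M:\mathrm{SRG}_0(M)\subset D(\sigma_k)\}$. This set contains the same matrices, and its SRG is exactly the disk, since scalar multiples of unitaries already fill it.
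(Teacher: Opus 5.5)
Your proposal is correct and follows essentially the paper's route: the paper presents this corollary as the gain-only specialization of Theorem~\ref{thm:sun_flower_graph_stability}. Setting all $\xi_k,\eta_i$ (hence $\theta_i$) to zero and using norm-ball over-approximations, whose $\theta$-symmetric SRGs are the closed origin-centred disks of radius $\overline{\sigma}(P_k(j\omega))$, collapses $[1,e^{2j\theta_i}]$ to $\{1\}$ and every hull, sum, product and quotient to a disk bound, with your estimate $|w|\ge\gamma_i(j\omega)$ on the denominator producing exactly the threshold $\prod_{i\in\mathcal{V}_{\text{hub}}}\gamma_i(j\omega)$.
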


The condition $\gamma_i(j\omega) >0$ in Corollary \ref{coro:sunflower_gain_condition} guarantees the local stability of petal loops attached to hub $i$. Aggregating the dynamics of these petal loops tightens the global stability threshold for the central loop. Instead of the single-loop bound of $1$ in Proposition \ref{prop:cyclic_system_mixed_gain_phase}, the allowable gain product is strictly reduced to $\prod_{i\in\mathcal{V}_{\text{hub}}} \!  \gamma_i(j\omega)$. Similarly, this aggregation mechanism also tightens the phase bound for the central loop, yielding the following phase-based criterion. 

\begin{corollary}                   \label{coro:sunflower_phase_condition}
    A sunflower graph of systems $P_1,P_2,\dots,$ $P_N \in \mathcal{RH}_\infty^{m\times m}$ is stable if there exist functions  $ \xi_k(\omega) \in \mathbb{R}, k =1,2,\dots,N, \eta_i(\omega) \in \mathbb{R}, i \in \mathcal{V}_1$, such that for all $\omega \in [0,\infty]$,  
    \begin{align*}  
        & \overline{\phi}_i(j\omega) - \underline{\phi}_i(j\omega) <\pi \text{ and } \phi_i(j\omega) \subset (-\pi,\pi) , \quad  i \! \in \! \mathcal{V}_{\text{hub}},  \\ 
        & \sum_{i \in \mathcal{V}_1} \Psi_{\eta_i(\omega)}(P_i(j\omega)) - \sum_{i \in \mathcal{V}_{\text{hub}}} \Phi_i(j\omega) \subset (-\pi,\pi)  , 
    \end{align*} 
    where 
    \begin{align*} 
        \Phi_i(j\omega)  \! =  &   \Big[ \! \min ( \psi_{\theta_i(\omega)}(I), \phi_i(j\omega) ), \max ( \psi_{\theta_i(\omega)}(I), \phi_i(j\omega) ) \! \Big] , 
        \\  
        \phi_i(j\omega)    =   & [\underline{\phi}_i(j\omega), \overline{\phi}_i(j\omega)] \\ = \Big[ \min_{q \in \mathcal{Q}_i} & \,  \sum_{k \in \mathcal{V}_q}  \underline{\psi}_{\xi_k(\omega)} (P_k(j\omega)),  
         \max_{q \in \mathcal{Q}_i}  \, \sum_{k \in \mathcal{V}_q} \overline{\psi}_{\xi_k(\omega)} (P_k(j\omega))  \Big] , 
    \end{align*} 
    with $\theta_i(\omega)$ given in~\eqref{eq:uniform_theta}. 
\end{corollary}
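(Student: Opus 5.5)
We plan to obtain the corollary as a specialization of Theorem \ref{thm:sun_flower_graph_stability} in which all gain information is discarded. For every vertex $k$ we choose as admissible over-approximation the phase cone
\begin{equation*}
\mathcal{P}_k(j\omega)=\{M\in\mathbb{C}^{m\times m}:\Psi_{\zeta_k(\omega)}(M)\subset\Psi_{\zeta_k(\omega)}(P_k(j\omega))\},
\end{equation*}
with $\zeta_k=\xi_k$ for petal vertices and $\zeta_k=\eta_k$ on the central loop. At a hub, where both parameters appear, we use the corresponding cone for each loop. Its $\zeta_k$-symmetric SRG is the (unbounded) sector $\{re^{j\phi}:r\ge 0,\ \phi\in\Psi_{\zeta_k}(P_k)\}$. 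This sector is symmetric about $e^{j\zeta_k}\mathbb{R}$ and has half-spread at most $\pi/2$ when the condition is nontrivial, so it is closed under $\mathrm{Arc}^+_{\zeta_k}$. Hence it satisfies the $\zeta_k$-arc property, and the admissibility conditions (i)–(ii) hold.

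\textbf{Step 1: petal loops.} Because the sectors are cones, the product over $k\in\mathcal{V}_q$ of $\mathrm{SRG}_{\xi_k}(\mathcal{P}_k)$ is contained in the sector with phase interval $\sum_{k\in\mathcal{V}_q}\Psi_{\xi_k}(P_k)$. This interval is centered at $\theta_i=\sum_{k\in\mathcal V_q}\xi_k$, by \eqref{eq:uniform_theta}. A sector symmetric about $e^{j\theta_i}\mathbb{R}$ with spread less than $\pi$ is convex and closed under reflection across that axis, so its $\theta_i$-chord hull adds nothing. The Minkowski sum over $q\in\mathcal{Q}_i$ of convex cones is the convex cone generated by their union, and it is the sector with phase interval $\phi_i=[\underline\phi_i,\overline\phi_i]$, provided $\overline\phi_i-\underline\phi_i<\pi$. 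Under $\phi_i\subset(-\pi,\pi)$ this sector misses $[-\infty,-1]$, which gives \eqref{eq:stability_condition_for_petal_loops}.

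\textbf{Step 2: central loop.} The segment $[1,e^{2j\theta_i}]$ consists of points with phases in $[0,2\theta_i]$, which is $\Psi_{\theta_i}(I)$. Adding the cone $\mathcal{W}_i$ therefore yields a set whose phases lie in the convex hull of $\Psi_{\theta_i}(I)\cup\phi_i$, namely $\Phi_i$. This set stays away from $0$ because both pieces lie in a common open half-plane, which follows from the spread and location conditions. Inversion negates phases, giving phases in $-\Phi_i$. Multiplying by the sector $\mathrm{SRG}_{\eta_i}(\mathcal{P}_i)$ gives phases in $\Psi_{\eta_i}(P_i)-\Phi_i$, an interval centered at $\eta_i-\theta_i$ up to the asymmetry of $\Phi_i$. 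Since $\widetilde{\mathcal{W}}_i$ is bounded in gain only by an unbounded cone anyway, we replace it by the full sector over the symmetrized interval. Finally, the product with the non-hub sectors has phases in
\begin{equation*}
\sum_{i\in\mathcal{V}_1}\Psi_{\eta_i}(P_i)-\sum_{i\in\mathcal{V}_{\rm hub}}\Phi_i .
\end{equation*}
Containment of this interval in $(-\pi,\pi)$ excludes the negative real axis and yields \eqref{eq:stability_condition_for_central_loop}. Theorem \ref{thm:sun_flower_graph_stability} then gives stability.

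\textbf{Main obstacle.} We expect the hardest step to be the $\theta$-arc hull in $\widetilde{\mathcal{W}}_i$ with center $\eta_i-\theta_i$. If the phase interval $\Psi_{\eta_i}(P_i)-\Phi_i$ is not symmetric about $\eta_i-\theta_i$, the arc hull reflects it and may enlarge the interval beyond what the stated condition controls. Our first attempt is to check that $\Phi_i$ is itself centered at $\theta_i$. Both $\Psi_{\theta_i}(I)=[0,2\theta_i]$ and $\phi_i$ are centered at $\theta_i$, and so is the hull $\Phi_i$; hence the interval is symmetric, and the arc hull only fills in the circular arcs, leaving the phase range unchanged.

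A second delicate point is the treatment of the sets as unbounded cones, including the point at infinity. These must still be compatible with the membership tests against $[-\infty,-1]$, and with the exclusion of $0$ before inversion. For this we would rely on the strict inequality $\overline\phi_i-\underline\phi_i<\pi$ together with $\phi_i\subset(-\pi,\pi)$.
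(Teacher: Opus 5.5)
\textbf{Gap in Step 2.} Your route, specializing Theorem~\ref{thm:sun_flower_graph_stability} with unbounded phase cones, is the one the paper intends; it omits the proof. Step~1 is correct, and so is your resolution of the arc-hull issue ($\Psi_{\theta_i}(I)$ and $\phi_i$, hence $\Phi_i$, are all centered at $\theta_i$).

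The problem is Step~2. Write $\phi_i=\theta_i+[-g_i,g_i]$ with $g_i=\tfrac12(\overline{\phi}_i-\underline{\phi}_i)<\tfrac{\pi}{2}$. The hypothesis $\phi_i\subset(-\pi,\pi)$ only gives $|\theta_i|+g_i<\pi$, so $|\theta_i(\omega)|\ge\tfrac{\pi}{2}$ is allowed. For example, take $\phi_i=[0.5\pi,0.9\pi]$ with the central phases summing to an interval inside $(0.4\pi,\pi)$. Then $\Psi_{\theta_i}(I)=\theta_i+[-|\theta_i|,|\theta_i|]$ and $\Phi_i$ have length $2|\theta_i|\ge\pi$, and both of your claims fail:
\begin{itemize}
\item The chord from $1$ to $e^{2j\theta_i}$ meets the axis at its midpoint $\cos\theta_i\,e^{j\theta_i}$, which lies on the ray $e^{j\theta_i}\mathbb{R}^-$. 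So its points do not have phases in $\Psi_{\theta_i}(I)$.
\item The chord cannot share an open half-plane with $\mathcal{W}_i$, because $\mathcal{W}_i$ contains the whole ray $e^{j\theta_i}\mathbb{R}^+$.
\end{itemize}
In fact $0=\cos\theta_i e^{j\theta_i}+|\cos\theta_i|e^{j\theta_i}\in[1,e^{2j\theta_i}]+\mathcal{W}_i$, and for $|\theta_i|>\tfrac{\pi}{2}$ it is an interior point. Then $\widetilde{\mathcal{W}}_i$ contains every complex number of large modulus, and \eqref{eq:stability_condition_for_central_loop} fails. With the segment as you read it, the specialization cannot deliver the corollary on this part of its hypotheses.

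\textbf{How to repair it.} Use what subadditivity actually gives in the proof of Theorem~\ref{thm:sun_flower_graph_stability}. Take $I$ as the one summand without the $\theta_i$-chord property in Theorem~\ref{thm:sum_product_properties}(i). The denominator is then $\mathrm{SRG}_{\theta_i}(I)+\mathcal{W}_i=\{1,e^{2j\theta_i}\}+\mathcal{W}_i$, and you should argue pointwise rather than via a half-plane. For $\psi\in\{0,2\theta_i\}$ and $\phi\in\phi_i$ you have $|\psi-\phi|\le|\theta_i|+g_i<\pi$. Hence $e^{j\psi}+re^{j\phi}\neq 0$, and its phase lies between $\psi$ and $\phi$, i.e.\ in $\Phi_i$. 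The remaining steps (inversion, products, arc hull, final sum) are unaffected by $\Phi_i$ being longer than $\pi$.

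\textbf{A smaller point about hubs.} Theorem~\ref{thm:sun_flower_graph_stability} assigns a single set $\mathcal{P}_i$ to each hub, and that set must have the $\eta_i$-arc property. When $\xi_i\neq\eta_i$, such a set cannot in general keep its $\xi_i$-SRG inside the sector of $\Psi_{\xi_i}(P_i)$. For instance, take $P_i=I$, $\xi_i=0$, $\eta_i\notin\pi\mathbb{Z}$: the arc through $\pm e^{j\eta_i}$ forces some $M\in\mathcal{P}_i$ with $Mx=\pm e^{j\eta_i}x$, so $\Gamma_0(M)>0=\Gamma_0(I)$. Your ``one cone per loop'' at hubs is therefore not an instance of the theorem as stated. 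You need to observe that in its proof the petal role and the central role of $\mathcal{P}_i$ are used independently, which makes separate over-approximations legitimate.
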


\subsection{General Cactus Case} 
The stability analysis of the sunflower graph reveals a basic mechanism: verifying the local stability of petal loops, followed by the aggregation of their collective effect into an equivalent central loop.
This principle naturally extends to general cactus graphs via a recursive decomposition from the periphery inward. Accordingly, Algorithm \ref{alg:cactus} formalizes this hierarchical procedure to assess stability layer by layer. Specifically, the outermost loops are evaluated first, and the resulting equivalent systems are propagated inward until the central loop is reached.

\begin{algorithm}[htb]
\caption{Stability Verification for Cactus Networks}
\label{alg:cactus}
\begin{algorithmic}[1]
\REQUIRE Cactus graph $\mathbb{G}$ 
\ENSURE Stability conclusion. 

\STATE Initialize $\mathbb{G}_0 \leftarrow \mathbb{G}$. Select a central loop as loop $1$.   
\STATE Construct a directed spanning tree $\mathbb{T}_0$ by taking loop $1$ as the root and other simple loops as nodes. 
\STATE $k \leftarrow 0$. 

\REPEAT 
    \STATE Identify the leaf nodes $\mathcal{L}_k$ and the corresponding parent nodes $\mathcal{P}_k$ in $\mathbb{T}_k$.  
    \STATE Identify the set of shared vertices in $\mathbb{G}_k$ as  
    \begin{equation*}
        \mathcal{I}_k = \left\{ \mathcal{V}_p \cap  \mathcal{V}_q: p \in \mathcal{L}_k, q \in \mathcal{P}_k  \right\} .  
    \end{equation*} 

    \vspace{-0.4em} 
    \STATE For each $i \in \mathcal{I}_k$, verify the stability of the attached leaf loops via conditions analogous to \eqref{eq:stability_condition_for_petal_loops}. 
    \STATE Absorb $\mathcal{L}_k$ into $\mathcal{I}_k$ as aggregated equivalent feedback blocks, yielding the reduced graph $\mathbb{G}_{k+1}$. 
    \STATE Update $\mathbb{T}_{k+1} \leftarrow \mathbb{T}_k \! \setminus \! \mathcal{L}_k$. 
    \STATE $k \leftarrow k+1$. 

\UNTIL $\mathbb{T}_{k} = (\{ 1 \},\emptyset )$. 

    \STATE Verify the stability of the single loop graph $\mathbb{G}_{k}$ via conditions analogous to  \eqref{eq:stability_condition_for_central_loop}.  
    
\RETURN Stable if all verification steps pass. 
\end{algorithmic}
\end{algorithm}

\begin{remark} 
    The hierarchical decomposition principle underlying the cactus networks naturally extends to broader topologies. 
    This scalability renders the $\theta$-symmetric SRG a powerful analytical tool for complex systems, including leader–follower architectures and networks partitioned into multiple strongly connected components.  
\end{remark}

\section{Conclusions} 
\label{section:conclusion}

In this paper, we systematically develop the notion of $\theta$-symmetric SRG and apply it to the stability analysis of cactus dynamic networks. In particular, we establish a connection between the phase of $\theta$-symmetric SRG and a norm minimization problem, thereby enabling the computation of $\theta$-segmental phase via SDP.  
Then we derive the submultiplicative and subadditive properties of $\theta$-symmetric SRG, which facilitates the study of nonsingularity of product‑type and sum‑type return difference matrices. Furthermore, we derive necessary and sufficient conditions for the robust stability of cyclic interconnections, the fundamental building blocks of cactus networks. Building upon these foundational results, we construct a layer-by-layer stability framework for general cactus dynamic networks. 

The $\theta$-symmetric SRG framework refines the phase information, enabling us to capture the lead and lag properties of the system and serving as a more appropriate MIMO extension of the classical Nyquist plot. Moreover, as demonstrated in cactus networks, the proposed framework exhibits elegant compositional properties, thereby offering a promising method to systematically analyze more complex networks.
Future work includes applying the $\theta $-symmetric SRG to controller design and generalizing the framework to nonlinear dynamic systems. 

\appendices

\section{Proofs of Some Technical Results} 
\label{app:proofs}

\proofof{Proof of Proposition \ref{prop:theta_segmental_phase_LMI_graphical_interpretation}} 
    We first show (i) $\Rightarrow$ (ii). Suppose (i) holds, then 
    \begin{equation*}
        \arccos \frac{\mathrm{Re}\{ x^He^{-j\theta}Cx \}}{\|Cx\|\|x\|} \leq \delta , \quad \forall 0 \neq x \in \mathbb{C}^n, Cx \neq 0 , 
    \end{equation*} implying 
    $ 
    \mathrm{Re}\{x^He^{-j\theta}Cx \} \geq \cos\delta   \|Cx\|\|x\| . 
    $ 
    Since both sides of this inequality are nonnegative, squaring both sided yields 
    \begin{equation}                \label{eq:segmental_phase_to_NR}
        \left( \frac{\mathrm{Re} \{ x^He^{-j\theta}Cx \}}{\|x\|^2}  \right) ^2 \geq \cos^2\delta  \frac{\|Cx\|^2}{\|x\|^2} . 
    \end{equation} 
    Since $T = \mathrm{Re}\{ e^{-j\theta} C\} + jC^HC$, it follows from \eqref{eq:segmental_phase_to_NR} that  
    \begin{equation*}
        W(T) \subset \{ a+bj\in\mathbb{C}: a^2 \geq b \cos^2\delta , a,b \in \mathbb{R} \}, 
    \end{equation*} 
    which, in view of the convexity of $W(T)$, is equivalent to the existence of $\mu>0$ such that 
    \begin{equation*}
        W(T) \subset \{ a+bj \in \mathbb{C}:  2\mu a \geq \cos\delta (b+\mu^2), a , b \in \mathbb{R} \} . 
    \end{equation*} 
    This implies (ii) holds. 
    
    Next we show (ii) $\Rightarrow$ (iii). Suppose (ii) holds, then there exists $\mu>0$ such that   
    \begin{equation}                 \label{eq:LMI_to_SRG_proof}  
        \cos\delta(\|Cx\|^2  \! + \! \mu^2 \|x\|^2) \! \leq \! 2 \mu x^H \mathrm{Re}\{e^{-j\theta} C \} x , \ \ \ \forall  0 \! \neq \! x \!\in\! \mathbb{C}^n ,  
    \end{equation} implying
    \begin{equation*}
        \cos\delta (C^HC+ \mu^2I) \leq \mu ( e^{-j\theta} C+ e^{j\theta}C^H) .
    \end{equation*} By the Schur complement theorem, this is equivalent to the LMI \eqref{eq:LMI_theta_segmental_phase}, yielding that (iii) holds. 
    
    Then we show (iii) $\Rightarrow$ (iv). Suppose (iii) holds, then \eqref{eq:LMI_to_SRG_proof} holds. For any $z \in \mathrm{SRG}_\theta(C)$, there holds $\{ z, \overline{z}e^{2j\theta} \} \subset \mathrm{SRG}_\theta(C)$. Without loss of generality, we can denote $z$ as $z = e^{j\theta}(a+bj)$ where 
    \begin{equation}            \label{eq:a_b_parameter}
        a = \frac{\mathrm{Re}\{ x^He^{-j\theta} Cx\}}{\|x\|^2}, \quad  
        b =  \sqrt{\frac{\|Cx\|^2}{\|x\|^2} - a^2 }, 
    \end{equation}  
    for some nonzero $x \in \mathbb{C}^n$. In view of \eqref{eq:LMI_to_SRG_proof}, we have  
    \begin{align*}
        \left(a - \frac{\mu}{\cos\delta}\right)^2 + b^2 & = -2a \frac{\mu}{\cos\delta} + \frac{\mu^2}{\cos^2\delta} + \frac{\|Cx\|^2}{\|x\|^2} \\ & \leq -\frac{\|Cx\|^2}{\|x\|^2} - \mu^2 + \frac{\mu^2}{\cos^2\delta} + \frac{\|Cx\|^2}{\|x\|^2} \\ 
        & = \mu^2 \tan^2\delta  , 
    \end{align*} 
    implying (iv) holds. 
    
    Finally we show (iv) $\Rightarrow$ (i). Suppose (iv) holds, then $\mathrm{SRG}_\theta(C)$ is contained in the following cone 
    \begin{equation*} 
        \mathrm{SRG}_\theta(C) \subset \mathcal{D}[\theta,\mu] \subset \{ z \in \mathbb{C}: \angle z \in [\theta-\delta,\theta+\delta] \} . 
    \end{equation*} 
    Since $\Psi_\theta(C)$ corresponds to the minimal cone containing $\mathrm{SRG}_\theta(C)$, we have $\Psi_\theta(C) \subset \theta + [-\delta, \delta]$, implying (i) holds. This completes the proof.  
\endproof

\vspace{0.5em}

\proofof{Proof of Theorem \ref{thm:larger_than_pi_spread}}
    The equivalence (i) $\Leftrightarrow$ (ii) follows similarly as Proposition \ref{prop:theta_segmental_phase_LMI_graphical_interpretation}. Note that (i) holds if and only if 
    \begin{equation*}
        W(T) \! \subset \! \{ a \! + \! bj \! \in \! \mathbb{C}: a > \sqrt{b} \cos\delta , b > 0, a,b\in\mathbb{R} \}, 
    \end{equation*} 
    where $T = \mathrm{Re}\{ e^{-j\theta} C\} + j C^HC $. In view of the convexity of $W(T)$, this is equivalent to that for any $\mu>0$, there holds 
    \begin{equation*}
        W(T) \! \subset \! \{ a \! + \! bj\in\mathbb{C}: 2\mu a \! > \! \cos\delta (b\! + \! \mu^2),b\! > \! 0,a,b\! \in \! \mathbb{R}\} . 
    \end{equation*} 
    This can be rewritten as (ii). 

    Then we show (ii) $\Rightarrow$ (iii). Suppose (ii) holds, then for any $\mu>0$, we have 
    \begin{equation*}
        Q(\mu) : = \mu^2 \cos\delta \, I  - 2\mu \mathrm{Re} \{ e^{-j\theta} C \} + \cos\delta \, C^HC < 0 . 
    \end{equation*}  
    Since $\cos\delta < 0$ and $0\notin \mathrm{SRG}_\theta(C)$, it follows that $Q(0) < 0, Q(\infty) < 0$. Due to the continuity of the eigenvalues of $Q(\mu)$ on parameter $\mu$, $\lambda(Q(\mu))$ will not touch zero. Consequently, the quadratic eigenvalue problem $ Q(\mu) x = 0$ has no eigenvalues in $(0,\infty)$ \cite{Tisseur2001QuadraticEigenvalueProblem}. Denote $y = \mu x$, then $ Q(\mu) x = 0$ can be rewritten as a GEP 
    \begin{equation*}
        M \begin{bmatrix}
            x \\ y 
        \end{bmatrix} = \mu N \begin{bmatrix}
            x \\ y 
        \end{bmatrix} , 
    \end{equation*} 
    implying (iii) holds.  

    The implication (iii) $\Rightarrow$ (ii) can be established by reversing the preceding arguments. This completes the proof. 
\endproof

\vspace{0.5em} 

\proofof{Proof of Theorem \ref{thm:matrix_iff_condition}}  
    For brevity, we only detail the proof for $\mathcal{M}[\alpha_i,\beta_i,\gamma_i]$, as the case for $\mathcal{M}_c[\alpha_i,\beta_i,\gamma_i]$ follows along similar lines and is omitted. 
    We first show the sufficiency. For any $A_i\in\mathcal{M}[\alpha_i,\beta_i,\gamma_i]$, there exists $\theta_i \in \mathbb{R}$ such that 
    \begin{equation*}
        \mathrm{SRG}_{\theta_i} (A_i) \subset \mathcal{R}[\alpha_i,\beta_i,\gamma_i], \quad i \in \mathcal{K}'. 
    \end{equation*} 
    For each $\theta_i$, consider the matrix set $\mathcal{M}_{\theta_i}^\text{arc}(A_i)$ defined in \eqref{eq:matrix_set_theta_chord_and_arc_hull}. In view of Lemma \ref{lem:theta_chord_arc_hull}, we have 
    \begin{equation*}
        \mathrm{SRG}_{\theta_i} (\mathcal{M}_{\theta_i}^\text{arc}(A_i))   = \mathcal{H}_\theta^{\text{arc}} (\mathrm{SRG}_{\theta_i}(A_i))  
          \subset \mathcal{R}[\alpha_i,\beta_i,\gamma_i]  . 
    \end{equation*} 
    Hence, \eqref{eq:matrix_nonsingularity_iff_condition} implies that there exist $\theta$ and $ \theta_i,i \in \mathcal{K}'$, such that 
    \begin{equation*}
        -1 \notin \mathrm{SRG}_{\theta}(A_k) \prod_{i \in \mathcal{K}'} \mathrm{SRG}_{\theta_i} (\mathcal{M}_{\theta_i}^\text{arc} (A_i)) . 
    \end{equation*} 
    Since $\mathcal{M}_{\theta_i}^\text{arc}(A_i)$ satisfies the $\theta_i$-arc property, it follows from Theorem \ref{thm:matrix_nonsingular_condition} that $I+\prod_{i=1}^NA_i$ is nonsingular. 
    
    Then we show the necessity. Suppose by contradiction that for any $\theta\in\mathbb{R}$, there holds 
    \begin{equation*}
        -1 \in \mathrm{SRG}_\theta(A_k) \prod_{i \in \mathcal{K}'} \mathcal{R}[\alpha_i,\beta_i,\gamma_i] . 
    \end{equation*} 
    Then for a particular choice $\theta = - \frac{1}{2}\sum_{i \in \mathcal{K}'} (\beta_i+\alpha_i)$, there exist nonzero $x \in \mathbb{C}^n$ and $z_i\in\mathcal{R}[\alpha_i,\beta_i,\gamma_i], i \in \mathcal{K}'$, such that 
    \begin{equation*}
        -1 = \frac{\|A_kx\|}{\|x\|} \exp \{ j (\theta + \angle_\theta(x,A_kx))\} \prod_{i \in \mathcal{K}'} z_i . 
    \end{equation*} 
    Treating the gain and phase separately, we have 
    \begin{align*}
        \prod_{i \in \mathcal{K}'} |z_i| = \frac{\|x\|}{\|A_kx\|},   \quad 
        \sum_{i \in \mathcal{K}'} \angle z_i   = \pi - \theta  - \angle_{\theta}(x,A_kx) .  
    \end{align*}
    Denote 
    \begin{align*} 
        y & =   \begin{bmatrix}
            \sqrt{\frac{b+a}{2b}} & \sqrt{\frac{b-a}{2b}} & 0 
        \end{bmatrix} ^T \in \mathbb{R}^{n} , \\  
        \tilde{A}_i &= \begin{bmatrix}
            z_i & 0 & 0 \\ 0 &  e^{j(\beta_i+\alpha_i) } \overline{z}_i & 0 \\ 0 & 0 & e^{\frac{1}{2}j(\beta_i+\alpha_i)} |z_i|I 
        \end{bmatrix}  , \quad i \in \mathcal{K}' ,  
    \end{align*} 
    where 
    \begin{align*}
        a & = \mathrm{Im}\{ e^{-j\theta}x^HA_kx \},   \\  b & = \sqrt{\|A_kx\|^2\|x\|^2 - \left( \mathrm{Re}\{ e^{-j\theta}  x^HA_kx \} \right)^2} . 
    \end{align*}
    Then one can verify that 
    \begin{equation*}
        y^H \prod_{i \in \mathcal{K}'}\tilde{A}_i y  =  - \frac{x^HA_k^Hx}{\|A_kx\|^2} , \quad   \left\|\prod_{i \in \mathcal{K}'} \tilde{A}_i y \right\| = \frac{\|x\|}{\|A_k x\|} .  
    \end{equation*} 
    Similar to the proof of \cite[Theorem IV. 1]{Zhang2025DW_Shell}, denote $B = \prod_{i \in \mathcal{K}'} \tilde{A}_i $ and decompose 
    \begin{align*}
        By & \! =\! (y^HBy ) y \! + \! \sqrt{\|By\|^2-|y^HBy|^2} \,  y_\bot , \\ 
        x & \! = \! \frac{x^HA_k^Hx}{\|A_kx\|} \frac{A_kx}{\|A_kx\|} \! + \! \sqrt{\|x\|^2  \! - \! \frac{|x^HA_k^Hx|^2}{\|Ax\|^2}} \left( \! \frac{A_kx}{\|A_kx\|} \! \right)_\bot , 
    \end{align*}  
    where $y_\bot$ and $\left( \frac{A_kx}{\|A_kx\|} \right)_\bot$ are unit vectors orthogonal to $y$ and $ \frac{A_kx}{\|A_kx\|}$, respectively. Let $U$ be a unitary matrix such that 
    \begin{equation*}
        U \begin{bmatrix}
            \dfrac{A_kx}{\|A_kx\|} & - \left( \dfrac{A_kx}{\|A_kx\|}  \right)_\bot 
        \end{bmatrix} = \begin{bmatrix}
            y & y_\bot  
        \end{bmatrix} . 
    \end{equation*} 
    Let 
    $ 
    A_i = U^H \tilde{A}_i U  $ for $ i  \in \mathcal{K}', 
    $ 
    then $A_i \in \mathcal{M}[\alpha_i,\beta_i,\gamma_i]$ and  
    \begin{align*} 
        & (I + A_k A_{k+1} \cdots A_N A_1 \cdots A_{k-1} ) \frac{A_kx}{\|A_kx\|} \\ & \qquad \qquad \qquad  = \left(I + A_k U^H \prod_{i\in\mathcal{K}'} \tilde{A}_i U \right) \frac{A_kx}{\|A_kx\|}  =  0 , 
    \end{align*} 
    implying $I + A_k A_{k+1} \cdots A_N A_1 \cdots A_{k-1} $ is singular. Since a cyclic permutation does not affect the singularity, we have $I+\prod_{i=1}^NA_i$ is singular, which is a contradiction. This completes the proof. 
\endproof

\vspace{0.5em} 

\proofof{Proof of Proposition \ref{prop:interpolation_result}}
    Suppose \eqref{eq:interpolation_condition} holds, we will consider the cases $\beta = \frac{\pi}{2}$ and $\beta \in [0,\frac{\pi}{2})$ separately. 
    
    If $\beta = \frac{\pi}{2}$, for any $z \in \mathrm{SRG}(\Delta)\subset \mathcal{R}[\beta,\gamma]$, there holds $|z|\leq \gamma \cos\angle z$. Denote 
    $
    M_0 = \frac{2}{\gamma} \Delta - I , 
    $ 
    then we have $M_0^HM_0 \leq I$.  
    Based on the small gain interpolation method \cite{Zhou1996Robust,Ball1990Interpolation}, there exists a transfer matrix $M \in \mathcal{RH}_\infty^{m\times m}$ such that $\|M\|_\infty \leq 1$ and $M(j\omega_0) = M_0$. Let 
    \begin{equation*}
        P(s) = \frac{\gamma}{2} \left( M(s) + I \right)  , 
    \end{equation*} 
    then one can verify that $P \in \mathfrak{P}[\beta,\gamma]$ and $P(j\omega_0) = \Delta$.  
    
    If $\beta \in [0,\frac{\pi}{2})$, it follows from Proposition \ref{prop:theta_segmental_phase_LMI_graphical_interpretation} that \eqref{eq:interpolation_condition} implies the existence of $\mu>0$ such that 
    \begin{equation*}
        \cos\beta (\Delta^H\Delta + \mu^2I ) \leq \mu (\Delta+\Delta^H). 
    \end{equation*} 
    Denote  
    $ 
    N_0 = \frac{1}{\mu \tan\beta} \left( \Delta - \frac{\mu}{\cos\beta}  I \right)  ,  
    $ 
    then we have $N_0^HN_0\leq I$. There also exists a transfer matrix $N \in \mathcal{RH}_\infty^{m\times m}$ such that $\|N\|_\infty \leq 1$ and $N(j\omega_0) = N_0$. Let 
    \begin{equation*}
        P(s) = \mu \tan\beta \, N(s) + \frac{\mu}{\cos\beta} I , 
    \end{equation*} 
    then one can verify that $P \in \mathfrak{P}[\beta,\gamma]$ and $P(j\omega_0) = \Delta$. This completes the proof.  
\endproof

\vspace{0.5em} 

\proofof{Proof of Lemma \ref{lem:theta_chord_arc_hull}} 
    We show that $\mathcal{M}_{\theta}^\text{chord}(C)$ satisfies the $\theta$-chord property. The proof for $\mathcal{M}_{\theta}^\text{arc}(C)$ satisfying the $\theta$-arc property follows by a similar argument. By definition, we have 
    \begin{equation*}
        \mathrm{SRG}_\theta(\mathcal{M}_{\theta}^\text{chord}(C)) \subset \mathcal{H}_\theta^{\text{chord}}(\mathrm{SRG}_\theta(C)). 
    \end{equation*} 
    It remains to show the reverse inclusion. Consider a subset of $\mathcal{M}_{\theta}^\text{chord}(C)$ defined by 
    \begin{equation*}
        \mathcal{M} = \{ zI \in \mathbb{C}^{n\times n}: z \in \mathcal{H}_\theta^{\text{chord}}(\mathrm{SRG}_\theta(C)) \} . 
    \end{equation*}
    Then there holds 
    \begin{equation*}
        \mathrm{SRG}_\theta(\mathcal{M}_{\theta}^\text{chord}(C)) \supset \mathrm{SRG}_\theta(\mathcal{M}) = \mathcal{H}_\theta^{\text{chord}}(\mathrm{SRG}_\theta(C)) . 
    \end{equation*} 
    Hence $\mathcal{M}_{\theta}^\text{chord}(C)$ satisfies the $\theta$-chord property, which completes the proof. 
\endproof

\vspace{0.5em} 

\proofof{Proof of Theorem \ref{thm:sun_flower_graph_stability}}
    For any $i \in \mathcal{V}_{\text{hub}}$, the petal loops containing vertex $i$ form a parallel interconnection of multiple cascade loops due to the sunflower structure. Hence, this interconnection is stable if and only if 
    \begin{equation}                \label{eq:petal_loop_stability_defn}
        \left( I + \sum\nolimits_{q \in \mathcal{Q}_i}  L_q(s) \right)^{-1} \in \mathcal{RH}_\infty^{m\times m}, \quad i \in \mathcal{V}_{\text{hub}}, 
    \end{equation} 
    where $L_q(s)$ is the loop transfer matrix of loop $q$. Then we will show that \eqref{eq:stability_condition_for_petal_loops} implies \eqref{eq:petal_loop_stability_defn}. Due to the submultiplicative property in Theorem \ref{thm:sum_product_properties},  \eqref{eq:stability_condition_for_petal_loops} implies for all $i \!\in\! \mathcal{V}_\text{hub}$, 
    \begin{equation*}
        [-\infty, -1] \cap  \sum\nolimits_{q \in \mathcal{Q}_i }   \mathcal{H}_{\theta_i(\omega)}^{\text{chord}}  \left(   \mathrm{SRG}_{\theta_i(\omega)} \left( \mathcal{L}_q(j\omega) \right)  \right) = \emptyset,  
    \end{equation*}  
    where $\mathcal{L}_q(j\omega) = \prod_{k \in \mathcal{V}_q} \mathcal{P}_k(j\omega) $.  
In view of \eqref{eq:matrix_set_theta_chord_and_arc_hull}, 
we have  $\mathcal{L}_q(j\omega) \subset \mathcal{M}_{\theta_i(\omega)}^{\text{chord}}(\mathcal{L}_q(j\omega))$ and  
    \begin{align*}
        & \mathrm{SRG}_{\theta_i(\omega)} \! \left(  \sum\nolimits_{q \in \mathcal{Q}_i}   \mathcal{M}_{\theta_i(\omega)}^{\text{chord}}(\mathcal{L}_q(j\omega)) \! \right)  \\ &   \qquad \qquad \quad   \subset  \sum\nolimits_{q \in \mathcal{Q}_i }  \mathrm{SRG}_{\theta_i(\omega)} \left( \mathcal{M}_{\theta_i(\omega)}^{\text{chord}}(\mathcal{L}_q(j\omega))   \right) \\ 
        & \qquad  \qquad \quad  = \sum\nolimits_{q \in \mathcal{Q}_i}  \mathcal{H}_{\theta_i(\omega)}^{\text{chord}}  \left(   \mathrm{SRG}_{\theta_i(\omega)} \left( \mathcal{L}_q (j\omega) \right)  \right) . 
    \end{align*} 
    Then we have 
    $ [-\infty, -1]  \cap   \mathrm{SRG}_{\theta_i(\omega)}\left( \sum_{q \in \mathcal{Q}_i} \mathcal{L}_q(j\omega) \right) =\emptyset $.  
    yielding that  $\sum_{q \in \mathcal{Q}_i} L_q(j\omega)$ has no eigenvalues in $[-\infty,-1]$. 
    Hence it follows from the generalized Nyquist criterion that \eqref{eq:petal_loop_stability_defn} holds. 
    
    Then we reduce the collective effect of petal loops into the hubs $i \in \mathcal{V}_{\text{hub}}$ on the central loop. Consequently, the sunflower graph is transformed to an equivalent single loop system, with components defined by 
    \begin{equation*}
        \widehat{P}_i(s) \! = \! \left\{ 
                \begin{array}{ll} 
                     P_i(s) \left(  I \!  + \!  \sum_{q\in\mathcal{Q}_i} L_q(s) \right)^{-1}  & i \! \in \!  \mathcal{V}_{\text{hub}} , \\ 
                     P_i(s) & i \! \in \! \mathcal{V}_1 \!\setminus\! \mathcal{V}_{\text{hub}}  . 
                \end{array}
        \right.
    \end{equation*} 
    Since $P_i(s) \in \mathcal{RH}_\infty^{m\times m}$ and \eqref{eq:petal_loop_stability_defn} holds, it follows that $\widehat{P}_i(s) \in \mathcal{RH}_\infty^{m\times m}$. To show the sunflower graph is stable, it suffices to show 
    \begin{equation}                    \label{eq:central_loop_stability_defn}
        \left(I+ \prod\nolimits_{i \in \mathcal{V}_1} \widehat{P}_i(s) \right)^{-1} \in \mathcal{RH}_\infty^{m\times m}, 
    \end{equation} 
    where the product is taken in the order consistent with the connection along the central loop. We will show that \eqref{eq:stability_condition_for_central_loop} implies \eqref{eq:central_loop_stability_defn}. Note that for any $i \in \mathcal{V}_{\text{hub}} $, 
    \begin{align*}
        \mathrm{SRG}_{\hat{\eta}_i(\omega)} ( \widehat{P}_i(j\omega) ) & \! \subset \!  \frac{\mathrm{SRG}_{\eta_i(\omega)}(\mathcal{P}_i(j\omega))}{\mathrm{SRG}_{\theta_i(\omega)} \!  \left( I + \sum_{q \in \mathcal{Q}_i}   \mathcal{L}_q(s) \! \right)  }   \\ 
        & \! \subset \! \frac{\mathrm{SRG}_{\eta_i(\omega)}(\mathcal{P}_i(j\omega))}{  
        [1,e^{2j\theta_i(\omega)}] 
        \! + \!  \mathcal{W}_i(j\omega)   
        } , 
    \end{align*} 
    where $\hat{\eta}_i(\omega) = \eta_i(\omega)-\theta_i(\omega)$.  
    Hence \eqref{eq:stability_condition_for_central_loop} indicates that 
    \begin{align*}
        & [-\infty, -1]  \cap  \Bigg(  \prod_{i \in \mathcal{V}_1\!\setminus\! \mathcal{V}_{\text{hub}}} \!\!\!\!\!\!   \mathrm{SRG}_{\eta_i(\omega)}   \left( P_i(j\omega)  \right) \\ 
        &  \qquad \quad   \times  \prod_{i \in \mathcal{V}_\text{hub}} \!   \mathcal{H}_{\hat{\eta}_i(\omega)}^{\text{arc}}    \left(  \mathrm{SRG}_{\hat{\eta}_i(\omega)}   \left(   \widehat{P}_i(j\omega)  \right)  \right) \Bigg) = \emptyset  , 
    \end{align*}  
    which, in view of the submultiplicative property 
    yields that 
    $\prod_{i \in \mathcal{V}_1} \widehat{P}(j\omega)$ has no eigenvalues in $[-\infty,-1]$.  
Hence the sunflower graph is stable, which completes the proof. 
\endproof

\section*{Acknowledgment}

The authors would like to thank Ding Zhang of Australian National University for valuable discussions. 

\section*{References}

\bibliographystyle{IEEEtran}
\bibliography{TransReference}

\end{document}